\setlist{nosep} \setlist[1]{labelindent=\parindent}
\newcommand{\cptp}[1]{\mathsf{#1}} \newcommand{\pd}[1]{\mathcal{#1}}  \newcommand{\sch}[1]{\bm{#1}} \newcommand{\grp}[1]{\mathrm{#1}} 
\newcommand{\id}{\mathds{1}}
\newcommand{\Id}{\cptp{I}}
\newcommand{\I}{\cptp{I}}
\newcommand{\X}{\cptp{X}}
\newcommand{\Z}{\cptp{Z}}
\newcommand{\Y}{\cptp{Y}}
\newcommand{\Ha}{\cptp{H}}
\newcommand{\T}{\cptp{T}}
\newcommand{\CZ}{\mathsf{CZ}}
\newcommand{\CNOT}{\mathsf{CNOT}}
\newcommand{\Abort}{\mathsf{Abort}}
\newcommand{\BQP}{\mathsf{BQP}}
\newcommand{\acc}{\mathsf{Acc}}
\newcommand{\rej}{\mathsf{Rej}}
\newcommand{\zwt}{\mathrm{zwt}}
\newcommand{\one}{\mathds{1}}
\newtheorem*{thm}{Theorem}
\title{Asymmetric Quantum Secure Multi-Party Computation \newline With Weak Clients Against Dishonest Majority}
\author{
Theodoros Kapourniotis\inst{1} \and
Elham Kashefi\inst{2,3} \and
Dominik Leichtle\inst{3} \and
Luka Music\inst{4} \and
Harold Ollivier\inst{5}
}
\institute{
Department of Physics, University of Warwick, Coventry CV4 7AL, United Kingdom \and
School of Informatics, University of Edinburgh, 10 Crichton Street, Edinburgh EH8 9AB, United Kingdom \and
Laboratoire d’Informatique de Paris 6, CNRS, Sorbonne Université, 4 Place Jussieu, 75005 Paris, France \and
Quandela, 7 Rue Léonard de Vinci, 91300 Massy, France \and
DI-ENS, Ecole Normale Supérieure, PSL University, CNRS, INRIA, 45 rue d'Ulm, 75005 Paris, France
}
\date{}
\begin{document}

\pagestyle{plain}

\maketitle

\begin{abstract}
Secure multi-party computation (SMPC) protocols allow several parties that distrust each other to collectively compute a function on their inputs.
In this paper, we introduce a protocol that lifts classical SMPC to quantum SMPC in a composably and statistically secure way, even for a single honest party.
Unlike previous quantum SMPC protocols, our proposal only requires very limited quantum resources from all but one party; it suffices that the weak parties, i.e.~the \emph{clients}, are able to prepare single-qubit states in the $X-Y$ plane.

The novel quantum SMPC protocol is constructed in a naturally modular way, and relies on a new technique for quantum verification that is of independent interest. This verification technique requires the remote preparation of states only in a single plane of the Bloch sphere.
In the course of proving the security of the new verification protocol, we also uncover a fundamental invariance that is inherent to measurement-based quantum computing.
\keywords{Quantum Verification, Delegated Computation, Secure Multi-Party Computation, Distributed Quantum Computing.}
\end{abstract}

\section{Introduction}

\subsection{Motivation} 

Secure Multi-Party Computation (SMPC) protocols allow several parties who
do not trust one another to collectively compute a function
on their inputs. This question was first considered by
Yao~\cite{Y86:how} and has been developed extensively in various
settings (see~\cite{CDN15:multi} and references therein). Several security
guarantees can be provided by such protocols depending on the 
setting: all parties can be on an equal footing,
doing each their share of the computation, or one can handle the brunt of the
computation while all others provide the data. In the first case,
the security goal is to maximise the privacy of the data, while in the
latter it extends to the privacy of the computation which is delegated to the server.

Practical computationally-secure protocols have been
developed and implemented in commercial solutions for protecting
classical multi-party computations. In the quantum case, several
concrete protocols have been proposed (see \S~\ref{sec:rel_work}). In
the circuit model, the state-of-the-art protocol~\cite{DGJM+20:secure} provides an
information theoretic upgrade of classical SMPC that can withstand a
dishonest majority. In the measurement-based model, where weakly
quantum clients delegate their computation to a powerful server, the
best protocol~\cite{KP17:multiparty} does not provide verification of the computation and
settles instead for blindness (i.e.~privacy) of the data when there is
no client-server collusion.

In this work, we show that this difference is not due to the asymmetry
of the clients-server setting. We introduce for this specific
situation a statistically secure lift of a classical SMPC protocol to
a quantum one that provides blindness and verification for
$\mathsf{BQP}$ computations. It remains secure so long as a single
client is honest, thus withstanding possible collusions between
dishonest clients and the server. Building on the techniques
introduced in~\cite{KKLM+22:framework}, its security is proved in the
Abstract Cryptography (AC) framework. The protocol is highly modular
and can tolerate a fixed amount of global noise during the quantum computation
without aborting nor compromising statistical security. Additionally,
it has no space overhead compared to an unprotected delegated computation, thereby allowing clients to use
the server's full power to perform their desired computation,
while security comes only at the price of a polynomial number of
repetitions.

\subsection{Related Work}\label{sec:rel_work}

Quantum SMPC is a long standing research topic in quantum cryptography,
with several directions being explored in the past two
decades.

The first one started with~\cite{CGS02:secure}.  Along with the
introduction of the concept itself, it provided a concrete protocol
for performing such computations in the quantum circuit model.
It guarantees the security of the computation as long as the
fraction of malicious parties does not exceed \(1/6\).
This work has been later extended in~\cite{BCGH+06:secure}, lowering
the minimum number of honest players required for security to a strict
majority.

The second focuses on the interesting edge case of two-party
quantum computations. Several constructive results have been proposed
in the circuit model. In~\cite{DNS10:secure}, a protocol was
introduced and proven secure for quantum honest-but-curious
adversaries. This restriction on the adversaries was removed in
\cite{DNS12:actively} which proved security in the fully malicious
setting and with negligible security bounds. The measurement-based
model of quantum computation has also been considered for constructing
secure two-party quantum computations as it provides a different set
of tools than the circuit model. Verifiable Blind Quantum Computation (VBQC)
first was introduced in~\cite{FK17:unconditionally} in this model,
followed by optimised protocols~\cite{KW17:optimised,KDK15:optimising}.
In~\cite{KW17:garbled} a protocol was proposed in this setting and
proven secure against honest-but-curious adversaries.
In~\cite{KMW17:quantum} this result was extended to fully malicious
adversaries with inverse-polynomial security using the Quantum
Cut-and-Choose technique. More recently, a round-optimal protocol was given
in~\cite{BCKM21:the} based on Oblivious Transfer and LWE, showing that 
two-party quantum computation tasks can be performed in as little as three 
rounds in the CRS model, and two if quantum pre-processing is allowed.

A third set of results focuses on the composability of such protocols,
as earlier results didn't satisfy this property.  Bit commitment was
shown to be complete in the Quantum Universal Composability framework
of~\cite{U10:universally}, meaning that it is sufficient for
constructing quantum or classical SMPC
if parties have access to quantum channels and operations.
This result was later extended in~\cite{FKSZ+13:feasibility,DFLS16:adaptive},
which gives a full analysis of feasibility and completeness of
cryptographic primitives in a composable setting.

More recently, building on these previous works, new concrete
protocols have been proposed to decrease the restrictions
on adversaries while also providing composable security. In the
circuit model, a composably-secure protocol has been introduced
in~\cite{DGJM+20:secure}. It is an extension of~\cite{DNS12:actively}
that is able to cope with a dishonest majority, but which relies on a
complete graph for quantum communication and on a large number of
quantum communication rounds together with powerful quantum
participants. In the MBQC model,~\cite{KP17:multiparty} describes a
protocol that is composable, can tolerate a dishonest majority and
allows the clients to delegate the quantum computation to a powerful
server. Its security is an information-theoretic upgrade of the
classical SMPC primitive used for constructing the protocol. It is
however limited by the absence of verifiability of outputs and the
impossibility to tolerate client-server collusions. Other protocols
have been proposed in alternative models or with different trust
assumptions such as~\cite{HHTF18:composable,LRW20:secure}.
Finally, recent protocols for secure delegated quantum computations
can be run even by purely classical clients. These have been lifted 
to a multi-client setting in \cite{B21:secure} while at the
same time optimising the number of classical rounds of communication. 
This is however at the cost of a larger computation space on the server's device,
which needs to be able to perform QFHE computations of functions large enough
to be computationally-secure.

A subset of the authors proposed an earlier protocol for
QSMPC~\cite{KKMO21:delegating} which comprised a blind pre-computation
step meant to produce a resource state that could then be used to
perform VBQC. This pre-computation turned out to be vulnerable to an
attack that can be applied blindly by the server while having an
effect only on some specific types of qubits thereby compromising
security of the whole protocol. While the present work is a complete
redesign of the protocol that shows improved performance, we include
in~\S~\ref{sec:post-mortem} an in-depth analysis of the shortcommings
of the previous construction. This might be a useful tool to revisit
earlier work where a similar blind pre-computation step is used.

\subsection{Overview of the Protocol and Results}\label{sec:outline}

In this paper, we consider the setting where several weakly quantum
clients want to securely delegate their quantum computation to a
powerful server. The proposed construction turns a single-client MBQC-based
protocol into a multi-party one. More precisely, we use single-client
Secure Delegated Quantum Computing (SDQC) protocols obtained through
the techniques presented in~\cite{KKLM+22:framework}. Such protocols
interleave several computation rounds and test rounds, the latter of
which correspond to stabiliser measurements of the MBQC resource
graph-state used to perform the computations. In such a protocol,
the client must perform two different tasks. First, it has to
prepare encrypted single-qubit states and send them to
the server. This prevents the server from distinguishing computation and
test rounds and also hides the client's data. Then, the client uses the
classical encryption key as well as the measurement outcomes reported by
the server to classically drive the computations and tests performed by the server
on these encrypted qubits. Hence, turning this protocol into a
multi-party one amounts to finding (i) an appropriate single-client SDQC
protocol that will (ii) be composed with a secure collaborative remote
state preparation for the single qubit encrypted states and that will
(iii) be driven collaboratively to perform and verify the desired
computation.

In \S~\ref{sec:single_plane}, we describe a single-client SDQC
Protocol using only $\ket{+_\theta} = (\ket 0 + e^{i\theta}\ket
1)/\sqrt 2$ states, based on the generic single-client SDQC Protocol
of~\cite{KKLM+22:framework}.
This was an open question in the field as all previous SDQC protocols in the MBQC framework with a formal security analysis
use computational basis states (called dummies) to
isolate single qubits in the computation graph. These remain unchanged 
if the server is honest and can be used as traps to detect 
deviations. To overcome this restriction, we must ideally find
a generating set of stabilisers of the graph state for the client's
computation that can be
written with $\Id$, $\X$ and $\Y$ Paulis only.

However, while it is possible to construct
$N-1$ independent stabilisers of this form -- where $N$ denotes the number of
vertices of the graph -- it seems that the stabiliser 
which consists of $\Z$ operators on odd-degree vertices of the graph cannot be generated.
This therefore corresponds to a server's deviation which cannot be caught by our tests
on graphs containing odd degree nodes. If this attack would corrupt the client's computation,
the whole protocol would be insecure.
Fortunately, this is not the case for classical input/output
computations. Indeed, we prove that this deviation corresponds to a
server which has chosen a different orientation of the $Z$ axis compared to the
client. Because inputs are prepared in the $X-Y$ plane and
outputs are projected onto it, we show that this has no effect on
the outcome of the computation.
As a consequence, it is not necessary to detect this specific
deviation by the server to verify the computation. This proves that
the generic single-client SDQC Protocol of~\cite{KKLM+22:framework} can be used to produce secure
dummyless protocols.\footnote{Note that here has been a previous
protocol for dummyless verification~\cite{FKD18:reducing}, whose
security analysis didn't take into account the above deviation. Our
proof of invariance of MBQC to this specific error shows that this
deviation does not constitute a security threat to the protocol in~\cite{FKD18:reducing}.}
\begin{thm}[Informal]
For any graph $G$, there exists a single-client statistically secure
SDQC protocol in the Abstract Cryptography framework that
requires the client to only prepare states in the $X-Y$ plane.
\end{thm}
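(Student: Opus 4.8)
The plan is to instantiate the generic, modular single-client SDQC construction of~\cite{KKLM+22:framework} with two modified ingredients and then re-establish the single hypothesis on which its composable statistical security rests. That construction takes an MBQC computation on a graph $G$, randomly assigns the many rounds of an interleaved execution to be either computation rounds (running the pattern on one-time-padded prepared qubits) or test rounds (running a stabiliser check on $\ket{G}$), outputs a majority vote over the computation rounds, and aborts if too many test rounds fail; its security reduces to a detection guarantee -- every server deviation that could corrupt the output is caught by some test round with at least constant probability. The only place the original construction uses computational-basis ``dummy'' qubits is in the preparation step and in the specific test rounds (trap colourings with dummies isolating the traps). I would therefore replace the preparation step by having the client prepare every resource qubit as $\ket{+_\theta}$ with $\theta$ a uniformly random multiple of $\pi/4$, replace the test rounds by ones the server can carry out with $X-Y$-plane measurements only, and prove the detection guarantee for these. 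Blindness of the resulting protocol and the whole interleaving/vote/abort analysis are then inherited verbatim from~\cite{KKLM+22:framework}, since all prepared qubits remain one-time padded and computation and test rounds stay perfectly indistinguishable.

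The first technical step is to pin down the usable test rounds. Writing the canonical stabiliser generators of $\ket{G}$ as $K_v = \X_v \prod_{w \sim v} \Z_w$ and $K_S = \prod_{v \in S} K_v$ for $S \subseteq V(G)$, one has $K_S \propto \X^{\mathbf{1}_S} \Z^{A_G \mathbf{1}_S}$ with $A_G$ the adjacency matrix over $\mathbb{F}_2$ (and $\X^a = \bigotimes_v \X_v^{a_v}$); hence $K_S$ lies in $\{\Id, \X, \Y\}^{\otimes N}$ -- so the server can realise its measurement entirely in the $X-Y$ plane -- exactly when every vertex outside $S$ has an even number of neighbours inside $S$, in which case we call $S$ admissible. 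A handshake-type double count shows that within any admissible $S$ the number of odd-degree vertices is even, and more precisely that the span $\mathcal{V}_0 = \operatorname{span}_{\mathbb{F}_2}\{\mathbf{1}_S : S \text{ admissible}\}$ has orthogonal complement generated by the odd-degree indicators $\mathbf{1}_{\mathrm{odd}(C)}$ of the connected components $C$ of $G$; in particular $\mathcal{V}_0$ has codimension $1$ -- i.e.\ there are $N-1$ independent admissible stabilisers -- whenever $G$ is connected with an odd-degree vertex, and codimension $0$ otherwise. Invoking from~\cite{KKLM+22:framework} the reduction of an arbitrary server deviation, via the random one-time pad, to a Pauli $E$ on the resource state, a short symplectic computation then shows that $E$ commutes with every admissible $K_S$ -- hence escapes all test rounds probing these -- if and only if, modulo the stabiliser group of $\ket{G}$, $E$ is a product of the operators $\Z^{\mathbf{1}_{\mathrm{odd}(C)}}$. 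So if the test rounds probe a generating set of the admissible stabilisers, every potentially harmful deviation is caught with constant probability, except possibly those equivalent on $\ket{G}$ to such a product of $\Z^{\mathbf{1}_{\mathrm{odd}(C)}}$.

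The remaining step -- and, I expect, the hard part -- is the invariance lemma: running the MBQC pattern of the client's $\BQP$ computation with a deviation $\Z^{\mathbf{1}_{\mathrm{odd}(C)}}$ on $\ket{G}$ yields exactly the same classical output distribution as the honest run. I would prove this by propagating this Pauli through the entangling $\CZ$ operations and the pattern's flow corrections and showing that its net effect coincides with the server carrying out the whole computation in a globally reoriented $Z$ frame -- equivalently, conjugation by complex conjugation in the computational basis: every preparation angle $\theta$ becomes $-\theta$, every measurement angle $-\delta$, and $\Y$ picks up a sign while $\X$, $\Z$ and $\CZ$ are untouched. Since the client's inputs are encoded with angles it chooses in the $X-Y$ plane and its outputs are read off as classical bits from projections onto that plane, complex conjugation fixes the output distribution: for any target string $y$ and unitary $U$, $|\bra{y}\bar{U}\ket{x}|^2 = |\overline{\bra{y}U\ket{x}}|^2 = |\bra{y}U\ket{x}|^2$. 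The delicate point is the bookkeeping behind ``propagating $\Z^{\mathbf{1}_{\mathrm{odd}}}$ through the pattern equals global conjugation'': one must track the byproduct operators produced by the flow and check that the signs introduced on the $\Y$-type corrections cancel against the corresponding flips in the reported outcomes -- this is exactly the measurement-based invariance advertised in the introduction. With the invariance lemma in hand, the detection guarantee holds for the new test rounds (every deviation is caught, or else harmless), and feeding it into the generic construction of~\cite{KKLM+22:framework} yields the claimed composably and statistically secure single-client SDQC protocol in which the client only ever prepares $X-Y$-plane states.
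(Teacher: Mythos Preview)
Your overall strategy matches the paper's exactly: instantiate the generic construction of~\cite{KKLM+22:framework} with dummyless test rounds built from the $\{\I,\X,\Y\}$-only stabilisers of $\ket{G}$, show these miss exactly one nontrivial deviation class (equivalent to $\Z$ on all odd-degree vertices), and then prove an invariance lemma showing that deviation is harmless for classical I/O. Your linear-algebraic characterisation of the admissible $K_S$ (via $A_G\mathbf 1_S$ and the orthogonal complement generated by the odd-degree indicators) is a clean alternative to the paper's constructive argument in Lemma~\ref{lem:stab_tests}; both reach codimension~$1$ and identify the same missing stabiliser.

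Where you diverge is the invariance lemma, and here there is a small conceptual slip worth flagging. You propose to propagate $\Z^{\mathbf 1_{\mathrm{odd}}}$ through the pattern and identify its effect with global complex conjugation (the anti-unitary that sends $\Y\to-\Y$ and fixes $\X,\Z$). But complex conjugation fixes $\ket G$ (real amplitudes) and hence cannot coincide with conjugation by $U=\prod_{\deg v\text{ odd}}\Z_v$, which genuinely changes the state. The paper instead introduces a different --- and generally \emph{non-physical} --- linear map $F_V$ on density matrices that flips the sign of every $\Z$ factor in the Pauli decomposition while fixing $\I,\X,\Y$ (Lemma~\ref{lemma:fa_harmless}). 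Because every $X$--$Y$-plane projector $\ketbra{\pm_\phi}$ has no $\Z$ component, $F_V$ commutes with all such projectors, so it leaves every $X$--$Y$-plane measurement statistic unchanged, for \emph{any} state and with no need to look at the flow at all. A separate short calculation (Lemma~\ref{lemma:fv_as_unitary}) then shows that on the graph state specifically $F_V(\ketbra G)=U\ketbra G U^\dagger$, using that $\zwt_V$ is additive on $\grp S$. This two-step argument is what the paper's ``reoriented $Z$ axis'' remark refers to: it is the reflection $Z\to -Z$ in the Pauli basis, not complex conjugation $Y\to -Y$.

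The upshot is that your flow-propagation route, even if it can be made to work, is doing unnecessary bookkeeping, and the specific symmetry you name is not the right one. The paper's $F_V$ trick sidesteps the adaptive structure entirely and is the key simplification you are missing.
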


We then focus on turning this new single-client protocol into a
multi-party one. In \S~\ref{subsec:coll-st-prep}, we
introduce a Collaborative Remote State Preparation (CRSP) protocol. We
show that this gadget (Protocol~\ref{proto:col_state_prep}) securely
implements Remote State Preparation (Resource~\ref{res:rsp}), which allows
a classical party request any $\ket{+_\theta}$ state to be
prepared on the server's device with the help of clients preparing
single qubit states in the $X-Y$ plane.
\begin{thm}[Informal]
  The CRSP gadget is a statistically secure implementation of the Remote State Preparation Resource in the Abstract Cryptography framework.
\end{thm}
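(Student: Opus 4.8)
The plan is to establish the two standard conditions of the Abstract Cryptography framework: \emph{correctness} (the protocol with all interfaces honest is statistically close to the ideal Remote State Preparation Resource with honest interfaces) and \emph{security} (for every set of dishonest parties there is a simulator attached to the adversarial interface of the Resource making the real and ideal worlds statistically indistinguishable). Since the protocol's classical coordination is delegated to an ideal classical SMPC resource, composability of the framework lets us treat that subroutine as ideal throughout, so only the quantum communication and the server's quantum operations need to be analysed. The interfaces are the classical requesting party, the clients, and the server, and the security argument splits into the two qualitatively different cases according to whether or not the server is among the dishonest parties; in spirit the single-client sub-cases mirror known remote-state-preparation gadget proofs, the novelty being the collaborative structure.

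For correctness I would trace the honest execution: each honest client $i$ sends $\ket{+_{\theta_i}}$ with $\theta_i$ drawn uniformly from the admissible angle set, the server performs the prescribed entangle-and-measure procedure on the received qubits and reports its outcomes, and the classical SMPC combines the $\theta_i$, the requested angle, and the reported outcomes into the corrections. The algebraic heart is that one-qubit measurement-based teleportation through the prepared register maps the tuple of input angles to the single output angle $\sum_i \theta_i$ up to a Pauli frame fixed by the measurement outcomes; choosing the measurement angles and corrections accordingly leaves exactly $\ket{+_\theta}$ on the server's output register, so correctness holds with zero error (or negligible error if the angle set only approximates arbitrary real angles).

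When the server is honest, the only thing a coalition of dishonest clients can do is attempt to learn or to bias $\theta$. The simulator samples the honest clients' angles and the server's would-be outcomes consistently with the leakage permitted by the Resource and runs the honest-client and honest-server code against the adversary. The key point is a blindness / one-time-pad argument: because at least one client is honest and its angle $\theta_h$ is uniform and independent of everything the adversary sees, $\sum_i\theta_i$ — and hence $\theta$ — is uniform from the adversary's viewpoint, so the simulated transcript is perfectly distributed, while the dishonest clients' only influence on $\theta$ is an additive shift that the simulator forwards to the Resource. Statistical indistinguishability then follows up to the classical SMPC error.

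The dishonest-server case is the main obstacle. Now the adversary holds the register meant to contain $\ket{+_\theta}$, so the simulator, given only the Resource's output, must reconstruct the joint state of the adversary's quantum register and classical view. The plan is: (1) write the server's behaviour as an arbitrary CPTP map on its received qubits, ancilla, and classical tape; (2) use the honest client's uniform $\theta_h$ to show that nothing the server sees before returning its output carries any information about $\theta$ — in particular the adaptively chosen measurement instructions it receives from the SMPC leak only a correctable Pauli frame — so its deviation map may be taken to be independent of $\theta$; (3) commute the honest-client preparation and the ideal teleportation to the front, so that the whole real experiment equals ``the Resource prepares $\ket{+_\theta}$, then a fixed, $\theta$-independent channel produces the adversary's quantum register and the simulated classical transcript'', and let the simulator run precisely that channel, absorbing the known Pauli byproduct into the corrections and, if the server deviates, invoking the deviation allowed on the adversarial interface of the Remote State Preparation Resource. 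Verifying step (2) rigorously — that everything the server is told during the protocol, including the SMPC outputs which depend on the honest client's secret angle, reveals nothing about $\theta$ beyond a Pauli that the classical side can correct — is where the real work lies; granting that, the tolerance-to-bounded-noise claim follows since noise only affects the output register, which the Resource's adversarial interface already permits the simulator to perturb.
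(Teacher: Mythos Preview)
Your high-level decomposition is fine, but the setting is slightly off and the main case is missing its key idea. In this theorem the Orchestrator is a \emph{trusted} classical party with input $\theta$; the classical SMPC only replaces it later, in the full QSMPC protocol, so appeals to SMPC composability are out of place here. The security claim is specifically against coalitions that \emph{include} the Server together with up to $n-1$ Clients, so there is no separate honest-server case to argue, and your remark there that dishonest clients ``forward an additive shift to the Resource'' does not type-check against Resource~\ref{res:rsp}, which has no adversarial input. Likewise the RSP Resource has no adversarial deviation interface for the Receiver, so the step where you ``invoke the deviation allowed on the adversarial interface'' is not available.

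The substantive gap is the dishonest-server simulator. Your plan to ``commute the ideal teleportation to the front and exhibit a $\theta$-independent channel'' has the right flavour, but you never say what the simulator actually does with the state $\ket{+_\theta}$ it receives from the Resource, and your step~(2) as written is false: the Server \emph{does} receive something depending on $\theta$, namely the correction angle $(-1)^b\theta-\theta'$. The paper's resolution is concrete and essential. The simulator Pauli-encrypts the ideal state as $\Z(\theta_h)\X^{b_h}\ket{+_\theta}$ with fresh uniform $\theta_h,b_h$ and sends \emph{that} as the honest client's qubit; it then issues the correction $(t_h\oplus b_h,\,-\theta')$, computable from adversary-supplied data alone. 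Indistinguishability is then proved by writing down, side by side, the joint distribution of (honest-client state, $t_h$, correction bit, correction angle) in the real and ideal worlds and applying reversible, distinguisher-computable transformations until the two columns coincide. The point you flag but do not resolve---why the $\theta$-dependent real correction leaks nothing---is handled precisely by the extra uniformly random bit $b$ in the protocol's correction $\X^b\Z((-1)^b\theta-\theta')$: after the substitution $b\mapsto b\oplus t_h$, the real and ideal tables become identical. Without this mechanism your step~(2) cannot be made rigorous, and without the encrypted-state trick you have no simulator at all.
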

The second set of tasks in the single-client protocol, i.e.~choosing the measurement angles of the various
computation and test rounds according to the states prepared using CRSP,
only involve classical computations.
These can be performed using a composably secure classical SMPC.\footnote{The Abstract Cryptography framework used in this work is equivalent to the Quantum Universal Composability (Q-UC) Model of~\cite{U10:universally} if a single Adversary controls all corrupted parties -- which is the case here. Therefore any Classical SMPC protocol which is secure in the Q-UC model can be used to instantiate this functionality. }

In \S~\ref{sec:qsmpc}, we compose the CRSP gadget, classical SMPC, and the
dummyless SDQC protocol into a complete quantum SMPC protocol
(Protocol~\ref{proto:mpqc}). Its outline is:
\begin{enumerate}
\item The clients use the CRSP gadget to prepare $\ket{+_\theta}$
  states on the server's side.
\item They use the classical SMPC together to drive and verify the
  single-client SDQC protocol.
\item Upon acceptance, the results and decryption keys are sent by the
  classical SMPC to each client.
\end{enumerate}
The security proof relies on the composable security of all three
ingredients. Because the CRSP gadget and the dummyless protocol are
statistically secure, this is a direct upgrade of classical to quantum SMPC.
\begin{thm}[Informal]
Composable classical SMPC can be lifted to perform robust quantum SMPC for
$\mathsf{BQP}$ computations in a statistically secure way,
such that all parties but one are restricted to singe-qubit preparations.
\end{thm}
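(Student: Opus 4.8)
The plan is to realise the quantum SMPC resource by a single invocation of the composition theorem of the Abstract Cryptography framework applied to the three sub-protocols assembled in Protocol~\ref{proto:mpqc}, so that the bulk of the work is bookkeeping about interfaces rather than new cryptographic analysis. First I would pin down the target ideal resource $\mathcal{S}_{\mathrm{QSMPC}}$: it collects one input per client, evaluates the agreed $\BQP$ computation, returns the decrypted output to every client together with an $\acc$/$\Abort$ flag, and --- with the adversary controlling the server together with any strict subset of the clients, as long as one client stays honest --- leaks to that adversary only what the functionality is allowed to leak, while guaranteeing that honest clients accept an incorrect output only with negligible probability and, in the robust variant, that honest-but-noisy executions below a fixed global noise rate still terminate with $\acc$.

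Next I would recall precisely what each ingredient constructs, since these are exactly the three earlier (informal) theorems plus a standard classical primitive. First, by the dummyless-SDQC theorem, the single-client protocol of \S~\ref{sec:single_plane} statistically securely constructs the ideal single-client secure delegated computation resource, assuming an honest classical client that drives the protocol, the ability to place $\ket{+_\theta}$ states on the server, and an arbitrarily malicious server. Second, by the CRSP theorem, the gadget of Protocol~\ref{proto:col_state_prep} statistically securely constructs the Remote State Preparation resource (Resource~\ref{res:rsp}) from the clients' single-qubit preparations in the \(X\)-\(Y\) plane and a classical control interface, tolerating the server colluding with dishonest clients. Third, a composable classical SMPC securely constructs the ideal classical SMPC resource against a dishonest majority of clients; this ideal resource is what will play the role of the honest classical driver of the single-client protocol, since everything the driver does --- choosing run types, tracking one-time pads and measurement angles, performing the trap checks, and the final decryption --- is a classical function of the CRSP bookkeeping data and of the outcomes reported by the server.

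The argument is then a telescoping of these constructions. Starting from the real Protocol~\ref{proto:mpqc}: replace the classical SMPC sub-protocol by its ideal resource (composable security of classical SMPC), obtaining a hybrid in which one trusted classical party both drives the protocol and operates the control interface of the CRSP sub-protocol; replace the CRSP sub-protocol by the Remote State Preparation resource, so that the trusted classical party now interacts with an idealised gadget placing $\ket{+_\theta}$ states on the server on request, which is exactly the client-side interface assumed by the dummyless single-client protocol; finally replace the SDQC sub-protocol by the ideal single-client secure delegated computation resource. What remains, after relabelling interfaces --- the honest client's input is routed into the classical-SMPC input port, and the joint adversary on the dishonest clients and the server is the composition of the three simulators --- is $\mathcal{S}_{\mathrm{QSMPC}}$. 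The two quantum replacement steps each cost only a negligible statistical distance, so the distinguishing advantage added on top of that of the classical SMPC is negligible; this is the sense in which classical SMPC is lifted to quantum SMPC in a statistically secure way. Robustness is inherited: noise on the CRSP-prepared qubits and on the server's operations is absorbed into the noise budget of the robust single-client protocol, so the composed protocol still accepts honest executions up to the same fixed global noise rate.

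The step I expect to be the real obstacle is not any single computation but making all the interfaces line up across the three replacements. One must check that the Remote State Preparation resource hands the driver its output in exactly the form the dummyless single-client protocol expects --- same one-time-pad and angle conventions, same classical side information --- so that the two can be wired together verbatim; that the ideal classical SMPC resource genuinely emulates the honest classical driver even when a strict majority of clients are corrupt, in particular that the driver's secret state (the $\theta$'s, the run-type pattern, the pad keys) remains perfectly hidden from the adversary in the hybrid; and that the adversarial interface of CRSP (server plus dishonest clients) composes with the adversarial server interface of the SDQC resource without opening a side channel, e.g.\ that a malicious server cannot correlate its deviations in the state-preparation phase with its deviations in the computation phase in a way the individual simulators did not anticipate. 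Once these interface-matching lemmas are established, the theorem follows directly from the composition theorem of Abstract Cryptography.
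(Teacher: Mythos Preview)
Your proposal is correct and follows essentially the same composition-based approach as the paper: telescope through the three ingredients (CRSP $\to$ RSP, then RSP-driven protocol $\to$ single-client SDQC resource) inside the AC framework, with the classical SMPC playing the role of the honest Orchestrator, and read off that the result is the QSMPC resource with only the $\epsilon_V$ of the SDQC step added on top of the classical SMPC. The only cosmetic differences are that the paper already starts with the Classical SMPC as an ideal resource (so your first replacement step is vacuous there) and inserts one extra trivial hop---replacing the RSP resource by a direct quantum channel from Orchestrator to Server---to make the match with Protocol~\ref{proto:dev_detect} literal rather than up to interface relabelling; your anticipated ``interface-matching'' obstacles therefore dissolve by construction.
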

We note that this protocol requires no additional resources in terms of hardware or
quantum communication from the client's side compared to the single-client protocol.
The server only needs to be able to perform the CRSP gadget in addition to the 
operations required by the single-client protocol.

\subsection{Discussion}

In the course of constructing our protocol, we have built two new
ingredients that we believe are of independent interest.

The first one is the Collaborative Remote State Preparation
gadget. Its main feature is to provide some privacy amplification for
the classical-quantum correlations that clients share with the
server. Interestingly, we give evidence that it is hard to
construct a generic gadget that would have similar features for
correlations outside of a single plane of the Bloch sphere, while
retaining its usefulness for cryptographic purposes.
We leave it as an open question to prove a full no-go theorem
in the Abstract Cryptography framework to further
explore what seems to be a deep difference between classical and
quantum input-output computations. Note also that this work supersedes
a previous effort to construct a quantum SMPC protocol in the
clients-server setting with quantum input and outputs. The proposed
construction was similar in spirit with a collaborative remote state
preparation gadget that allowed to prepare encrypted $X-Y$ plane
states but also dummies. However, we give an attack on multiple approaches which
were explored to perform this task, further strengthening the belief that
such cryptographic protocols are hard if not impossible to construct.

The second new ingredient of our proof is the first dummyless SDQC protocol. 
Outside of the specific
purpose of quantum SMPC, it exemplifies the usefulness of the general
tests that were introduced in~\cite{KKLM+22:framework}. By reducing
the requirements on the client side, it also possibly decreases a
source of errors in physical implementations as it is not uncommon
that rotations around one specific axis of the Bloch sphere are
notably easier to perform than others. We also strongly believe that
similar approaches, where traps are tailored to specific settings,
will find applications in the future. Additionally, we show that while
dummyless tests were not enough to detect all deviations, it is
possible to nonetheless verify computations thanks to an as of now unknown
invariance in MBQC. This raises the question of whether it is possible to do
this on purpose, and engineer an invariance in order to lighten the
constraints on the error-detection scheme that the traps implement.

Finally, note that because all SDQC protocols constructed from the
generic protocol of~\cite{KKLM+22:framework} are robust to a fixed amount of
global noise, so is our new multi-party protocol. While not being enough
to scale to large quantum computations, it opens the possibility to
implement experimental proof-of-concepts without resorting to error
correction on near term devices.

\subsection{Organisation of the Paper}

In \S~\ref{sec:single_plane} we construct a single-client SDQC
Protocol using only $\ket{+_\theta} = (\ket 0 + e^{i\theta}\ket
1)/\sqrt 2$
states. \S\S~\ref{subsec:framework}-\ref{sec:dummyless_verification}
construct a family of such schemes and prove their security, while
\S~\ref{subsec:concrete-tests} provides optimised protocols for
various classes of MBQC resource graph-states. In
\S~\ref{subsec:coll-st-prep}, we introduce a Collaborative Remote
State Preparation (CRSP) protocol and prove its security in the AC
framework. In \S~\ref{sec:qsmpc}, we compose the CRSP Protocol, the
dummyless SDQC Protocol and a classical SMPC into a complete quantum
SMPC Protocol (Protocol~\ref{proto:mpqc}) for $\mathsf{BQP}$
computations. In \S~\ref{sec:discussion}, we provide an in-depth
comparison with other protocols, give arguments justifying the
proposed construction -- especially the need for a dummyless SDQC
Protocol -- and discuss some open questions.

Some preliminary notation and material can be found in the
corresponding sections of the Auxiliary Supporting Material part:
Abstract Cryptography in \S~\ref{subsec:ac}, MBQC computation in
\S~\ref{subsec:mbqc} and the Universal Blind Quantum Computation
(UBQC) Protocol in \S~\ref{subsec:ubqc}. A detailed analysis of a
previous attempt at constructing quantum SMPC for weakly quantum
clients is provided in~\S~\ref{sec:post-mortem}.

\section{Verification with States in a Single Plane}
\label{sec:single_plane}
\subsection{A Framework for Verification}
\label{subsec:framework}
The goal of the protocol presented in this section is to construct the
Secure Delegated Quantum Computation Resource~\ref{res:sdqc} (SDQC),
introduced by~\cite{DFPR14:composable}. It allows a single Client to
run a quantum computation on a Server so that the Server cannot
corrupt the computation and doesn't learn anything besides a
controlled leakage $l_{\rho}$ about the Client's computation and
input. The value of $l_{\rho}$, as a function of inputs and
computation, is specified by each protocol.
\begin{resource}[ht]
  \caption{Secure Delegated Quantum Computation with Classical Inputs
    and Outputs}
  \label{res:sdqc}
  \begin{algorithmic}[0]
    \STATE \textbf{Inputs:} 
    \begin{itemize}
    \item The Client inputs a bit-string $x$ and the classical
      description of a unitary $\cptp U$.
    \item The Server chooses whether or not to deviate. This interface
      is filtered by two control bits $(e, c)$.
    \end{itemize}
    
    \STATE \textbf{Computation by the Resource:}
    \begin{enumerate}
    \item If $e = 1$, the Resource sends the leakage $l_{\rho}$ to the
      Server's interface and awaits further input from the Server; if it
      receives $c = 1$, the Resource outputs $\Abort$ at the Client's
      output interface.
    \item If $c=0$, it outputs $O = \mathcal{M}_C \circ \cptp U \ket{x}$
      at the Client's output interface, where $\mathcal{M}_C$ is a
      computational basis measurement.
    \end{enumerate}
  \end{algorithmic}
\end{resource}

Several protocols implementing this resource have been constructed in
the past~\cite{GKK19:verification}. Yet, none has the ability to
provide negligible statistical security while having a client sending
\emph{states in a single plane}. To achieve this, we use the
framework from~\cite{KKLM+22:framework} which neatly separates the
various ingredients required to implement SDQC. We start by briefly
summarising the ingredients which are relevant to the present paper.

\paragraph{Reduction to Pauli Deviations.}
Using the UBQC Protocol~\ref{prot:UBQC} (see appendix) to delegate computations from
Client to Server hides the operations which the Client wishes to
delegate.
The encoding scheme of UBQC naturally imposes a Pauli twirl on any deviation and hence any
attacks by the Server can always be
decomposed as a convex combination of Pauli operators acting on the
qubits of the graph just before performing the $\X$-basis
measurement. Because $\X$ Pauli operators applied in this fashion have
no effect on the computation, as they are absorbed by the $\X$-basis
measurement, we can focus on convex combinations of deviations of the
form $\bigotimes_{v \in V} \Z(v)^{e(v)}$ where the values of $e(v)$
are chosen by the Server and $\Z(v)$ applies the Pauli $\Z$ to qubit
$v$. Such deviation are equivalent to flipping
the measurement outcome for vertices where $e(v) = 1$.

\paragraph{General Strategy for Robust Verification.}
Once all operations delegated to the Server are blind a general
strategy for robust and secure computation follows from the intuition
that (i) correctness is obtained by accepting with overwhelming
probability in the absence of deviation, (ii) security derives from the
ability of the protocol to detect with overwhelming probability all
deviations that potentially affect the computation, and (iii)
robustness follows from accepting additional deviations which have,
with overwhelming probability, no effect on the computation.

\paragraph{Generic Trappified Schemes for Classical I/O.}
With this strategy in mind, a whole class of protocols for verifying
$\mathsf{BQP}$ computations can be easily described. Their flexible design is
able to accommodate objectives that go beyond security, e.g.~for
instance the absence of dummy qubits. These protocols work by
performing separate rounds which are indistinguishable from the Server's point of view, 
some implementing tests, and others computing
$\cptp C$, the Client's target computation. More precisely,
$s$ test rounds and $d$ computation rounds are delegated to the Server
using the UBQC Protocol~\ref{prot:UBQC}, with the requirement that
they share the same graph $G$ and the same order $\preceq_G$ for measuring
the qubits.

Each test round is sampled uniformly at random from a set $\sch P$ of possible
traps called a \emph{trappified scheme}. They each
consist of an input state $\sigma$ which is a tensor product of
single-qubit states, one for each vertex in the graph $G$, a
measurement pattern $T$, and a binary decision function $\tau$. The
test round is accepted when the decision function outputs
$0$ when evaluated on the measurement results returned by the
Server for this trap. It is rejected when the output is $1$. The $d$ computation rounds
correspond to repeating $d$ times the target computation $\cptp C$ on
the target input chosen by the Client using the graph $G$. The outputs
of these computations are then combined through a majority vote. When
all rounds have been executed, the Client accepts if less than a
fixed fraction of test rounds reject. In this case, the output of the
protocol is the result of the majority vote.
The formal protocol is described in Protocol~\ref{proto:dev_detect}.

\begin{protocol}[ht]
  \caption{Trappified Delegated Blind Computation}
  \label{proto:dev_detect}
  \begin{algorithmic}[0]
    
    \STATE \textbf{Public Information:} 
    \begin{itemize}
    \item $G = (V, E, I, O)$, a graph with input and output vertices $I$ and $O$ respectively;
    \item $\sch P$, a trappified scheme on graph $G$;
    \item $\preceq_G$, a partial order on the set $V$ of vertices;
    \item $N, d, w$, parameters representing the number of runs, the
      number of computation runs, and the number of tolerated failed
      tests.
    \end{itemize}
    
    \STATE \textbf{Client's Inputs:} A set of angles $\{\phi_i\}_{i
      \in V}$ and a flow $f$ which induces an ordering compatible with
    $\preceq_G$.
    
    \STATE \textbf{Protocol:}
    \begin{enumerate}
    \item The Client samples uniformly at random a subset $C
      \subset [N]$ of size $d$ representing the runs which will be
      its desired computation, henceforth called computation runs.
    \item For $k \in [N]$, the Client and Server perform the following:
      \begin{enumerate}
      \item If $k \in C$, the Client sets the computation for the run
        to its desired computation $(\{\phi_i\}_{i \in V},
        f)$. Otherwise, the Client samples a test $(T, \sigma, \tau)$
        from the trappified scheme $\sch P$.
      \item The Client and Server blindly execute the run using the
        UBQC Protocol~\ref{prot:UBQC}.
      \item If it is a test, it uses $\tau$ on the measurement
        results to decide whether the test passed or not.
      \end{enumerate}
    \item At the end of all runs, let $x$ be the number of failed
      tests. If $x \geq w$, the Client rejects and outputs $(\bot,
      \rej)$.
    \item Otherwise, the Client accepts the computation. It performs a
      majority vote on the output results of the computation runs and
      sets the result as its output.
    \end{enumerate}  
  \end{algorithmic}
\end{protocol}

\paragraph{Security Conditions for Trappified Schemes with Classical I/O.}
The analysis of the security and robustness properties in the Abstract
Cryptography framework for the resulting protocol depends on two
sets of Pauli operators defined relatively to $\sch P$: the set of
detectable deviations and the set of deviations to which $\sch P$ is
insensitive. These rely on the following definitions, where we use $\pd T$
to denote the probability of the measurement outcomes for a trap $T$
in $\sch P$ and $\cptp E\circ \pd T$ to denote the probability
distribution of measurement outcomes when the deviation $\cptp E$ is
applied to $T$.

\begin{definition}[Pauli Insensitivity]\label{def:insens_tp}
  We say that the trappified scheme $\sch P$ is
  \emph{$\delta$-insensitive to} $\mathcal{E} \subset \mathcal G_V$ if:
  \begin{align}
    \forall \cptp E \in \mathcal{E}, \ \sum_{T\in \sch P}
    \Pr_{\substack{T\sim \pd P \\ t\sim \cptp E\circ \pd T}}[\tau(t) =
      0, T] \geq 1-\delta.
  \end{align}
\end{definition}

\begin{definition}[Pauli Detection]\label{def:detect_tp}
  We say that a trappified scheme $\sch P$ \emph{$\epsilon$-detects}
  $\mathcal{E} \subset \mathcal G_V$ if:
  \begin{align}
    \forall \cptp E \in \mathcal{E}, \ \sum_{T\in \sch P}
    \Pr_{\substack{T \sim \pd P \\ t\sim \cptp E\circ \pd T}}[\tau(t)
      = 1, T] \geq 1-\epsilon.
  \end{align}
\end{definition}

\begin{definition}[Pauli Correctness (Informal)]\label{def:correct_tp}
We say that a computation is correct on deviation $\cptp E$ if the output distribution is the same whether the deviation is applied or not.
\end{definition}

The virtue of defining these properties is that the sets of deviations above can
be characterised efficiently and yield correctness and
security with negligible errors for the overall protocol:
\begin{theorem}[Security of Protocol \ref{proto:dev_detect}, Combining Theorems 8 and 13 from \cite{KKLM+22:framework}]
\label{thm:verif}
  Let $\mathfrak{C}$ be a set of classical $\BQP$ computations on graph $G$.
  Let $\sch P$ be a trappified scheme on graph $G$ that $\epsilon$-detects a set
  of Pauli deviations $\mathcal E_1$ and is $\delta$-insensitive to $\mathcal E_2$ and
  perfectly insensitive to $\Id$.
  Assume that all computations in a set $\mathfrak{C}$ are correct on $\mathcal G_V \setminus
  \mathcal{E}_1$.
  Let $n = s + d$ for $d$ and $s$ proportional to $n$, and $c$ the bounded error of $\BQP$. 
  Let $w$ be the maximum number of test rounds allowed to fail, chosen such that $w < \frac{2c-1}{2c-2}s(1 - \epsilon)$.

  Then Protocol~\ref{proto:dev_detect} $\eta(n)$-constructs the Secure Delegated Quantum Computation
  Resource~\ref{res:sdqc} for computations in set $\mathfrak{C}$ in the Abstract Cryptography framework, where the leak
  is defined as $l_{\rho} = (\mathfrak{C}, G, \sch P, \preceq_G)$, for $\eta(n)$ negligible in $n$.
\end{theorem}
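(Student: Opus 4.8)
The plan is to obtain Theorem~\ref{thm:verif} as a direct instantiation of the general composable analysis of~\cite{KKLM+22:framework}: the hypotheses imposed here on $\sch P$ (it $\epsilon$-detects $\mathcal E_1$, is $\delta$-insensitive to $\mathcal E_2$ and perfectly insensitive to $\Id$, and every computation in $\mathfrak C$ is correct on $\mathcal G_V\setminus\mathcal E_1$) are exactly what Theorems~8 and~13 of~\cite{KKLM+22:framework} consume, so the bulk of the work is to check that Protocol~\ref{proto:dev_detect} together with the parameter regime $d,s\propto n$ and $w<\frac{2c-1}{2c-2}\,s(1-\epsilon)$ (with $c<1/2$ the bounded-error parameter of the $\BQP$ machine, so $1-c>1/2$) meets their premises, and then to read off the negligible $\eta(n)$. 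In AC terms there are two obligations: \emph{correctness}, that with the honest Server the real execution is $\eta(n)$-close to Resource~\ref{res:sdqc}; and \emph{security}, that for every malicious Server there is a simulator acting only on the filtered $(e,c)$ interface of Resource~\ref{res:sdqc} that reproduces the real output up to $\eta(n)$. Both follow the same skeleton: first collapse an arbitrary Server attack to a convex combination of tensor products of Pauli $\Z$ operators on the graph qubits, using the Pauli twirl forced by the UBQC Protocol~\ref{prot:UBQC} (the ``Reduction to Pauli Deviations'' paragraph), and then analyse each branch — a fixed string $(\cptp E_1,\dots,\cptp E_N)$ of $\Z$-type deviations, one per round, which by blindness is chosen independently of the Client's partition — by a combinatorial argument over the uniformly random choice of the computation subset $C\subset[N]$.

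For correctness I would observe that every round runs faithfully, so each of the $s$ test rounds is evaluated on the deviation-free distribution $\pd T$; perfect insensitivity to $\Id$ then forces $\tau$ to output $0$ deterministically, so the number $x$ of failed tests is $0<w$ and the Client accepts. Each of the $d$ computation rounds independently returns the target bit-string with probability at least $1-c>1/2$, so a Hoeffding bound on these i.i.d.\ rounds (using $d\propto n$) shows the majority vote equals $\mathcal M_C\circ\cptp U\ket x$ except with probability exponentially small in $n$, matching the Resource's output. Summing the negligible failure terms gives the claimed bound in the honest case.

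For security I would build the simulator branch by branch, having it internally run the honest Client algorithm on a placeholder computation — possible since the leak $l_\rho=(\mathfrak C,G,\sch P,\preceq_G)$ is public and UBQC hides the actual angles and input — feed the Server the view it expects, extract the twirl branch $(\cptp E_1,\dots,\cptp E_N)$ from the Server's responses, and then choose $c$ as follows. Partition the rounds into those with $\cptp E_k\in\mathcal E_1$ (``harmful'') and the rest. If many rounds are harmful, then since $\sch P$ $\epsilon$-detects $\mathcal E_1$ and the $s$ test positions are placed uniformly at random among the $N$ rounds and are invisible to the Server, a tail bound for sampling without replacement shows the number of rejecting tests exceeds $w$ with overwhelming probability — this is precisely where the slack in $w<\frac{2c-1}{2c-2}s(1-\epsilon)$ is spent — so the Client outputs $(\bot,\rej)$ and the simulator sends $c=1$. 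If instead few rounds are harmful, then with overwhelming probability none of the $d$ computation rounds receives a deviation outside $\mathcal G_V\setminus\mathcal E_1$; by the correctness assumption on $\mathfrak C$ each such computation round still produces the correct output distribution, the majority vote is correct, and the simulator sends $c=0$ so that the Resource outputs $\mathcal M_C\circ\cptp U\ket x$. The role of $\delta$-insensitivity to $\mathcal E_2$ is robustness: it guarantees that the benign deviations the Server is entitled to mount do not by themselves push $x$ above $w$, so the protocol does not spuriously abort. Recombining the branches by convexity and summing the polynomially many, exponentially small error terms yields the negligible $\eta(n)$.

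The hard part will be the concentration step in the ``many harmful rounds'' regime: the Server may correlate the per-round deviations through its internal memory, so within a fixed twirl branch the Paulis are adversarial and adaptive, and the only randomness hidden from the Server is the placement of the $s$ test rounds among the $N$ positions. One therefore has to argue, uniformly over all such placements, that the hypergeometric fluctuation in how many harmful-carrying rounds land on tests is small, and to track how this interacts with $\epsilon$, $\delta$, $w$, $s$, $d$ and $c$ so that the threshold $w$ cleanly separates the two regimes. This quantitative bookkeeping, which is what Theorems~8 and~13 of~\cite{KKLM+22:framework} encapsulate, is the only non-mechanical part; the Pauli twirl, the detection/insensitivity/correctness accounting, and the simulator construction are otherwise a faithful transcription of the arguments there.
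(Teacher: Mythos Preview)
The paper does not prove this theorem at all: it is stated purely as an import, ``Combining Theorems~8 and~13 from~\cite{KKLM+22:framework}'', and the surrounding text only comments on how $\eta(n)$ depends on $\epsilon,\delta$ and how $w$ governs robustness. Your proposal therefore goes well beyond what the paper does, and as a sketch of the cited argument it is essentially on target: the reduction to $\Z$-type Pauli deviations via the UBQC twirl, the simulator that runs the honest Client on a placeholder, the dichotomy on the number of ``harmful'' rounds, and the hypergeometric/Hoeffding concentration over the random placement of test rounds are exactly the ingredients of the framework proof.

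One point in your security sketch is not quite right and would trip you up if you tried to make it precise. In the ``few harmful rounds'' regime you write that ``with overwhelming probability none of the $d$ computation rounds receives a deviation outside $\mathcal G_V\setminus\mathcal E_1$''. That is too strong and is not what the threshold $w<\frac{2c-1}{2c-2}s(1-\epsilon)$ buys you. The actual argument is a \emph{fraction} argument: if the total number of harmful rounds is below the level that would trigger $x\geq w$ (taking into account the $\epsilon$-detection rate and the hypergeometric fluctuation), then the fraction of computation rounds hit by a harmful deviation is, with overwhelming probability, small enough that the remaining unaffected computation rounds --- each correct with probability $1-c>\tfrac12$ --- still dominate the majority vote. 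The constant $\frac{2c-1}{2c-2}=\frac{1-2c}{2(1-c)}$ is precisely what makes this balance work; it does \emph{not} force the harmful set to miss the computation rounds entirely. Apart from this, your outline matches the structure of the proof in~\cite{KKLM+22:framework}.
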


  Note that the value of $\eta(n)$ heavily depends on the value of $\delta$ and $\epsilon$, in particular via the coefficient in the exponential. This means that it is crucial to minimise these detection and insensitivity errors.
  
  Notice also that $w$ is also reliant on $\epsilon$, and minimising this error also allows the protocol to tolerate more honest errors before aborting. 
  This noise-robustness of Protocol~\ref{proto:dev_detect} can be characterised as follows.
  
  \begin{theorem}[Noise-Robustness of Protocol \ref{proto:dev_detect}, Combining Theorems 9 and 13 from \cite{KKLM+22:framework}]
  \label{thm:noise-rob}
  For the same parameter choices as in Theorem \ref{thm:verif}, assume an execution of Protocol \ref{proto:dev_detect}
  with an honest-but-noisy Server such that $p$ is the probability that less than $\frac{w}{s\delta}n$ rounds are affected by a Pauli error.  
  Then the Client accepts the outcome with probability $(1 - p)(1 - \delta')$, for $\delta'$ negligible in $n$.
  \end{theorem}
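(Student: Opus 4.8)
The plan is to treat this as the noise-robustness counterpart of Theorem~\ref{thm:verif}, obtained by instantiating Theorems~9 and~13 of~\cite{KKLM+22:framework} with the trappified scheme $\sch P$ and the parameters fixed there, and to reproduce that argument at the level of Protocol~\ref{proto:dev_detect}. The only randomness in play on top of the (non-adaptive) noise is the Client's uniformly random choice of the size-$d$ set $C\subset[N]$ of computation runs; conditioning on that choice, the $s$ test runs form a uniformly random size-$s$ subset of the $N=n$ runs, \emph{independent} of which runs an honest-but-noisy Server corrupts. I would then condition on the ``good'' event $\mathcal G^{\star}$ that strictly fewer than $K:=\frac{w}{s\delta}\,n$ runs are hit by a Pauli error --- which occurs with probability $p$ by hypothesis --- and show that, conditioned on $\mathcal G^{\star}$, the Client reaches the acceptance branch except with probability $\delta'$ negligible in $n$.

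The core step is to control the number $X$ of \emph{failed} test runs. Conditioned on $\mathcal G^{\star}$ and on the exact corrupted set, the number $M$ of corrupted runs falling among the $s$ test runs is hypergeometric with mean $<K\,s/n = w/\delta$. Each corrupted test run carries a single Pauli deviation which, under the assumed noise model, lies in the insensitivity set $\mathcal E_2$; by $\delta$-insensitivity (Definition~\ref{def:insens_tp}) such a run outputs $\rej$ with probability at most $\delta$, while every uncorrupted test run accepts with certainty because $\sch P$ is perfectly insensitive to $\Id$. Hence $X$ is stochastically dominated by a sum of $M$ independent $\mathrm{Bernoulli}(\delta)$ variables, so $\mathbb E[X\mid\mathcal G^{\star}]<w$. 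Since $s$ is proportional to $n$ and the constraint $w<\frac{2c-1}{2c-2}\,s(1-\epsilon)$, together with the definition of $K$, leaves a $\Theta(n)$ margin between $\mathbb E[X\mid\mathcal G^{\star}]$ and $w$, a Hoeffding bound (over the hypergeometric partition and the Bernoulli verdicts) makes $\Pr[X\ge w\mid\mathcal G^{\star}]$ exponentially small in $n$; that probability is the main contribution to $\delta'$.

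To complete the picture I would check that on the same event the majority vote over the $d$ computation runs is reliable: at most $K$ computation runs are corrupted, the remaining ones execute $\cptp C\in\mathfrak C$ faithfully and hence each outputs the correct bit with probability at least the $\BQP$ completeness $c>1/2$, and the choice of $w$ --- hence of $K$ relative to $d$ --- is exactly what forces the number of uncorrupted computation runs above the threshold the majority vote requires; a second Hoeffding bound folds the residual error into $\delta'$. A union bound over the two estimates then shows that, conditioned on $\mathcal G^{\star}$, the Client accepts (and outputs the $\BQP$-correct result) except with probability $\delta'$ negligible in $n$, i.e.\ this happens with overall probability at least $p\,(1-\delta')$, which is the asserted bound.

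The hard part will not be any single calculation but the tightness of the bookkeeping: one must confirm that the combined slack from the gap $\mathbb E[X\mid\mathcal G^{\star}]<w$, the constraint on $w$, and the proportionality $s,d=\Theta(n)$ really suffices to push \emph{both} Hoeffding tails below every inverse polynomial \emph{simultaneously}, and that the independence between the test/computation partition and the noise pattern --- the hypothesis that legitimises the hypergeometric step --- genuinely holds for an honest-but-noisy, non-adaptive Server, whereas it would fail for an adaptive malicious one. These are precisely the points at which one imports the constants from Theorems~9 and~13 of~\cite{KKLM+22:framework} rather than re-deriving them.
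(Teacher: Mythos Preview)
The paper does not give its own proof of this statement: as the header indicates, it is imported wholesale by combining Theorems~9 and~13 of~\cite{KKLM+22:framework}, and no argument is supplied beyond that citation. Your sketch is therefore not competing against a proof in the paper; it is a reconstruction of what the framework's argument presumably looks like.

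That reconstruction is broadly reasonable --- condition on the low-noise event, exploit the independence between the Client's random test/computation partition and an honest-but-noisy Server's error pattern, bound the number of failed tests via insensitivity, and conclude by concentration. Two points deserve comment. First, to obtain that a corrupted test rejects with probability at most $\delta$ you assume the Pauli error on that round lies in the insensitivity set $\mathcal E_2$; the statement as printed here only says ``affected by a Pauli error'', so this restriction is something you are importing from the framework rather than from the theorem as written. Second, you conclude with an acceptance probability of at least $p(1-\delta')$, whereas the printed bound is $(1-p)(1-\delta')$; since $p$ is defined as the probability of the \emph{good} event (fewer than $K$ corrupted rounds), your version is the one that makes sense and the $(1-p)$ is almost certainly a typo in the paper. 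Finally, the theorem only asserts that the Client \emph{accepts}; the majority-vote correctness you also analyse is handled separately (via Theorem~\ref{thm:verif} and the remark immediately following the present statement), so that part of your sketch is not needed for this particular claim.
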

  
  Since the protocol is secure, we can then guarantee that, if the client accepts, the outcome is also correct up to negligible total variational distance. This means that for machines with a constant amount of global noise below a certain bound, our protocol accepts and yields the correct result with overwhelming probability.

\paragraph{Traps from Stabiliser Tests.}
As a result, the performance of Protocol~\ref{proto:dev_detect} is
governed by the choice of $s$, $d$, $w$ defined above, together with
the error detection and insensitivity capabilities of traps in $\sch
P$.  Ref.~\cite{KKLM+22:framework}~\S~6.1 shows how to construct
general traps from subset stabiliser testing. Indeed, let $\grp S$ be
the stabiliser group for $\ketbra G$ the graph state associated to
$G$, and consider $\{ \cptp S_v = \X(v) \bigotimes_{(v,w)\in E} \Z(w),
v \in V \}$ the set of canonical generators of $\grp S$. One can then
associate a trap to each $\cptp R \in \grp S$ by (i) having the Client
prepare a $+1$ eigenspace of $\cptp R$ as input, and (ii) delegating
to the Server the computation
consisting of measuring $\cptp R$ using the UBQC Protocol~\ref{prot:UBQC}. An accepted trap then corresponds
to the measurement of $\cptp R$ returning the $+1$ eigenvalue.

For the preparation, the client sets each qubit $v\in V$ in the $+1$
eigenstate of $\cptp R(v)$ with $\cptp R(v)$ being uniquely defined by:
\begin{align*}
  \cptp R    & = \bigotimes_{v \in V} \cptp R(v) \\
  \cptp R(v_0) & \in \{\pm 1\} \times \{\I,\X,\Y,\Z\}, \mbox{, for } v_0 = \operatorname{argmin}_{v\in V}{\cptp R(v) \neq \I}  \\
  \cptp R(v) & \in \{\I,\X,\Y,\Z\} \mbox{, for } v \in V \setminus v_0.
\end{align*}
This corresponds to preparing a $+1$ eigenstate of the group generated
by $\{\cptp R(v), v\in V\}$ which contains $\cptp R$ hence satisfying
(i) above.

For the delegated computation consisting of measuring $\cptp R$, the
Client simply instructs the Server to measure each qubit in the
$\X$-basis, getting outcome $t(v)$. The motivation for these
measurements is better understood by examining to which observable
they correspond on the inputs provided by the Client. To this end, one
can conjugate each $\X(v)$ by $\prod_{(v,w)\in E} \CZ_{(v,w)}$, the
entangling operation that the Server performs prior to the
measurement. A simple stabiliser computation shows that $\X(v)$ is
mapped to $\cptp S_v$. That is, measuring $\X(v)$ after the entangling
operation corresponds to measuring $\cptp S_v$ on the inputs provided by the
client. As $\cptp R$ is uniquely defined as $\prod_{v\in\one_{\cptp
    R}} \cptp S_v$ for some set $\one_{\cptp R} \subset V$, and because
$\grp S$ is abelian, the outcome of $\cptp R$ on the input state
provided by the client is the binary sum of the outcomes of
$\cptp S_v$. Using the above correspondence for measurements of $\cptp S_v$ on the
inputs, one concludes that $\bigoplus_{v\in \one_{\cptp R}} t(v)$
determines the outcome of the measurement of $\cptp R$ on the inputs
provided by the Client. Combining the preparation and the measurement,
the Client therefore expects that for an honest Server,
$\bigoplus_{v\in \one_{\cptp R}} t(v) = 1$, thereby fulfilling (ii)
above.

The freedom in choosing which $\cptp R$'s to include in the trappified
scheme $\sch P$ will be at the core of constructing dummyless
verification protocols.

\subsection{A Natural Invariance of MBQC with Classical Input and Output}\label{sec:natural_invariance_mbqc}

In MBQC, computation qubits, i.e.~$v\in O^c$, are measured in the
$\ket{\pm_{\phi'(v)}}$ basis, where $\phi'(v) \in
\Theta=\qty{\frac{k\pi}{4}}_{k \in \qty{0, \ldots, 7}}$ is defined by
the pattern used for the computation. As a result, the computation is
invariant under rotations around the $\phi'(v)$ axis in the $X-Y$
plane just before the measurement. The reason is that such rotations
leave the projectors $\ketbra{+_{\phi'(v)}}$ and
$\ketbra{-_{\phi'(v)}}$ untouched so that it does not affect the
probabilities of the outcomes of a measurement in the
$\ket{\pm_{\phi'(v)}}$ basis. This property is well known and is
actively used in the proof of security of the UBQC protocol as it
allows to fully twirl the deviation of the server on computation
qubits.

If one not only considers local unitary transformations but more
generally local invertible transformations, then MBQC is also
invariant under reflections through the $X-Y$ plane for $v \in
O^c$. The reason is similar to the one given above: such
transformations do not change the projectors onto the
$\ket{\pm_{\phi'(v)}}$ basis and hence do not affect probability
distributions of measurements in the $\ket{\pm_{\phi'(v)}}$ basis.

We will now explore the latter invariance in the special case of
classical input classical output computations where it naturally
extends to the result of the computation itself, as in such case all
qubits are measured in the $X-Y$ plane.

\begin{lemma}\label{lemma:fa_harmless}
	For matrices $\rho = \sum_{\cptp P \in \{\I,\X,\Y,\Z\}^{\otimes n}} \alpha_{\cptp P} \cptp P$ decomposed in the Pauli basis, let $F_A$ be the linear map that applies the reflection through the $X-Y$ plane for all vertices in $A \subset V$, defined as
	\begin{align*}
		F_A(\rho) = \sum_{\cptp P\in \{\I,\X,\Y,\Z\}^{\otimes n}} (-1)^{\zwt_A(\cptp P)} \alpha_{\cptp P} \cptp P,
	\end{align*}
	where $\zwt_A(\cptp P) = |\{ v\in A | \cptp P_v = \Z \}|$ counts the number of vertices in $A$ on which $\cptp P$ equals the Pauli $\Z$. Then, MBQC is invariant under $F_A$ when applied right before the $\ket{\pm_{\phi'(v)}}$ measurements.
\end{lemma}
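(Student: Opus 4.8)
The plan is to reduce the lemma to an elementary identity about Pauli expansions and rank-one projectors confined to the $X$--$Y$ plane. First I would recall the structure of a classical-input/classical-output MBQC run: after the Client's state preparation and the entangling layer of $\CZ$ gates, every qubit $v\in V$ is measured in a basis $\{\ket{+_{\phi'(v)}},\ket{-_{\phi'(v)}}\}$ lying in the $X$--$Y$ plane (which, in the classical-I/O case at hand, covers the output qubits too), where $\phi'(v)$ is fixed by the pattern together with the flow-induced corrections coming from earlier outcomes, and the output of the computation is a fixed public function $g$ of the outcome string $\vec s=(s_v)_{v\in V}$. Since $F_A$ is linear, to prove that it leaves the computation invariant it suffices to show it leaves the joint distribution of $\vec s$ invariant, and since $\Pr[\vec s]=\operatorname{Tr}[\Pi_{\vec s}\,\rho]$ with $\Pi_{\vec s}=\bigotimes_{v\in V}\Pi_v^{(s_v)}$, $\Pi_v^{(0)}=\ketbra{+_{\phi'(v)}}$, $\Pi_v^{(1)}=\ketbra{-_{\phi'(v)}}$, and $\rho$ the (possibly deviated) state of the register just before the measurements, it is enough to prove $\operatorname{Tr}[\Pi_{\vec s}\,F_A(\rho)]=\operatorname{Tr}[\Pi_{\vec s}\,\rho]$ for every $\vec s$. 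Here all angles $\phi'(v)$ are determined once $\vec s$ is fixed, so adaptivity causes no difficulty.

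For the core step I would use the observation that $\Pi_v^{(b)}=\tfrac12\bigl(\I+(-1)^b(\cos\phi'(v)\,\X+\sin\phi'(v)\,\Y)\bigr)$ has no $\Z$-component, so $\operatorname{Tr}[\Pi_v^{(b)}\,\Z]=0$ for $b\in\{0,1\}$. Expanding $\rho=\sum_{\cptp P}\alpha_{\cptp P}\,\cptp P$ and using multiplicativity of the trace over tensor factors gives $\operatorname{Tr}[\Pi_{\vec s}\,\rho]=\sum_{\cptp P}\alpha_{\cptp P}\prod_{v\in V}\operatorname{Tr}[\Pi_v^{(s_v)}\,\cptp P_v]$. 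Any Pauli term $\cptp P$ with $\cptp P_v=\Z$ for some $v\in A$ contributes $0$ by the observation, so only the terms with $\zwt_A(\cptp P)=0$ survive; on exactly those terms $F_A$ acts as the identity. Hence $\operatorname{Tr}[\Pi_{\vec s}\,F_A(\rho)]=\sum_{\cptp P}(-1)^{\zwt_A(\cptp P)}\alpha_{\cptp P}\prod_{v\in V}\operatorname{Tr}[\Pi_v^{(s_v)}\,\cptp P_v]=\operatorname{Tr}[\Pi_{\vec s}\,\rho]$, and summing over $\vec s\in g^{-1}(o)$ shows $\Pr[\text{output}=o]$ is unchanged, which is the claimed invariance.

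The single-qubit identity and the Pauli bookkeeping are routine; the one point that needs care — and which I would spell out — is the precise reading of ``right before the $\ket{\pm_{\phi'(v)}}$ measurements''. I would argue that it can be taken as a single insertion point after the whole entangling layer and before all measurements, so that the entire measurement round is one product projective measurement with statistics $\operatorname{Tr}[\Pi_{\vec s}\,\rho]$; this is where one must be explicit that the only role of interleaving measurements with the classical flow is to fix which angle $\phi'(v)$ each measurement uses, which is harmless once $\vec s$ is fixed. Two further remarks I would flag: the argument uses only linearity of $F_A$, not complete positivity — indeed $F_A$ equals the partial transpose on $A$ composed with conjugation by $\bigotimes_{v\in A}\X_v$, so it is not a physical channel, matching the text's use of local invertible rather than CPTP maps; and it is essential that every vertex of $A$ be measured in the $X$--$Y$ plane, which is exactly what the classical-I/O restriction guarantees (if a vertex of $A$ were measured in the computational basis, $\operatorname{Tr}[\ketbra{0}\,\Z]=1\neq 0$ and the cancellation would fail).
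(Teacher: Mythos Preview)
Your proof is correct and follows essentially the same approach as the paper: both expand $\rho$ in the Pauli basis, use that the $X$--$Y$ plane projectors $\ketbra{\pm_{\phi'(v)}}$ have vanishing $\Z$-component, and conclude that only Pauli terms with $\zwt_A(\cptp P)=0$ contribute to the outcome probabilities, on which $F_A$ acts trivially. Your treatment is slightly more explicit than the paper's on the adaptivity of the angles $\phi'(v)$ (which the paper leaves implicit in its ``analogous argument'' remark), and your side remarks on non-complete-positivity of $F_A$ and on the necessity of the classical-I/O restriction are apt but go beyond what the paper spells out.
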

\begin{proof}
	The probability to obtain the all-zero outcome when measuring all qubits $v \in O^c$ of a state $\rho$ in the $\dyad{+_{\phi'(v)}}$-bases is given by
	\begin{align*}
		\Tr \left( \left( \id_O \otimes \bigotimes_{v\in O^c} \dyad{+_{\phi'(v)}} \right) \rho \right).
	\end{align*}
	Decomposing the above expression in the Pauli basis yields
	\begin{align*}
		&\Tr \left( \left( \sum_{\cptp P' \in \{\I,\X,\Y\}^{\otimes n}} \beta_{\cptp P'} \cptp P' \right) \left( \sum_{\cptp P \in \{\I,\X,\Y,\Z\}^{\otimes n}} \alpha_{\cptp P} \cptp P \right) \right) \\
		&= \sum_{\cptp P' \in \{\I,\X,\Y\}^{\otimes n}} \sum_{\cptp P \in \{\I,\X,\Y,\Z\}^{\otimes n}} \beta_{\cptp P'} \alpha_{\cptp P} \Tr \left( \cptp P' \cptp P \right) = \sum_{\cptp P \in \{\I,\X,\Y\}^{\otimes n}} \beta_{\cptp P} \alpha_{\cptp P} 2^{|V|}.
	\end{align*}
	Calculating the same probability for the all-zero outcome when measuring after applying $F_A$ yields
	\begin{align*}
		&\Tr \left( \left( \id_O \otimes \bigotimes_{v\in O^c} \dyad{+_{\phi'(v)}} \right) F_A (\rho) \right) \\
		&\Tr \left( \left( \sum_{\cptp P' \in \{\I,\X,\Y\}^{\otimes n}} \beta_{\cptp P'} \cptp P' \right) \left( \sum_{\cptp P \in \{\I,\X,\Y,\Z\}^{\otimes n}} (-1)^{\zwt_A(\cptp P)} \alpha_{\cptp P} \cptp P \right) \right) \\
&= \sum_{\cptp P \in \{\I,\X,\Y\}^{\otimes n}} \beta_{\cptp P} \alpha_{\cptp P} 2^{|V|},
	\end{align*}
	and therefore the same value. By an analogous argument, the probabilities for any other outcome coincide as well.
\end{proof}

Note that $F_A (\rho)$ might not always be a physical state.
As a result, if $\ketbra{G}$ denotes the graph state used to implement
classical input classical output MBQC on $G$, one has:
\begin{equation}
  \ketbra{G} = \frac{1}{2^{|V|}}\sum_{\cptp S \in \mathcal{S}} \cptp S, 
\end{equation}
for $\mathcal{S}$ the stabiliser group of the graph state, so that 
for any $\cptp S'\in \grp S$ we have:
\begin{equation}
  \Tr(\cptp S'\ketbra{G}) = \frac{1}{2^{|V|}} \sum_{\cptp S \in \grp S} \tr(\cptp S' \cptp S) = 1.
\end{equation}
In turn, this implies that
\begin{equation}\label{eq:fa_stabilizer}
  \Tr(\cptp S' F_A(\ketbra{G})) = \frac{1}{2^{|V|}} \sum_{\cptp S \in \grp S} (-1)^{\zwt_A({\cptp S})} \Tr(\cptp S' \cptp S) = (-1)^{\zwt_A(\cptp S')}.
\end{equation}
If $F_A(\ketbra{G})$ was a physical state, Equation~\eqref{eq:fa_stabilizer} would imply that it would be stabilised by $(-1)^{\zwt_A(\cptp S)} \cptp S$ for all $\cptp S \in \grp S$. The group structure of stabilisers would then imply that it is also stabilised by the operator $(-1)^{\zwt_A(\cptp S) + \zwt_A(\cptp S')} \cptp S \cptp S'$ for all $\cptp S, \cptp S' \in \mathcal{S}$, and hence $\zwt_A(\cptp S \cptp S') \equiv \zwt_A(\cptp S) + \zwt_A(\cptp S') \pmod{2}$.

However, for $A\subsetneq V$, $\zwt_A(\cdot)$ does not
in general satisfy the above equation. More precisely, take $(v,w) \in
E$, the stabiliser $\cptp S_v \cptp S_w$ will then satisfy $\zwt(\cptp
S_v \cptp S_w) \equiv \zwt(\cptp S_v) + \zwt(\cptp S_w) - 1 \pmod
2$. This is because the overlap of $\cptp S_v$ and $\cptp S_w$ at $v$
will always remove a single $\Z$ coming from $\cptp S_w$, while if the
two stabilisers overlap at some other $\Z$ location in $A$ this will
remove 2 from the weight.

Conversely\footnote{More generally, for disconnected graphs this holds if and only if $A$ is a connected component or a union of connected components.}, setting $A = V$, then indeed $\zwt_V(\cptp S \cptp S') \equiv \zwt_V(\cptp S) + \zwt_V(\cptp S') \pmod{2}$ for all $\cptp S, \cptp S' \in \mathcal{S}$.
Moreover, it is possible to find a unitary
transformation that has the same effect as $F_A$ on $\ketbra{G}$,
implying that $F_A(\ketbra{G})$ is then a physical state,
as witnessed by the following lemma.

\begin{lemma}\label{lemma:fv_as_unitary}
	For any graph $G = (V,E)$ it holds that $F_A(\ketbra{G}) = \cptp U \ketbra{G} \cptp U^\dagger$, where
	\begin{align*}
		\cptp U = \prod_{\substack{v \in V, \\ \deg{v} \equiv 1 \!\!\!\!\! \pmod{2}}} \Z_v
	\end{align*}
	describes the application of $\Z$'s to all odd-degree vertices of $G$.
\end{lemma}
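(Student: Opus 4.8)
The plan is to verify the identity directly on the level of the Pauli decomposition of $\ketbra{G}$, using the characterisation of $F_A$ in Lemma~\ref{lemma:fa_harmless} together with the explicit form of the stabiliser group $\grp S$. First I would recall that $\ketbra{G} = 2^{-|V|}\sum_{\cptp S \in \grp S}\cptp S$, so that $F_V$ acts on $\ketbra G$ by flipping the sign of each stabiliser $\cptp S$ according to the parity of the number of $\Z$'s it contains, i.e. $F_V(\ketbra G) = 2^{-|V|}\sum_{\cptp S \in \grp S}(-1)^{\zwt_V(\cptp S)}\cptp S$. On the other side, conjugating $\ketbra G$ by $\cptp U = \prod_{\deg v \text{ odd}} \Z_v$ maps each stabiliser $\cptp S$ to $\pm \cptp S$, where the sign is $(-1)^{m(\cptp S)}$ with $m(\cptp S)$ the number of odd-degree vertices $v$ at which $\cptp S_v \in \{\X, \Y\}$ (since $\Z$ anticommutes with $\X$ and $\Y$ but commutes with $\I$ and $\Z$). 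So the whole statement reduces to the combinatorial claim that for every $\cptp S \in \grp S$,
\[
  \zwt_V(\cptp S) \equiv \big|\{\, v : \deg v \text{ odd},\ \cptp S_v \in \{\X,\Y\}\,\}\big| \pmod 2 .
\]

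Next I would reduce this claim to the canonical generators. Since both sides are, modulo $2$, additive under the group product — the left side because $\zwt_V$ is additive mod $2$ on $\grp S$ (this is exactly the $A = V$ case established just before the lemma, via the parity bookkeeping at shared vertices and shared $\Z$-locations), and the right side because it is a $\mathbb{F}_2$-linear functional of the symplectic representation of $\cptp S$ — it suffices to check the identity for each canonical generator $\cptp S_v = \X(v)\bigotimes_{(v,w)\in E}\Z(w)$. For $\cptp S_v$, the left side $\zwt_V(\cptp S_v)$ counts the neighbours of $v$, which is $\deg v$. The right side counts the odd-degree vertices among $\{v\} \cup N(v)$ at which $\cptp S_v$ is an $\X$ or $\Y$; the only such vertex is $v$ itself (its neighbours all carry $\Z$), so the right side is $1$ if $\deg v$ is odd and $0$ if $\deg v$ is even. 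Hence both sides agree mod $2$: they equal $\deg v \bmod 2$. This establishes the per-generator identity, and additivity propagates it to all of $\grp S$, giving $F_V(\ketbra G) = \cptp U\ketbra G \cptp U^\dagger$. For general $A$ that is a union of connected components one argues the same way restricted to those components (this matches the footnote), but for the lemma as stated I only need $A = V$; I would phrase the argument so that it also yields the $F_A = \cptp U_A$ statement whenever $A$ is a union of components, where $\cptp U_A$ applies $\Z$ to the odd-degree vertices lying in $A$.

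I expect the main obstacle to be purely bookkeeping: getting the mod-$2$ additivity of $\zwt_V$ on the stabiliser group stated cleanly (the cancellation at a shared vertex removes exactly one $\Z$, while a shared off-diagonal $\Z$-location removes two, so parity is preserved), and being careful that conjugation by $\Z_v$ flips the sign of $\cptp S$ precisely when $\cptp S_v$ anticommutes with $\Z$, i.e. when $\cptp S_v \in \{\X,\Y\}$ — and not, say, when $\cptp S_v = \Z$. Once these two facts are in place the proof is immediate from checking the canonical generators, where everything collapses to the single observation that the only non-$\Z$ entry of $\cptp S_v$ sits at $v$, whose relevance to the right-hand count is governed by the parity of $\deg v$, which is also exactly $\zwt_V(\cptp S_v) \bmod 2$.
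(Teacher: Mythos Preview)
Your proposal is correct and follows essentially the same route as the paper's proof: both reduce the identity to checking, via the decomposition $\ketbra{G} = 2^{-|V|}\sum_{\cptp S \in \grp S}\cptp S$, that conjugation by $\cptp U$ introduces the sign $(-1)^{\zwt_V(\cptp S)}$ on each stabiliser, and both establish this by verifying it on the canonical generators $\cptp S_v$ (where the only $\X$ sits at $v$ and $\zwt_V(\cptp S_v)=\deg v$) and then invoking the mod-$2$ additivity of $\zwt_V$ on $\grp S$. The only cosmetic difference is that the paper pushes $\cptp U$ through the product $\prod_{v\in V_{\cptp S}}\cptp S_v$ directly, whereas you first name the sign from conjugation as $(-1)^{m(\cptp S)}$ and then argue $m\equiv\zwt_V\pmod 2$ by additivity of both functionals; the underlying verification is identical.
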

\begin{proof}
	It will be useful to rewrite the stabilisers of $\ketbra{G}$ as follows. For every $\cptp S \in \grp S$, there exists exactly one subset of vertices $V_{\cptp S} \subset V$ such that
	\begin{align*}
		\cptp S = \prod_{v \in V_{\cptp S}} \cptp S_v.
	\end{align*}
	We start with the right side of the equation:
	\begin{align*}
		\cptp U \ketbra{G} \cptp U^\dagger = \cptp U \left( \frac{1}{2^{|V|}}\sum_{\cptp S \in \grp S} \cptp S \right) \cptp U^\dagger = \frac{1}{2^{|V|}}\sum_{\cptp S \in \grp S} \cptp U \left( \prod_{v \in V_{\cptp S}} \cptp S_v \right) \cptp U^\dagger.
	\end{align*}
	Complementing $\cptp U^\dagger \cptp U$ terms, this expression gives:
	\begin{align*}
		\frac{1}{2^{|V|}}\sum_{\cptp S \in \grp S} \prod_{v \in V_{\cptp S}} \cptp U \cptp S_v \cptp U^\dagger.
	\end{align*}
	It is easy to verify that $\cptp U \cptp S_v \cptp U^\dagger = (-1)^{\zwt_V(\cptp S_v)} \cptp S_v$ because of the particular structure of $\cptp U$, and hence the above expression equals
	\begin{align*}
		\frac{1}{2^{|V|}}\sum_{\cptp S \in \mathcal{S}} \prod_{v \in V_{\cptp S}} (-1)^{\zwt_V(\cptp S_v)} \cptp S_v.
	\end{align*}
	Exploiting the additivity of $\zwt_V(\cdot)$, we arrive at
	\begin{align*}
		\frac{1}{2^{|V|}}\sum_{\cptp S \in \mathcal{S}} (-1)^{\sum_{v \in V_{\cptp S}} \zwt_V(\cptp S_v)} \cptp S = \frac{1}{2^{|V|}}\sum_{\cptp S \in \mathcal{S}} (-1)^{\zwt_V(\cptp S)} \cptp S = F_V(\ketbra{G}),
	\end{align*}
	which concludes the proof.
\end{proof}

Combining the statements of Lemma~\ref{lemma:fa_harmless} and Lemma~\ref{lemma:fv_as_unitary}, we finally arrive at the following result, capturing the inherent invariance of classical I/O MBQC to one specific nontrivial error.

\begin{lemma}\label{lemma:harmless}
	Let $G = (V,E)$ be a graph and $\cptp U$ be the unitary operation given by
	\begin{align*}
		\cptp U = \prod_{\substack{v \in V, \\ \deg{v} \equiv 1 \!\!\!\!\! \pmod{2}}} \Z_v,
	\end{align*}
	describing the application of $\Z$'s to all odd-degree vertices of $G$.
	For MBQC on $G$ with classical input and output, the application of $\cptp U$ before the measurements has no effect on the results of the computation.
\end{lemma}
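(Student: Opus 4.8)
The plan is to combine the two preceding lemmas essentially mechanically. Lemma~\ref{lemma:fv_as_unitary} tells us that the unitary $\cptp U = \prod_{v : \deg v \equiv 1 \!\! \pmod 2} \Z_v$ acts on the graph state exactly as the linear map $F_V$, i.e.\ $\cptp U \ketbra{G} \cptp U^\dagger = F_V(\ketbra{G})$. Lemma~\ref{lemma:fa_harmless}, specialised to $A = V$, says that MBQC is invariant under $F_V$ when applied right before the $\ket{\pm_{\phi'(v)}}$ measurements: the output distribution is unchanged. The first step is therefore to note that applying $\cptp U$ to the graph state just before the measurements produces the state $F_V(\ketbra{G})$, and the second step is to invoke Lemma~\ref{lemma:fa_harmless} to conclude that the measurement statistics — hence the classical output of the computation — are identical to those obtained without $\cptp U$.

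The one point that needs a little care is that in an actual MBQC run on $G$, the state sitting on the qubits just before the final measurements is not literally $\ketbra{G}$: one has branched on earlier measurement outcomes, applied byproduct/correction operators, and the input has been fed in. So I would argue that the relevant state is still, on each measured qubit, of the form where the only Pauli components that matter are supported on $\{\I,\X,\Y\}$ after contraction against the measurement projectors $\bigotimes_{v\in O^c}\dyad{+_{\phi'(v)}}$; Lemma~\ref{lemma:fa_harmless} is in fact stated for an arbitrary matrix $\rho$ decomposed in the Pauli basis, not just for $\ketbra{G}$, so its conclusion applies to whatever branch-dependent state actually arises. The role of Lemma~\ref{lemma:fv_as_unitary} is then only to certify that inserting the physical operation $\cptp U$ realises precisely the (a priori unphysical) map $F_V$ on the computation — and since $\cptp U$ commutes with the correction structure up to signs that are themselves absorbed, or more simply since $F_V$ composed through the rest of the pattern still kills nothing relevant, the harmlessness propagates.

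Concretely I would write: by Lemma~\ref{lemma:fv_as_unitary}, $\cptp U$ acts on the graph-state resource as the linear map $F_V$; by Lemma~\ref{lemma:fa_harmless} with $A = V$, $F_V$ applied just before the $X\text{-}Y$-plane measurements leaves every outcome probability invariant; the classical output of a classical-input classical-output MBQC computation is by definition a (deterministic, flow-corrected) function of exactly these outcomes; hence inserting $\cptp U$ before the measurements does not change the output distribution. That is the whole argument.

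The main obstacle — such as it is — is purely expository rather than mathematical: making sure the reader accepts that the ``state before the measurements'' to which Lemma~\ref{lemma:fa_harmless} is applied is the genuine run-time state (with corrections and input already incorporated), and that $\cptp U$ truly implements $F_V$ on \emph{that} state and not merely on the bare $\ketbra{G}$. Since Lemma~\ref{lemma:fa_harmless} is proved for a completely general $\rho$ and $\cptp U$ is a fixed unitary whose conjugation action is the signed rule used in the proof of Lemma~\ref{lemma:fv_as_unitary}, this goes through without any further computation; I would simply remark that the linearity of both $F_V$ and conjugation by $\cptp U$, together with the fact that the corrections are themselves Pauli operators, means the identity $\cptp U(\cdot)\cptp U^\dagger = F_V$ extends from $\ketbra{G}$ to the whole computational branch in which it is needed.
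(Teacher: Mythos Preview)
Your concrete argument --- combine Lemma~\ref{lemma:fv_as_unitary} to identify $\cptp U\ketbra{G}\cptp U^\dagger$ with $F_V(\ketbra{G})$, then invoke Lemma~\ref{lemma:fa_harmless} with $A=V$ --- is exactly the paper's proof; the paper simply states that the lemma follows by combining the two preceding ones.

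Your surrounding discussion, however, manufactures a difficulty and then resolves it with a false claim. In the classical-I/O MBQC model used here, inputs and flow corrections are absorbed into the measurement angles (see \S\ref{subsec:mbqc}), so the physical state before the measurements really is $\ketbra{G}$; there is no branching or byproduct on the quantum state to patch. More importantly, the assertion that ``the identity $\cptp U(\cdot)\cptp U^\dagger = F_V$ extends from $\ketbra{G}$ to the whole computational branch'' is wrong as a general statement: conjugation by $\cptp U$ flips the sign of a Pauli term whenever it has an odd number of $\X$ or $\Y$ factors on odd-degree vertices, whereas $F_V$ flips the sign whenever $\zwt_V$ is odd. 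These two sign rules coincide on the stabilisers of $\ketbra{G}$ --- that is precisely what the proof of Lemma~\ref{lemma:fv_as_unitary} establishes, using the structure $\cptp S_v = \X_v\prod_{w\in N(v)}\Z_w$ together with the additivity of $\zwt_V$ --- but they disagree on generic Paulis. Since the extension is not needed, drop the last two paragraphs and your proof is clean.
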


Summarising the results of this section, for any classical-input classical-output MBQC there exists a non-trivial and non-stabiliser deviation that has no influence on the results of the computation. It is important to bear in mind the harmlessness of this error when constructing a verification scheme, as dummyless stabiliser tests will -- by construction -- not be able to detect it.

\subsection{Dummyless Verification}\label{sec:dummyless_verification}

We now arrive at the core of this section: designing single-round
traps restricted to preparing states in the $X-Y$ plane. Using the
construction of traps from Section \ref{subsec:framework}, it amounts to finding a set of
stabilisers of $\ketbra G$ that are only made out of $\I$, $\X$, $\Y$
tensor products.

More precisely, we show that
\begin{lemma} \label{lem:stab_tests}
For any $G = (V,E)$, consider the graph state $\ket G$ and its
stabiliser group $\grp S$. Then, it is always possible to find $|V|-1$
generators of $\grp S$ that are tensor products of $\I$, $\X$ and $\Y$
only.
\end{lemma}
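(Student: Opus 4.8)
The plan is to translate the statement into a counting problem over $\mathbb{F}_2$. As recalled in the proof of Lemma~\ref{lemma:fv_as_unitary}, every element of $\grp S$ is $\cptp S_T := \prod_{v \in T}\cptp S_v$ for a unique $T \subseteq V$, which identifies $\grp S$ with $\mathbb{F}_2^V$ via $\cptp S_T \leftrightarrow \mathbf 1_T$. Let $A$ be the adjacency matrix of $G$ over $\mathbb{F}_2$. In the symplectic (tableau) picture $\cptp S_v \leftrightarrow (e_v \mid A e_v)$, so $\cptp S_T \leftrightarrow (\mathbf 1_T \mid A\mathbf 1_T)$, and the Pauli on qubit $u$ is $\I,\X,\Z,\Y$ according to whether $\big((\mathbf 1_T)_u,(A\mathbf 1_T)_u\big)$ equals $(0,0),(1,0),(0,1),(1,1)$. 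Hence $\cptp S_T$ is a tensor product of $\I,\X,\Y$ only — call such a $T$ \emph{good} — precisely when $\operatorname{supp}(A\mathbf 1_T)\subseteq T$, i.e.\ when every vertex outside $T$ has an even number of neighbours inside $T$. The lemma thus reduces to showing that the good subsets contain $|V|-1$ linearly independent vectors, equivalently that $W := \operatorname{span}\{\mathbf 1_T : T \text{ good}\}$ has dimension at least $|V|-1$. I will assume $G$ connected, as is the case for MBQC resource graphs; for graphs with several components each containing an odd-degree vertex the bound drops, so the statement should be read with $G$ connected.

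Let $V_{\mathrm{odd}}$ be the set of odd-degree vertices, $\mathbf d = \mathbf 1_{V_\mathrm{odd}}$, and $\ell(\mathbf x)=\langle \mathbf d,\mathbf x\rangle$; then $\dim\ker\ell \ge |V|-1$ always. The core of the argument is to prove $W \supseteq \ker\ell$, for which it is enough to show that every $\mathbf y$ with $\langle \mathbf y,\mathbf 1_T\rangle = 0$ for all good $T$ lies in $\langle \mathbf d\rangle = \{0,\mathbf d\}$, since then $W^\perp \subseteq \langle\mathbf d\rangle$ and so $W \supseteq \langle\mathbf d\rangle^\perp = \ker\ell$. To constrain such a $\mathbf y$ I use three explicit families of good subsets: (i) $T=V$ is good vacuously; (ii) for an even-degree vertex $u$, $T = V\setminus\{u\}$ is good because the sole outside vertex $u$ has $\deg u$ (even) neighbours in $T$ — pairing orthogonality against $T=V$ and $T=V\setminus\{u\}$ forces $y_u = 0$, so $\operatorname{supp}(\mathbf y)\subseteq V_\mathrm{odd}$, while orthogonality against $T=V$ gives $\sum_v y_v = 0$; (iii) if $q_0-q_1-\cdots-q_\ell$ is an induced path with odd-degree endpoints and all interior vertices even-degree, then $T = V\setminus\{q_0,\ldots,q_\ell\}$ is good, since each $q_i$ has $1$ neighbour (for $i\in\{0,\ell\}$) or $2$ neighbours (for $0<i<\ell$) among $\{q_0,\ldots,q_\ell\}$, so $|N(q_i)\cap T|$ is $\deg q_i-1$ or $\deg q_i-2$, even in every case; orthogonality against this $T$ then gives $y_{q_0}+y_{q_\ell}=0$ because the interior terms vanish by (ii).

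It remains to deduce that $\mathbf y$ is constant on $V_\mathrm{odd}$; combined with $\operatorname{supp}(\mathbf y)\subseteq V_\mathrm{odd}$ and $\sum_v y_v=0$ this forces $\mathbf y\in\{0,\mathbf d\}$. If $\mathbf y$ were not constant on $V_\mathrm{odd}$, fix $a\in V_\mathrm{odd}$ and choose, among the odd-degree vertices $b$ with $y_b\ne y_a$, one at minimum graph distance from $a$, say $b^*$. A shortest path $q_0-\cdots-q_\ell$ from $a$ to $b^*$ is induced, and any interior odd-degree vertex $q_j$ satisfies $d_G(a,q_j)=j<\ell$, so by minimality $y_{q_j}=y_a$. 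Let $j^*$ be the largest $j<\ell$ with $q_j\in V_\mathrm{odd}$: then the sub-path $q_{j^*}-\cdots-q_\ell$ has odd-degree endpoints and all-even interior, so family (iii) applied to it gives $y_{q_{j^*}}+y_{b^*}=0$, i.e.\ $y_{b^*}=y_a$, contradicting the choice of $b^*$. Therefore $W\supseteq\ker\ell$ has dimension at least $|V|-1$, and since the good vectors span $W$, one can extract from them $|V|-1$ linearly independent ones $\mathbf 1_{T_1},\ldots,\mathbf 1_{T_{|V|-1}}$; the stabilisers $\cptp S_{T_1},\ldots,\cptp S_{T_{|V|-1}}$ are then the desired generators. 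The main obstacle is exactly this last connectivity argument, which pins down the single unavoidable obstruction $\mathbf d$ — and it is no accident that $\mathbf d$ indexes precisely the deviation $\prod_{\deg v\text{ odd}}\Z_v$ shown to be harmless in Lemma~\ref{lemma:harmless}.
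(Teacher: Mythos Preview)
Your proof is correct and rests on exactly the same three families of $\Z$-free stabilisers as the paper's own argument: the full product $\cptp S_V$, the punctured products $\cptp S_{V\setminus\{v\}}$ for even-degree $v$, and $\cptp S_{V\setminus P}$ for induced paths $P$ with odd-degree endpoints and even-degree interior. The difference is in how the dimension count is carried out. The paper works primally: it shows that the subgroup generated by these elements contains every $\cptp S_v$ for even-degree $v$ and every $\cptp S_u\cptp S_w$ for odd-degree pairs, hence already has order $2^{|V|-1}$. You work dually: you show that any $\mathbf y$ orthogonal to all good indicator vectors must lie in $\{0,\mathbf 1_{V_{\mathrm{odd}}}\}$, whence $\dim W\ge |V|-1$. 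These are two sides of the same coin, but your $W^\perp$ packaging is tidy and your shortest-path/minimal-counterexample treatment of the connectivity step is more careful than the paper's rather terse chain argument. Your explicit caveat about connectedness is also well taken: the paper glosses over it, yet for a graph with several components each containing odd-degree vertices the bound $|V|-1$ genuinely fails, so the hypothesis is needed.
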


\begin{proof}
  We proceed constructively and exhibit a set of $|V|-1$ generators of
  $\grp R$, subgroup of $\grp S$, and show that $|\grp R| =
  2^{|V|-1}$.

  We start with one such stabiliser, $\cptp R_\text{full} = \prod_v
  \cptp S_v$. This follows simply from
  \begin{equation}
    \cptp R_\text{full}(v) = \X \Z^{\deg(v)},
  \end{equation}
  as for qubit $v$, $\cptp S_v$ contributes to the $\X$ and all
  neighbours contribute a $\Z$ each. Additionally, this shows that for vertices
  $v$ of even degree $\cptp R_{\setminus v} = \prod_{w \in V\setminus v}
  \cptp S_v = \cptp R_\text{full} \cptp S_v$ is also a tensor product
  of $\I$, $\X$, $\Y$. This is because removing $\cptp S_v$ from
  $\cptp R_\text{full}$ leaves an $\I$ at $v$ and changes by one the
  number of $\Z$s on the neighbours of $v$. Unfortunately, removing
  $\cptp S_v$ for $v$ of odd degree leaves a $\Z$ at $v$. To further
  remove this unwanted $\Z$, one can also remove one stabiliser $\cptp
  S_w$ from a neighbouring node $w$ of $v$ from the product. If, in
  addition, $w$ is of odd degree, then the obtained stabiliser will be
  a tensor product of $\I$, $\X$, $\Y$ only. The reason is that at
  $w$, one $\Z$ has been removed when $\cptp S_v$ was removed from
  $\cptp R_\text{full}$ thereby leaving an $\X$ at $w$, so that
  removing $\cptp S_w$ leaves an $\I$. In the general case, one can
  always remove from $\cptp R_\text{full}$ the stabilisers $\cptp S_v$
  along a chain between $u$ and $w$ consisting of even degree nodes
  except for $u$ and $w$ that are odd degree. We denote by $\cptp
  R_{\setminus (u,w)}$ such generator. Note that a given odd-degree
  node will always be in at least one such stabiliser as there are
  always an even number of odd degree nodes in a connected component
  of a graph.

  Now define the group $\grp R$ generated by $\cptp R_\text{full}$,
  $\cptp R_{\setminus v}$ and $\cptp R_{\setminus(u,w)}$ above. Notice
  that multiplying $\cptp R_\text{full}$ with $\cptp R_{\setminus v}$
  gives $\cptp S_v$, so that $S_v$ is in $\grp R$ for even $\deg(v)$.
  Similarly, multiplying $\cptp R_\text{full}$ with $\cptp
  R_{\setminus (u,w)}$ and $\cptp S_v$ for $v$ an even-degree node
  linking $u$ to $w$ shows that any $\cptp S_u \cptp S_w$ with $u$ and
  $w$ odd-degree nodes are also in $\grp R$. Therefore, $\grp R$
  contains all stabilisers that have an arbitrary number of
  even-degree node and an even number of odd-degree ones. Counting the
  number of such stabilisers gives $2^{|V|-1}$ while we know that the
  size of $\grp S$ is $2^{|V|}$, which concludes the proof.
\end{proof}

We now consider the trappified scheme $\sch P$ that can be obtained by
sampling uniformly at random from all these traps rounds. We can
characterise the errors that can be detected by $\sch P$ and those to
which it is insensitive using properties of stabilisers. To this end,
recall that if a Pauli error $\cptp E$ is applied right before the
measurement of a $2$-outcome observable $\cptp M$, then (i) the
measurement outcome probabilities are unchanged if $[\cptp E,\cptp M]
= 0$, and (ii) are swapped for $\{\cptp E, \cptp M\} = 0$. Hence,
whenever $\cptp E$ commutes with $\bigotimes_{v\in\one_{\cptp R}}
\X(v)$ the trap never detects $\cptp E$, whereas it always detects it
whenever it anticommutes. As a consequence, the set of detectable
errors is the set of errors that anticommute with at least one of the
$\bigotimes_{v\in\one_{\cptp R}} \X(v)$ for $\cptp R$ a dummyless trap
measurement.

Hence, for an error $E = \cptp E_Z \cptp E_X$ we need to assess
whether there exists at least one $\cptp R$ in $\sch P$ such that
$|\one_{\cptp R} \cap \one_{\cptp E_Z}| \equiv 1 \pmod 2$ -- where we
have implicitely defined $\cptp E_Z$ (resp.~$\cptp E_X$) as the
operators made of $\Z$s at location of $\Y$ or $\Z$ qubits in $\cptp
E$ (resp.~$\X$ or $\Y$ qubits). To this end, consider $\cptp F$ such
that $\cptp U_G \cptp F = \cptp E \cptp U_G$ where $\cptp U_G$ is the
entangling operation for creating the graph state. Because a trap
amounts to measuring the corresponding stabiliser before the entangling
operation, the above question amounts to knowing whether $\cptp F$
commutes with the stabilisers used to define the dummyless traps of
$\sch P$. Alternatively, we can answer this question by finding out
which Pauli operations commute with all stabilisers defining the
dummyless traps while not being a product of them.

Using Lemma~\ref{lem:stab_tests}, there is one generator $\cptp S_0$
of $\grp S$ that is not in $\grp R$ and such that all errors that
commute with $\grp R$ and are not in $\grp R$ are of the form $\cptp
S_0 \grp R$. From the above description of $\grp R$, $\cptp S_0$ can
be taken as being equal to $\Z$ on all odd-degree nodes. $\cptp S_0$
commutes with all elements of $\grp R$ since they have an even number
of $\cptp S_v$ for $v$ odd-degree, and it is not in $\grp R$ as $\grp
R$ has no element with $\Z$s only. Yet, Lemma~\ref{lemma:harmless}
shows that while $\cptp S_0$ cannot be detected, it is indeed harmless
for the computation.

Hence, we are led to conclude that all possibly harmful errors are
detected by the trappified scheme $\sch P$. Using
\S~\ref{subsec:framework}, we conclude that
\begin{theorem}
Let $G=(V,E)$ be a graph, and $\sch P$ the trappified scheme on $G$
defined by sampling at random from a generating set of $\grp R$ containing only stabilisers with no $\Z$s. Then,
$\sch P$ constructs the SDQC Resource~\ref{res:sdqc} for $\BQP$
computations that can be embedded on the graph $G$ with negligible
correctness and security errors.
\end{theorem}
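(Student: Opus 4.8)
The plan is to deduce the statement directly from Theorem~\ref{thm:verif}, so the whole argument reduces to specifying the trappified scheme $\sch P$ and verifying that theorem's hypotheses with $\mathfrak C$ the class of classical-input classical-output $\BQP$ computations that embed on $G$. For $\sch P$ I would take precisely the announced construction: by Lemma~\ref{lem:stab_tests} the subgroup $\grp R\le\grp S$ admits a generating set $\{\cptp R_1,\dots,\cptp R_{|V|-1}\}$ whose elements are all tensor products of $\I,\X,\Y$ only, and $\sch P$ samples one such generator $\cptp R_i$ uniformly at random and runs the associated stabiliser trap of \S~\ref{subsec:framework} --- the Client prepares a $+1$ eigenstate of $\cptp R_i$, the Server is instructed to measure every qubit in the $\X$-basis, and the test is accepted iff $\bigoplus_{v\in\one_{\cptp R_i}} t(v) = 1$. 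First I would record that this is genuinely dummyless: as each $\cptp R_i$ is $\Z$-free, every local factor $\cptp R_i(v)$ lies in $\{\pm\I,\pm\X,\pm\Y\}$, so its $+1$ eigenstate is an $X$--$Y$-plane state, and UBQC computation rounds already require only $\ket{+_\theta}$ states.

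Next I would pin down the detection and insensitivity sets via the stabiliser-commutation picture of \S~\ref{subsec:framework}. After the reduction to Pauli deviations, a deviation $\cptp E = \cptp E_X\cptp E_Z$ applied just before the $\X$-basis measurements influences a given trap only through its $\Z$-part $\cptp E_Z$ (the $\X$-part being absorbed by the measurement), and the trap for $\cptp R_i$ rejects exactly when $|\one_{\cptp E_Z}\cap\one_{\cptp R_i}|$ is odd. Reading supports as vectors over $\mathbb F_2^{|V|}$, the deviations left undetected by $\sch P$ are therefore those whose $\one_{\cptp E_Z}$ is orthogonal to every $\one_{\cptp R_i}$; since the $\one_{\cptp R_i}$ span the index space of $\grp R$ --- the stabilisers built from an even number of canonical generators $\cptp S_v$ at odd-degree vertices --- this orthogonal complement is the single line spanned by the indicator of the odd-degree vertices. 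Hence, as in the discussion preceding the theorem and up to an always-harmless $\X$-type factor, the undetected deviations are exactly the trivial one and $\cptp S_0 = \prod_{\deg(v)\,\mathrm{odd}}\Z_v$. I would set $\mathcal E_1 = \{\cptp E : \cptp E_Z\notin\{\Id,\cptp S_0\}\}$ and $\mathcal E_2 = \mathcal G_V\setminus\mathcal E_1$, then check: (i) each $\cptp E\in\mathcal E_1$ has odd overlap with $\one_{\cptp R_i}$ for at least one $i$, so a single test round rejects it with probability at least $1/(|V|-1)$, giving $\epsilon = 1-1/(|V|-1)$, a constant strictly below $1$ since $G$ is fixed; (ii) every $\cptp E\in\mathcal E_2$ commutes with all trap observables --- using exactly that $\cptp S_0$ overlaps each $\cptp R_i$ in an even number of odd-degree vertices --- so $\sch P$ is $0$-insensitive to $\mathcal E_2$, in particular perfectly insensitive to $\Id\in\mathcal E_2$; and (iii) an honest Server passes every trap deterministically, as the Client measures a stabiliser it has prepared in the corresponding $+1$ eigenstate.

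The step that I expect to carry the real weight is the correctness hypothesis of Theorem~\ref{thm:verif}: every computation in $\mathfrak C$ must be correct on all of $\mathcal E_2$. For $\cptp E = \cptp E_X\cptp E_Z\in\mathcal E_2$ the factor $\cptp E_X$ has no effect on the $\X$-basis measurements, while $\cptp E_Z\in\{\Id,\cptp S_0\}$, and $\cptp S_0$ is precisely the application of $\Z$ to every odd-degree vertex of $G$ before the measurements --- which Lemma~\ref{lemma:harmless} shows leaves the result of any classical-I/O MBQC on $G$ unchanged (this being the $F_A$-invariance of Lemma~\ref{lemma:fa_harmless} together with its unitary realisation in Lemma~\ref{lemma:fv_as_unitary}). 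Thus $\mathcal E_2$ contains only harmless deviations, i.e.~every potentially harmful deviation lies in $\mathcal E_1$. This is exactly where dummyless verification could a priori have failed --- the $\Z$-free stabiliser tests provably cannot catch $\cptp S_0$ --- so the theorem survives only because of the MBQC invariance established in \S~\ref{sec:natural_invariance_mbqc}. For disconnected $G$ I would run the same argument per connected component, where the undetected deviations become the $\cptp S_0$'s of arbitrary unions of components, each still harmless by the general form of Lemma~\ref{lemma:harmless}.

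To conclude, I would fix $N = s+d$ with $s$ and $d$ proportional to $N$ and choose $w$ a positive integer with $w < \tfrac{2c-1}{2c-2}\,s(1-\epsilon)$, which is feasible for $s$ large enough because $1-\epsilon$ and $\tfrac{2c-1}{2c-2}$ are positive constants, and then invoke Theorem~\ref{thm:verif}. It yields that Protocol~\ref{proto:dev_detect} instantiated with $\sch P$ $\eta(N)$-constructs the SDQC Resource~\ref{res:sdqc} for $\mathfrak C$ with leakage $l_\rho = (\mathfrak C, G, \sch P, \preceq_G)$ and $\eta(N)$ negligible in $N$ --- which is the asserted construction with negligible correctness and security errors.
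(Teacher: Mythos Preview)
Your proposal is correct and follows essentially the same approach as the paper: you reduce to Theorem~\ref{thm:verif} by using Lemma~\ref{lem:stab_tests} to supply the $\Z$-free generating set for $\sch P$, the stabiliser-commutation argument to identify $\mathcal E_1$ and $\mathcal E_2$, and Lemma~\ref{lemma:harmless} to dispatch the one undetected nontrivial deviation $\cptp S_0$. Your write-up is in fact more explicit than the paper's --- you spell out the detection bound $\epsilon = 1 - 1/(|V|-1)$, the parameter feasibility for $w$, and the disconnected-graph case --- but the logical skeleton is identical.
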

This follows from the fact that Theorem~\ref{thm:verif} states that a
secure verification scheme can be built from a trappified scheme that
1) detects a specific set $\mathcal{E}$ of $\cptp Z$-Pauli errors, and
2) correctly evaluates the target computation in the presence of any
other $\cptp Z$-Pauli error in $\mathcal{G}_V^{\cptp Z} \setminus
\mathcal{E}$. Lemma~\ref{lemma:harmless} then shows that there is a
specific error $\cptp E^\ast$ which never affects the output
distribution of the target computation and which therefore does not
need to be detected. It hence suffices to find a dummyless trappified
scheme detecting $\mathcal{E} = \mathcal{G}_V^{\cptp Z} \setminus \{
\cptp I, \cptp E^\ast \}$. As shown with Lemma~\ref{lem:stab_tests},
it is indeed possible to find such a trappified scheme. Therefore,
this settles the question whether dummyless verification for $\BQP$ is
possible by the affirmative.

\subsection{Concrete Dummyless Tests}
\label{subsec:concrete-tests}

The previous subsection left open how to concretely construct the
trappified scheme $\sch P$. More precisely, since the efficiency of
the resulting SDQC protocol is tightly linked to the detection
rate of the trappified scheme, it is important to 
minimise its detection, insensitivity and correctness errors. In this section, we discuss the
question of optimising the detection rate. In particular, we construct
concrete dummyless trappified schemes for universal $\BQP$
computations with constant detection rates, independent of the size of
the computation.

\cite{KKLM+22:framework} shows that the general optimisation problem of maximising the detection rate can be expressed in the language of linear programming. Adapted to the case of dummyless trappified schemes, we recall it in the following, as Problem~\ref{prob:optimization_distribution}.
\begin{problem}
\caption{Optimisation of the Distribution of Tests}\label{prob:optimization_distribution}
	\textbf{Given}
	\begin{itemize}
		\item the set of errors $\mathcal{E} = \mathcal{G}_V^{\cptp Z} \setminus \{ \cptp I, \cptp E^\ast \}$ to be detected,
		\item the set of dummyless tests $\mathcal{T}_\text{dummyless}$,
		\item the relation between tests and errors describing whether a test detects an error, $R : \mathcal{T}_\text{dummyless} \times \mathcal{E} \to \bin$,
	\end{itemize}
	\textbf{find} an optimal distribution $p: \mathcal{T}_\text{dummyless} \to [0,1]$ \textbf{maximising} the detection rate $\epsilon \in [0,1]$ \textbf{subject to} the following conditions:
	\begin{itemize}
		\item $p$ describes a probability distribution, i.e. $\sum_{T\in\mathcal{T}_\text{dummyless}} p(T) \leq 1$,
		\item errors are detected at least with the target detection rate, i.e.
			\begin{align*}
				\forall E \in \mathcal{E} : \quad \sum_{\substack{T\in\mathcal{T}_\text{dummyless}\\R(T,E)=1}} p(T) \geq \epsilon.
			\end{align*}
	\end{itemize}
\end{problem}

For any feasible solution to Problem~\ref{prob:optimization_distribution}, the trappified scheme induced by the given distribution of tests gives rise to a secure dummyless SDQC protocol if and only if the detection rate satisfies $\epsilon > 0$.

Recall from Section~\ref{sec:natural_invariance_mbqc} the structure of the harmless error:
\begin{align*}
	\cptp E^\ast = \prod_{\substack{v\in V(G), \\ \operatorname{deg}(v) \equiv 1 \hspace*{-7pt} \pmod{2}}} \cptp Z_v.
\end{align*}
Further, as described in Section~\ref{sec:dummyless_verification}, the set of dummyless tests can be expressed as:
\begin{align*}
	\mathcal{T}_\text{dummyless} = \left\{ \left. \prod_{v \in V_\text{trap}} \X_v \prod_{w \in N_G(v)} \Z_w \; \right| \; V_\text{trap} \subseteq V ,\, \forall v \not\in V_\text{trap} : | N_G(v) \cap V_\text{trap} | \equiv 0 \hspace*{-5pt}\pmod{2} \right\}.
\end{align*}
The last condition ensures that there are no vertices with a single $\Z$ in the respective stabiliser. In this way, every test can be identified with the subset of vertices which act as traps, or equivalently with the complement, the subset of vertices which act as \emph{holes}, i.e. vertices on which the respective stabiliser equals the identity and which can therefore be ignored by the decision function of the trappified scheme. In the following we will also write $V_\text{trap}(T)$ and $V_\text{hole}(T)$ as shorthands for these two sets of vertices.

Analogously, we write $V_\text{error}(\cptp E)$ for the set of vertices on which the error $\cptp E$ is not equal to the identity (and therefore equals the Pauli $\Z$). This makes it easy to give a short description of the relation $R$:
\begin{align*}
	R : (T,\cptp E) \mapsto | V_\text{error}(\cptp E) \cap V_\text{trap}(T) | \hspace*{-5pt} \pmod{2}.
\end{align*}

\paragraph*{Handling Errors on Even-degree Vertices.} As described in Section~\ref{sec:dummyless_verification}, for all even-degree vertices $v\in V$, the test $T$ with $V_\text{hole}(T) = \{ v \}$ is indeed dummyless. Generalising this concept, for any independent set $V^\ast$ of even-degree vertices, we can define a dummyless test $T$ with $V_\text{hole}(T) = V^\ast$. Similarly to the construction of tests in \cite{KKLM+22:framework}, any (fractional) colouring of the vertices of a graph $G$ gives rise to a distribution of independent sets of $G$, and therefore also a distribution of independent sets of even-degree vertices and tests. To this end, let $\mathcal{D}$ be a distribution of independent sets of $G$ such that
\begin{align*}
	\forall v \in V : \Pr_{I \gets \mathcal{D}} \left[ v \in I \right] \geq \frac{1}{\chi_f(G)},
\end{align*}
where $\chi_f(G)$ is the fractional chromatic number of $G$. This distribution exists by definition of the fractional chromatic number. Consider the test strategy given by the distribution $\mathcal{D}_\text{even}$ of tests in $\mathcal{T}_\text{dummyless}$ described as follows:
\begin{enumerate}
	\item Sample an independent set: $V_1 \gets \mathcal{D}$.
	\item Restrict the set to even-degree vertices: $V_2 = V_1 \cap V_\text{even}(G)$, where $V_\text{even}(G) = \{ v \in V \; | \; \deg(v) \equiv 0 \pmod{2} \}$.
	\item Choose a random subset to determine the location of holes: $V_3 \gets \mathcal{U}\left(\wp(V_2)\right)$.
	\item Perform the dummyless test $T$ determined by $V_\text{hole}(T) = V_3$.
\end{enumerate}
As the following Lemma shows, this strategy allows for a detection rate of errors that affect even-degree vertices that scales inversely with the fractional chromatic number of the graph.
\begin{lemma}[Even-degree Error Detection]\label{lemma:even_degree_detection}
	The above-mentioned test strategy $\left(\frac{1}{2\chi_f(G)}\right)$-detects the error set $\mathcal{E}_\text{even} = \{ E \in \mathcal{G}_V^{\cptp Z} \; | \; V_\text{error}(E) \cap V_\text{even} \neq \emptyset \}$, i.e.
	\begin{align*}
		\forall E \in \mathcal{E}_\text{even} : \quad \mathbb{E}_{T \gets \mathcal{D}_\text{even}} \left[ |V_\text{error}(E) \cap V_\text{trap}(T)| \equiv 1 \hspace*{-5pt} \pmod{2} \right] \geq \frac{1}{2\chi_f(G)}.
	\end{align*}
\end{lemma}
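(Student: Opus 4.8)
The plan is to split the detection probability into a ``meeting'' event, controlled by the independent set drawn in steps~1--2 of the strategy, and a ``parity randomisation'' event, controlled by the uniform subset drawn in step~3, and to bound these two contributions separately. Since the sampled test $T$ has $V_\text{hole}(T) = V_3$, its trap set is $V_\text{trap}(T) = V \setminus V_3$, so $R(T,E) = |V_\text{error}(E) \setminus V_3| \bmod 2$, and writing $|V_\text{error}(E) \setminus V_3| = |V_\text{error}(E)| - |V_\text{error}(E) \cap V_3|$ we see that detection occurs precisely when $|V_\text{error}(E) \cap V_3|$ has parity opposite to the fixed number $|V_\text{error}(E)|$. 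Before carrying this out one should check that every $T$ produced by the strategy really lies in $\mathcal{T}_\text{dummyless}$: for $v \in V_3$, all neighbours of $v$ lie outside the independent set $V_1 \supseteq V_3$, hence $|N_G(v) \cap V_\text{trap}(T)| = \deg(v)$, which is even since $V_3 \subseteq V_\text{even}(G)$ --- this is exactly the observation already used in Section~\ref{sec:dummyless_verification}.

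The combinatorial core is the elementary fact that the parity of a uniformly random subset of a \emph{nonempty} finite set is an unbiased bit: for $S \neq \emptyset$ and any $s_0 \in S$, the map $W \mapsto W \triangle \{ s_0 \}$ is a parity-reversing bijection of $\wp(S)$. I would apply this with $S = V_\text{error}(E) \cap V_2$ and $W = V_\text{error}(E) \cap V_3$; since $V_3$ is uniform on $\wp(V_2)$, intersecting with the fixed set $V_\text{error}(E)$ makes $W$ uniform on $\wp(S)$. Hence, conditioned on the event $\mathcal{A} := \{ V_\text{error}(E) \cap V_2 \neq \emptyset \}$, the parity of $|V_\text{error}(E) \cap V_3|$ is uniform and independent of $|V_\text{error}(E)|$, so $\Pr[R(T,E) = 1 \mid \mathcal{A}] = \tfrac12$ and therefore $\Pr[R(T,E) = 1] \geq \tfrac12 \Pr[\mathcal{A}]$.

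It remains to bound $\Pr[\mathcal{A}]$ from below, which is where the fractional chromatic number enters. Because $E \in \mathcal{E}_\text{even}$ there exists a vertex $v^\ast \in V_\text{error}(E) \cap V_\text{even}(G)$; whenever the sampled independent set contains $v^\ast$ we have $v^\ast \in V_1 \cap V_\text{even}(G) = V_2$, hence $v^\ast \in V_\text{error}(E) \cap V_2$ and $\mathcal{A}$ occurs. By the defining property of $\mathcal{D}$ we have $\Pr_{V_1 \gets \mathcal{D}}[v^\ast \in V_1] \geq 1/\chi_f(G)$, so $\Pr[\mathcal{A}] \geq 1/\chi_f(G)$, and combining with the previous paragraph gives $\Pr[R(T,E) = 1] \geq \tfrac{1}{2\chi_f(G)}$, which is the claimed bound.

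I do not expect a genuine obstacle here; the only point requiring care is the conditioning bookkeeping --- one must see that the factor $\tfrac12$ depends only on the fresh randomness of step~3 and is conditionally independent of $V_2$ (equivalently of $V_1$), while the factor $1/\chi_f(G)$ depends only on steps~1--2, so that the tower rule applies. Since $\mathcal{A}$ is a deterministic function of $V_1$, this is immediate, and the rest is just the parity bijection and the definition of the fractional chromatic number.
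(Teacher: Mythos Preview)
Your proposal is correct and follows essentially the same approach as the paper: factor the detection probability into the ``meeting'' event $\mathcal A = \{V_\text{error}(E)\cap V_2\neq\emptyset\}$, bounded below by $1/\chi_f(G)$ via a single witness vertex $v^\ast$, and the parity-randomisation over the uniform subset $V_3$, which contributes the factor $\tfrac12$. Your write-up is in fact more careful than the paper's on one point: the paper's displayed inequality tacitly replaces the event $|V_\text{error}(E)\cap V_\text{trap}(T)|\equiv 1$ by $|V_\text{error}(E)\cap V_3|\equiv 1$, whereas you correctly observe that these differ by the fixed parity of $|V_\text{error}(E)|$ and that the $\tfrac12$ bound holds for either target parity because the uniform-subset parity is unbiased.
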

\begin{proof}
	Let $E \in \mathcal{E}_\text{even}$. Then, by definition of the test distribution, it holds that
	\begin{align*}
		&\mathbb{E}_{T \gets \mathcal{D}_\text{even}} \left[ |V_\text{error}(E) \cap V_\text{trap}(T)| \equiv 1 \hspace*{-5pt} \pmod{2} \right] \\
		&\geq \mathbb{E}_{V_3 \gets \mathcal{U}(\wp(V_2))} \left[ |V_\text{error}(E) \cap V_3| \equiv 1 \hspace*{-5pt} \pmod{2} \; | \; V_\text{error}(E) \cap V_2 \neq \emptyset \right] \\
		&\qquad\qquad \cdot \Pr_{V_1 \sample \mathcal{D}} \left[ V_\text{error}(E) \cap V_1 \cap V_\text{even}(G) \neq \emptyset \right] \\
		&\geq \frac{1}{2} \cdot \frac{1}{\chi_f(G)},
	\end{align*}
	which concludes the proof.
\end{proof}

\paragraph{Handling Errors on Odd-degree Vertices.} Since all errors acting non-trivially on even-degree vertices are already handled in the previous case, it remains to detect errors that affect only odd-degree vertices and act as the identity on even-degree vertices.

To this end, we construct a specific type of test. For $k\geq 2$, let $(v_1,\dots,v_k) \in V^k$ be a chain of vertices in $G$ satisfying the following conditions:
\begin{enumerate}
	\item The end vertices are of odd degree: $\deg(v_1) \equiv \deg(v_k) \equiv 1 \pmod{2}$.
	\item All intermediate vertices are of even degree: $\deg(v_2) \equiv \dots \equiv \deg(v_{k-1}) \equiv 0 \pmod{2}$.
	\item Only subsequent vertices are neighbours in $G$:
		\begin{align*}
			\forall i,j \in \{1,\dots,k\} : \quad \{v_i,v_j\} \in E(G) \Leftrightarrow |i-j| = 1.
		\end{align*}
\end{enumerate}
It is easy to verify that under these conditions there exists a valid dummyless test $T$ with $V_\text{hole}(T) = \{ v_1, \dots, v_k \}$. Note, that there might not be a chain of this type in $G$ for any pair of odd-degree vertices as end points. However, it is possible to connect any two odd-degree vertices through a chain of chains that might traverse other odd-degree vertices at the end and starting points of chains. In this way, it is possible to choose a ``spanning tree'' of $(|V_\text{odd}(G)|-1)$ chains that connects all odd-degree vertices in the graph $G$.

Define the set of errors on odd-degree nodes only as $\mathcal{E}_\text{odd} = \{ E \in \mathcal{G}_V^{\cptp Z} \; | \; V_\text{error}(\cptp E) \cap V_\text{even}(G) = \emptyset \; \wedge \; V_\text{error}(\cptp E) \cap V_\text{odd}(G) \neq \emptyset \}$ and let $\cptp E \in \mathcal{E}_\text{odd} \setminus \{\cptp E^\ast\}$. Then, there must exist two odd-degree vertices $v_1 \in V_\text{error}(\cptp E)$ and $v_2 \not\in V_\text{error}(\cptp E)$. But then at least one of the chains connecting $v_1$ and $v_2$ with start in a vertex affected by the error $\cptp E$ and end in a vertex unaffected by $\cptp E$. Since all intermediate vertices are of even degree and therefore unaffected by $\cptp E$, the test given by this chain detects $\cptp E$. This essentially shows the following statement.
\begin{lemma}[Odd-degree Error Detection]\label{lemma:odd_degree_detection}
	There exists an efficient testing strategy that $\left( \frac{1}{|V_\text{odd}(G)|-1} \right)$-detects errors in $\mathcal{E}_\text{odd} \setminus \{\cptp E^\ast\}$.
\end{lemma}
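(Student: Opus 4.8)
The statement is essentially the content of the paragraph preceding it, and I would make it rigorous in three moves: (i) check that each ``chain test'' genuinely belongs to $\mathcal{T}_\text{dummyless}$; (ii) organise the chains into a spanning tree of the odd-degree vertices so that only $|V_\text{odd}(G)|-1$ of them are used; (iii) show that each error in $\mathcal{E}_\text{odd}\setminus\{\cptp E^\ast\}$ is flagged by at least one of these chains.

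For (i), fix a chain $(v_1,\dots,v_k)$ obeying conditions 1--3 and let $T$ be the test with $V_\text{hole}(T)=\{v_1,\dots,v_k\}$, i.e.\ with stabiliser $\cptp R_T=\prod_{w\notin\{v_1,\dots,v_k\}}\cptp S_w$. Its membership in $\mathcal{T}_\text{dummyless}$ is exactly the requirement that $|N_G(v)\cap V_\text{trap}(T)|\equiv 0\pmod 2$ for every $v\in V_\text{hole}(T)$; for $v=v_i$ this count equals $\deg(v_i)-|N_G(v_i)\cap\{v_1,\dots,v_k\}|$, and since by condition 3 the chain-neighbours of $v_i$ are exactly $v_{i\pm1}$, the count is $\deg(v_1)-1$ and $\deg(v_k)-1$ at the endpoints and $\deg(v_i)-2$ at an interior vertex, all even precisely by conditions 1 and 2. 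So conditions 1--3 are the exact translation of ``$T$ is a valid dummyless test''. I would record for later that the interior vertices $v_2,\dots,v_{k-1}$ are of even degree, hence untouched by any $\cptp E\in\mathcal{E}_\text{odd}$, so that for such $\cptp E$ one gets $R(T,\cptp E)\equiv|V_\text{error}(\cptp E)|-|V_\text{error}(\cptp E)\cap\{v_1,v_k\}|\pmod 2$.

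For (ii), I would use that every shortest path in $G$ is an induced path: cutting a shortest path between two odd-degree vertices $u$ and $w$ at the odd-degree vertices it visits yields a sequence of valid chains, so any two odd-degree vertices in the same component of $G$ are joined by a ``chain of chains''. Hence the auxiliary graph on $V_\text{odd}(G)$ whose edges are the valid chains of $G$ is connected on every component of $G$; fixing a spanning tree (forest) of it picks out at most $|V_\text{odd}(G)|-1$ chains, and the testing strategy samples one of them uniformly. This is efficient — the chains and the spanning tree come from breadth-first search plus union--find, and the decision function is the parity $R$. For (iii), fix $\cptp E\in\mathcal{E}_\text{odd}\setminus\{\cptp E^\ast\}$; its support $S=V_\text{error}(\cptp E)$ is nonempty (it lies in $\mathcal{E}_\text{odd}$) and, since $\cptp E^\ast$ has support exactly $V_\text{odd}(G)$, properly contained in $V_\text{odd}(G)$. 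A nonempty proper vertex subset of a tree is separated from its complement by some tree edge, so there is a chain $T^\ast=(v_1,\dots,v_k)$ in the spanning tree with $v_1\in S$ and $v_k\notin S$; as $v_2,\dots,v_{k-1}$ are even-degree they lie outside $S$, whence $|S\cap\{v_1,v_k\}|=1$ and the formula from (i) gives $R(T^\ast,\cptp E)\equiv|S|-1\pmod 2$. Since $T^\ast$ is sampled with probability at least $1/(|V_\text{odd}(G)|-1)$, this is the claimed detection rate.

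The delicate point — and the reason the preceding paragraph only says ``essentially'' — is the parity bookkeeping in (iii): the computation above yields $R(T^\ast,\cptp E)=1$ only when $|S|$ is even, so the chain strategy has to be taken together with the trivially dummyless all-vertices test (stabiliser $\cptp R_\text{full}$, which flags every $\Z$-error of odd total weight and is included in the overall scheme anyway) to cover the odd-weight errors of $\mathcal{E}_\text{odd}$; and for disconnected $G$ the single excluded error $\cptp E^\ast$ must be replaced by the family of ``component-union'' $\Z$-errors of Section~\ref{sec:natural_invariance_mbqc}, which are exactly the errors for which no crossing chain exists and exactly the ones Lemma~\ref{lemma:harmless} shows to be harmless. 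Getting this combination right, rather than the individual chain construction, is where I expect the real work to be.
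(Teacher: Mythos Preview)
Your argument follows exactly the paper's approach --- chain tests with holes along an induced path between two odd-degree vertices, organised into a spanning tree of the auxiliary graph on $V_\text{odd}(G)$ --- and your verification in (i) that such tests lie in $\mathcal{T}_\text{dummyless}$ is the explicit version of what the paper leaves as ``easy to verify''. Your use of shortest paths to show that the auxiliary graph is connected is also a clean way to justify the existence of the spanning tree.

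Where you go beyond the paper is the parity bookkeeping in (iii), and you are right to flag it. The paper asserts that a crossing chain ``detects $\cptp E$'' without qualification, but your computation $R(T^\ast,\cptp E)\equiv |S|-1\pmod 2$ is correct: a single crossing chain only detects errors in $\mathcal{E}_\text{odd}$ of \emph{even} weight, while for odd $|S|$ it is the non-crossing chains (both endpoints in $S$, or both outside) that give $R=1$, and a spanning tree need not contain any of these --- take a star centred at the unique vertex of $S$. Your patch of throwing in $\cptp R_\text{full}$ is the natural one and does cover the odd-weight case. Note only that folding $\cptp R_\text{full}$ into the uniform mixture over the $|V_\text{odd}(G)|-1$ chains yields a rate of $1/|V_\text{odd}(G)|$ rather than the stated $1/(|V_\text{odd}(G)|-1)$, so the constant in the lemma is not literally recovered by this fix; the discrepancy is harmless for the downstream general-graph bound, but you should say so explicitly rather than appeal to $\cptp R_\text{full}$ being ``included anyway''. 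Your remark on disconnected $G$ is likewise a genuine refinement that the paper does not address.
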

Combining the testing strategies from Lemma~\ref{lemma:even_degree_detection} and Lemma~\ref{lemma:odd_degree_detection} immediately yields the following result for testing strategies on general graphs.
\begin{lemma}[Error Detection on General Graphs]
	For any graph $G$, there exists an efficient testing strategy that $\varepsilon$-detects $\mathcal{E} = \mathcal{G}_V^{\cptp Z} \setminus \{\cptp I, \cptp E^\ast\}$, where
	\begin{align*}
		&\varepsilon = \frac{1}{2\chi_f(G)(|V_\text{odd}(G)|-1)} \left( \frac{1}{2\chi_f(G)} + \frac{1}{|V_\text{odd}(G)|-1} \right)^{-1} \\
		&\qquad\qquad \geq \frac{1}{2} \min \left\{ \frac{1}{2\chi_f(G)}, \frac{1}{|V_\text{odd}(G)|-1} \right\}.
	\end{align*}
\end{lemma}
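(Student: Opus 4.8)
The plan is to combine the two detection strategies by running one of them at random, and then account for the cost of this randomisation in the detection rate. Concretely, I would define a new test distribution $\mathcal{D}_\text{general}$ that, with probability $q$, samples a test from $\mathcal{D}_\text{even}$ (the even-degree strategy of Lemma~\ref{lemma:even_degree_detection}), and with probability $1-q$ samples a test from the odd-degree strategy of Lemma~\ref{lemma:odd_degree_detection}. The key observation is that the error set $\mathcal{E} = \mathcal{G}_V^{\cptp Z}\setminus\{\cptp I, \cptp E^\ast\}$ decomposes as $\mathcal{E}_\text{even}\cup(\mathcal{E}_\text{odd}\setminus\{\cptp E^\ast\})$: any nontrivial $\Z$-error either acts nontrivially on at least one even-degree vertex (hence lies in $\mathcal{E}_\text{even}$ and is detected by $\mathcal{D}_\text{even}$), or it acts only on odd-degree vertices, and since it is not $\cptp E^\ast$ it must be distinct from the full odd-degree error and hence lies in $\mathcal{E}_\text{odd}\setminus\{\cptp E^\ast\}$, where it is detected by the odd strategy.

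Next I would compute the detection rate of the mixture. An error in $\mathcal{E}_\text{even}$ is detected with probability at least $q\cdot\frac{1}{2\chi_f(G)}$, since only the even-branch contributes a guaranteed detection (the odd branch may or may not help, but we do not rely on it). An error in $\mathcal{E}_\text{odd}\setminus\{\cptp E^\ast\}$ is detected with probability at least $(1-q)\cdot\frac{1}{|V_\text{odd}(G)|-1}$ by the same reasoning. The worst-case detection rate over all of $\mathcal{E}$ is therefore $\min\{q a, (1-q) b\}$ with $a = \frac{1}{2\chi_f(G)}$ and $b = \frac{1}{|V_\text{odd}(G)|-1}$. Optimising over $q\in[0,1]$, this minimum is maximised when $q a = (1-q) b$, i.e.\ $q = \frac{b}{a+b}$, giving the balanced value $\varepsilon = \frac{ab}{a+b}$. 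Substituting $a$ and $b$ back in yields exactly the stated closed form $\varepsilon = \frac{1}{2\chi_f(G)(|V_\text{odd}(G)|-1)}\left(\frac{1}{2\chi_f(G)} + \frac{1}{|V_\text{odd}(G)|-1}\right)^{-1}$.

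The final inequality $\varepsilon \geq \frac{1}{2}\min\{a,b\}$ then follows from the elementary bound $\frac{ab}{a+b} = \frac{1}{\frac1a + \frac1b} \geq \frac{1}{2\max\{\frac1a,\frac1b\}} = \frac{\min\{a,b\}}{2}$, which holds because $\frac1a + \frac1b \leq 2\max\{\frac1a,\frac1b\}$. Efficiency is inherited directly: $\mathcal{D}_\text{even}$ is efficiently samplable given a fractional colouring achieving $\chi_f(G)$ (or, if one only wants an efficient algorithm, any proper colouring, at the cost of replacing $\chi_f$ by $\chi$), and the odd-degree strategy is efficient by Lemma~\ref{lemma:odd_degree_detection}, so the mixture is efficiently samplable as well.

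I do not anticipate a serious obstacle here; the statement is essentially a bookkeeping combination of the two preceding lemmas. The only point requiring mild care is making sure the case split on $\mathcal{E}$ is exhaustive and that the single excluded error in each branch matches up — specifically, that the unique error on odd-degree vertices not detected by the odd strategy is precisely $\cptp E^\ast$, which is exactly what Lemma~\ref{lemma:odd_degree_detection} asserts, and that $\cptp E^\ast$ acts trivially on all even-degree vertices so that it is genuinely outside $\mathcal{E}_\text{even}$ as well. Given that, the detection rate of the mixture is pinned down by the two guaranteed contributions, and the arithmetic optimisation is immediate.
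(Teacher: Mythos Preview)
Your proposal is correct and matches the paper's approach exactly: the paper states that the lemma follows immediately by combining the strategies from Lemmas~\ref{lemma:even_degree_detection} and~\ref{lemma:odd_degree_detection}, and your write-up supplies precisely the intended mixture-and-optimise argument behind that sentence.
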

This already shows that the detection rate that is achievable on general graphs decreases at most linearly in the number of vertices of the graph. This lower bound is however far from tight in many cases. In fact, even for universal graph states a constant lower bound is possible as the following result shows.
\begin{figure}[ht]\centering
\begin{subfigure}[t]{0.23\textwidth}
\includegraphics[width=\linewidth]{./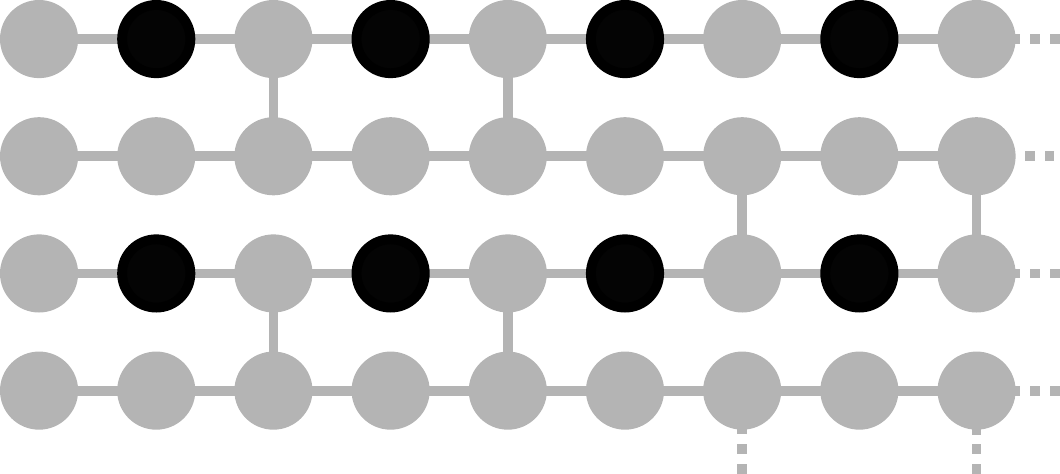}
\caption{Holes in even-degree vertices, type 1.}
\label{fig:brickwork_tests_even_1}
\end{subfigure}
\hspace*{0.15cm}
\begin{subfigure}[t]{0.23\textwidth}
\includegraphics[width=\linewidth]{./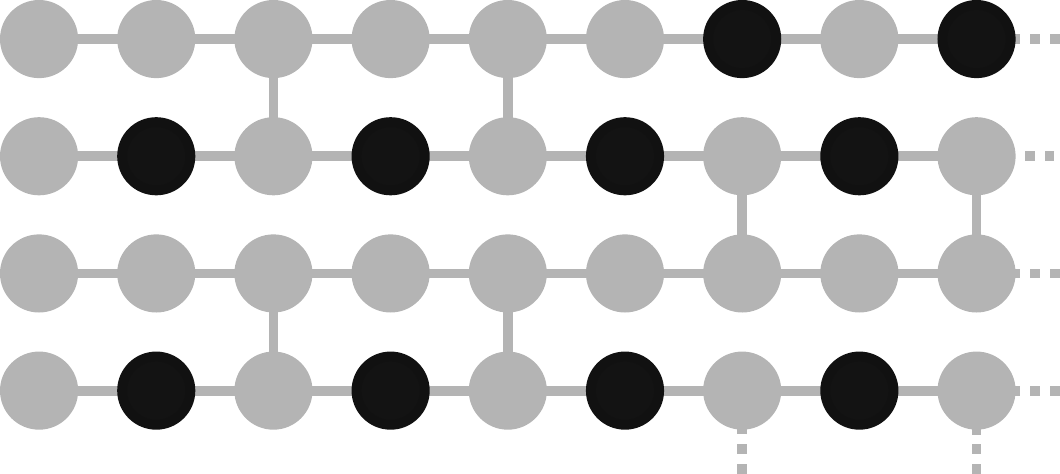}
\caption{Holes in even-degree vertices, type 2.}
\label{fig:brickwork_tests_even_2}
\end{subfigure}
\hspace*{0.15cm}
\begin{subfigure}[t]{0.23\textwidth}
\includegraphics[width=\linewidth]{./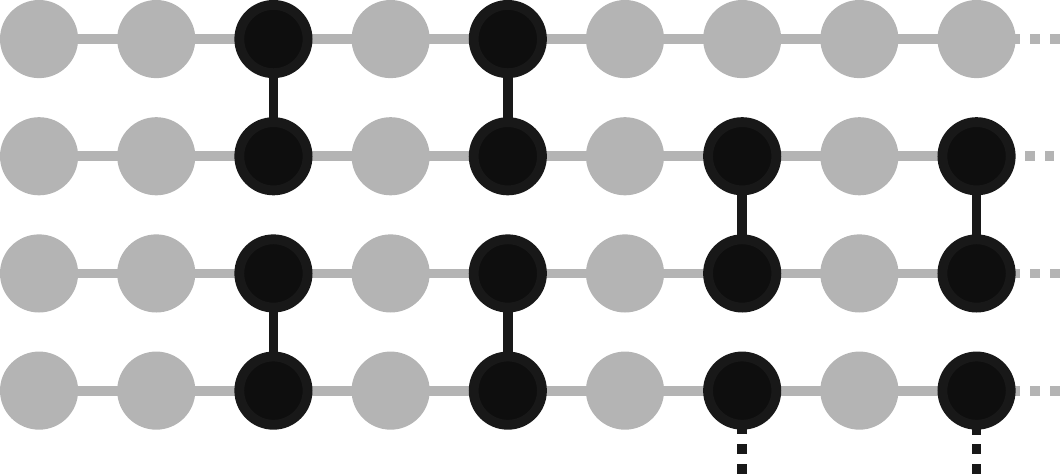}
\caption{Holes in odd-degree vertices, type 1.}
\label{fig:brickwork_tests_odd_1}
\end{subfigure}

\vspace*{0.3cm}
\begin{subfigure}[t]{0.23\textwidth}
\includegraphics[width=\linewidth]{./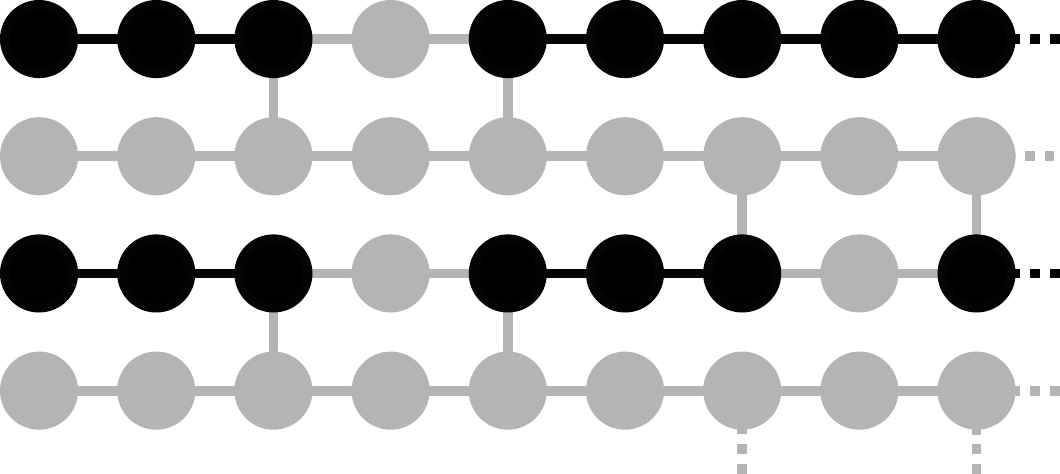}
\caption{Holes in odd-degree vertices, type 2.}
\label{fig:brickwork_tests_odd_2}
\end{subfigure}
\hspace*{0.15cm}
\begin{subfigure}[t]{0.23\textwidth}
\includegraphics[width=\linewidth]{./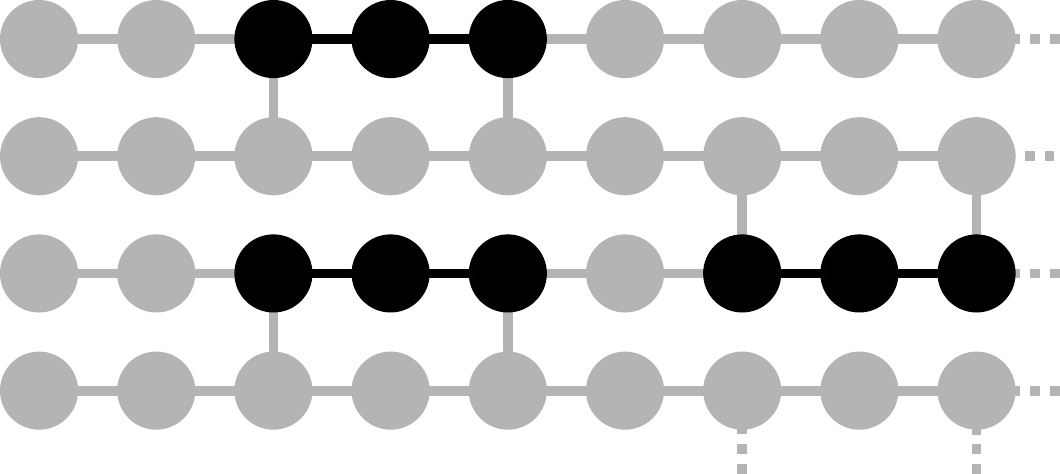}
\caption{Holes in odd-degree vertices, type 3.}
\label{fig:brickwork_tests_odd_3}
\end{subfigure}
\hspace*{0.15cm}
\begin{subfigure}[t]{0.23\textwidth}
\includegraphics[width=\linewidth]{./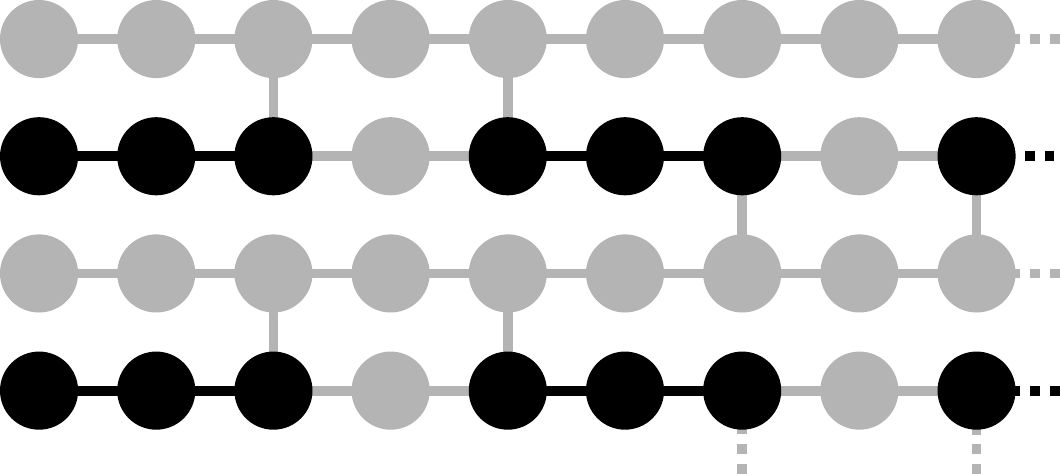}
\caption{Holes in odd-degree vertices, type 4.}
\label{fig:brickwork_tests_odd_4}
\end{subfigure}
\hspace*{0.15cm}
\begin{subfigure}[t]{0.231\textwidth}
\includegraphics[width=\linewidth]{./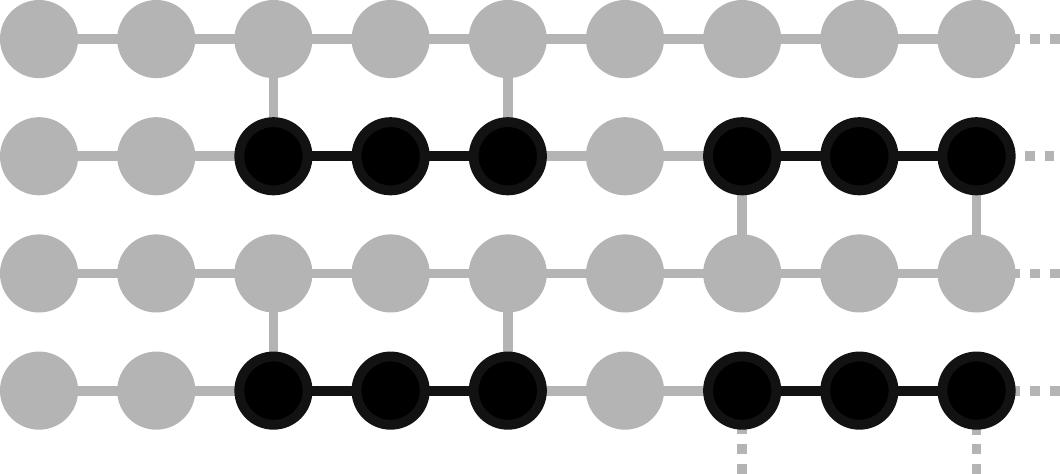}
\caption{Holes in odd-degree vertices, type 5.}
\label{fig:brickwork_tests_odd_5}
\end{subfigure}
\caption{The seven types of dummyless tests for the brickwork graph. A trap configuration is sampled by randomly choosing one of the seven types, and then in cases \ref{fig:brickwork_tests_even_1}-\ref{fig:brickwork_tests_even_2} sampling uniformly at random a subset of marked vertices as holes, and in cases \ref{fig:brickwork_tests_odd_1}-\ref{fig:brickwork_tests_odd_5} sampling uniformly at random a subset of marked chains as holes.}
\label{fig:brickwork_tests}
\end{figure}
\begin{lemma}[Error Detection on the Brickwork State]\label{lemma:brickwork_detection}
	Let $G$ be a brickwork graph. Then, there exists an efficient testing strategy that $\left(1/14\right)$-detects $\mathcal{E} = \mathcal{G}_V^{\cptp Z} \setminus \{\cptp I, \cptp E^\ast\}$.
\end{lemma}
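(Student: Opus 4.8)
The plan is to realise the even/odd error split of Section~\ref{subsec:concrete-tests} with a \emph{constant-size}, translation-invariant family of dummyless tests — the seven types drawn in Figure~\ref{fig:brickwork_tests} — so that the linear dependence on $|V_\text{odd}(G)|$ in Lemma~\ref{lemma:odd_degree_detection} is avoided. Write $V_\text{even}$ and $V_\text{odd}$ for the even- and odd-degree vertices of the brickwork graph $G$. Partition $\mathcal{E} = \mathcal{G}_V^{\cptp Z} \setminus \{\cptp I, \cptp E^\ast\}$ into $\mathcal{E}_\text{even}$, the errors whose support meets $V_\text{even}$, and $\mathcal{E}_\text{odd} \setminus \{\cptp E^\ast\}$, the errors supported on $V_\text{odd}$ only and different from $\cptp E^\ast$. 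The final strategy samples one of the seven types uniformly at random; I will show that every error in $\mathcal{E}_\text{even}$ is caught by one of two ``even'' types and every error in $\mathcal{E}_\text{odd} \setminus \{\cptp E^\ast\}$ by one of five ``odd'' types, each with conditional probability $\tfrac12$, giving detection rate $\tfrac17 \cdot \tfrac12 = \tfrac1{14}$.

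For $\mathcal{E}_\text{even}$: the brickwork graph is bipartite, so fix a proper $2$-colouring $V = V^{(1)} \sqcup V^{(2)}$, and recall from the discussion preceding Lemma~\ref{lemma:even_degree_detection} that any independent set of even-degree vertices is the hole set of a valid dummyless test. Types $1$ and $2$ take $V_\text{hole}(T)$ to be a uniformly random subset of $V^{(\ell)} \cap V_\text{even}$ for $\ell = 1, 2$. Given $\cptp E \in \mathcal{E}_\text{even}$, pick $v \in V_\text{error}(\cptp E) \cap V_\text{even}$ and let $\ell$ be its colour; conditioned on type $\ell$, the pool $V^{(\ell)} \cap V_\text{even}$ contains $v$, so the parity of $|V_\text{error}(\cptp E) \cap V_\text{hole}(T)|$ is uniform, hence so is $R(T, \cptp E) \equiv |V_\text{error}(\cptp E)| + |V_\text{error}(\cptp E) \cap V_\text{hole}(T)| \pmod 2$, which therefore equals $1$ with probability $\tfrac12$.

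For $\mathcal{E}_\text{odd} \setminus \{\cptp E^\ast\}$: exploiting the periodicity of the brickwork graph I would exhibit five families $\mathcal{C}_1, \dots, \mathcal{C}_5$ of chains — the translates of the five local patterns of Figure~\ref{fig:brickwork_tests} — each family a collection of valid chains (odd-degree endpoints, even-degree interior, chords only between consecutive vertices) that are pairwise vertex-disjoint and mutually non-adjacent, so that the union of any sub-collection of a single family is again a valid dummyless hole set. Type $t + 2$ takes $V_\text{hole}(T)$ to be the union of a uniformly random sub-collection of $\mathcal{C}_t$. Let $H$ be the multigraph on vertex set $V_\text{odd}$ with one edge $\{u, w\}$ for every chain of $\bigcup_t \mathcal{C}_t$ joining $u$ to $w$. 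The key claim is that \emph{$H$ is connected}. Granting it, for $\cptp E \in \mathcal{E}_\text{odd} \setminus \{\cptp E^\ast\}$ the support $W = V_\text{error}(\cptp E)$ satisfies $\emptyset \neq W \subsetneq V_\text{odd}$, so some chain $c \in \mathcal{C}_t$ crosses the cut $(W, V_\text{odd} \setminus W)$, i.e.\ has exactly one endpoint in $W$; since $W \subseteq V_\text{odd}$ and the chains in $\mathcal{C}_t$ are disjoint, $|W \cap V_\text{hole}(T)| \equiv \bigoplus_{c' \in S} |W \cap c'| \pmod 2$ over the random sub-collection $S$, and the contribution of $c$ is an independent uniform bit, so this parity — hence $R(T, \cptp E)$ — is uniform conditioned on type $t + 2$. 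Again the conditional detection probability is $\tfrac12$, giving $\tfrac1{14}$ overall; combining the two cases, and noting that the seven types are efficiently samplable and their decision functions efficiently computable from $G$, yields the lemma.

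The main obstacle is the connectivity of $H$: one must verify that five translation classes of chains already glue every odd-degree vertex of the brickwork graph into a single component — equivalently, that the five patterns of Figure~\ref{fig:brickwork_tests}, together with the two colour classes, constitute valid dummyless tests that jointly cover and connect $V_\text{odd}$. I expect this to reduce to a finite inspection of one period of the graph — checking that within a brick and across to its neighbours the chosen chains link each odd-degree vertex to enough of its $H$-neighbours — after which periodicity propagates connectivity to all of $G$; verifying pairwise non-adjacency of the chains within each family (so that arbitrary unions remain valid tests) is routine bookkeeping on the explicit patterns, and it is precisely the need for these patterns to exist that pins the count at seven and hence the rate at $1/14$.
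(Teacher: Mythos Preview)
Your proposal is correct and takes essentially the same approach as the paper. Your uniform sampling over the seven types is mathematically identical to the paper's ``optimal switching'' with weights $2/7$ and $5/7$ between the even strategy (rate $1/4$ on bipartite $G$) and the odd strategy (rate $1/10$ from five parallelisable chain classes), since $(2/7)(1/2)=(5/7)(1/5)=1/7$; and you correctly isolate the connectivity of the auxiliary graph $H$ on $V_\text{odd}$ as the one point requiring the explicit brickwork geometry, which the paper handles by naming the five classes (one vertical and four horizontal, alternating by row and by position) depicted in Figure~\ref{fig:brickwork_tests}.
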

\begin{proof}[Proof Sketch]
To detect errors affecting even-degree vertices, use the strategy from Lemma~\ref{lemma:even_degree_detection}. As the brickwork graph is bipartite, this will yield a detection rate of $1/4$.
	
To detect errors on odd-degree vertices, follow the strategy from Lemma~\ref{lemma:odd_degree_detection}, but use chains that can be tested in parallel to boost the detection rate. There are five classes of chains between odd-degree vertices that can each be run at the same time. One class consists of all vertical chains, and the other four of horizontal chains where every class contains chains only in every second row and only every second horizontal chain on these rows. By testing random subsets of these classes of chains, the detection rate in this case is lower bounded by $1/10$.
	
Optimal switching between these two strategies (with probabilities $2/7$ and $5/7$) yields an overall detection rate of $1/14$. The different types of tests on the brickwork graph are depicted in Figure~\ref{fig:brickwork_tests}.
\end{proof}

\section{Collaborative State Preparation}
\label{subsec:coll-st-prep}

Following the approach outlined in \S~\ref{sec:outline}, we now turn
to the design of a composably secure protocol for implementing the
preparation of the input states required by the dummyless protocols
introduced in~\S~\ref{sec:dummyless_verification}. The Collaborative
Remote State Preparation Protocol \ref{proto:col_state_prep} presented
here will allow $n$ Clients to collaboratively construct an encrypted
state on the Server whose encryption key is held by a purely classical
party called the Orchestrator. It guarantees that no malicious
coalition including up to $n-1$ Clients and the Server (but not the
Orchestrator) has any knowledge about the final state.

This security property is captured formally as follows. The Remote
State Preparation Resource \ref{res:rsp} (or RSP) allows one party
called the Sender to prepare a quantum state on a device held by
another party called the Receiver. Its simplest instantiation requires
only a direct quantum channel between the two participants but more
interesting scenarios can be considered, for example using untrusted
relays or additional participants. We specify this resource for our
specific case, i.e.~sending states in the $X-Y$ plane.

\begin{resource}[ht]
\caption{Remote State Preparation}
\label{res:rsp}

\begin{algorithmic}[0]

\STATE \textbf{Inputs:} The Sender has as input an angle $\theta \in \Theta = \qty{\frac{k\pi}{4}}_{k \in \qty{0, \ldots, 7}}$.

\STATE \textbf{Computation by the Resource:} The Resource prepares and sends the state $\ket{+_\theta}$ to the Receiver.

\end{algorithmic}
\end{resource}

The goal of the Collaborative Remote State Preparation Protocol is
then to construct this Remote State Preparation Resource \ref{res:rsp}
between the Orchestrator and the Server using one Quantum Channel
Resource between each Client and the Server and one Secure Classical
Channel Resource between each Client and the Orchestrator. This latter
Resource transmits faithfully and privately any classical message from
the sender to the receiver, while only leaking the size of the message
to an eavesdropper.

\begin{protocol}[ht]
\caption{Collaborative Remote State Preparation}
\label{proto:col_state_prep}
\begin{algorithmic} [0]
\STATE \textbf{Input:} The Orchestrator has as input an angle $\theta \in \Theta$. The Server and Clients have no input.

\STATE \textbf{Protocol:}
\begin{itemize}
\item Client $j$ samples uniformly at random $\theta_j \in_R \Theta$ and sends $\ket{+_{\theta_j}}$ to the Server.
\item Client $j$ sends $\theta_j$ to the Orchestrator using a Secure Classical Channel.
\item For each $j \neq n$, the Server applies $\CNOT_{n,j}$ between the qubits $n$ and $j$, with the first being the control and the second the target. It measures the target qubit $j$ in the computational basis with measurement outcome $t_j$. It sends the vector $\bm{t}$ containing all the measurement outcomes to the Orchestrator.
\item The Orchestrator computes $\theta' = \theta_n + \sum_{j \in [n-1]} (-1)^{t_j} \theta_j$ and sends a correction $(b, (-1)^b\theta - \theta')$ to the Server, who applies $\X^b\Z((-1)^b\theta - \theta')$ to the unmeasured qubit, keeping it as output.
\end{itemize}

\end{algorithmic}
\end{protocol}

\begin{figure}[ht]\centering
\subfloat[The Server receives the qubits and applies $\CNOT$ gates. The central qubit $n$ is the control, the rest are targets.]{
\label{fig:st-prep-1}
		\includegraphics[width=0.4\textwidth]{./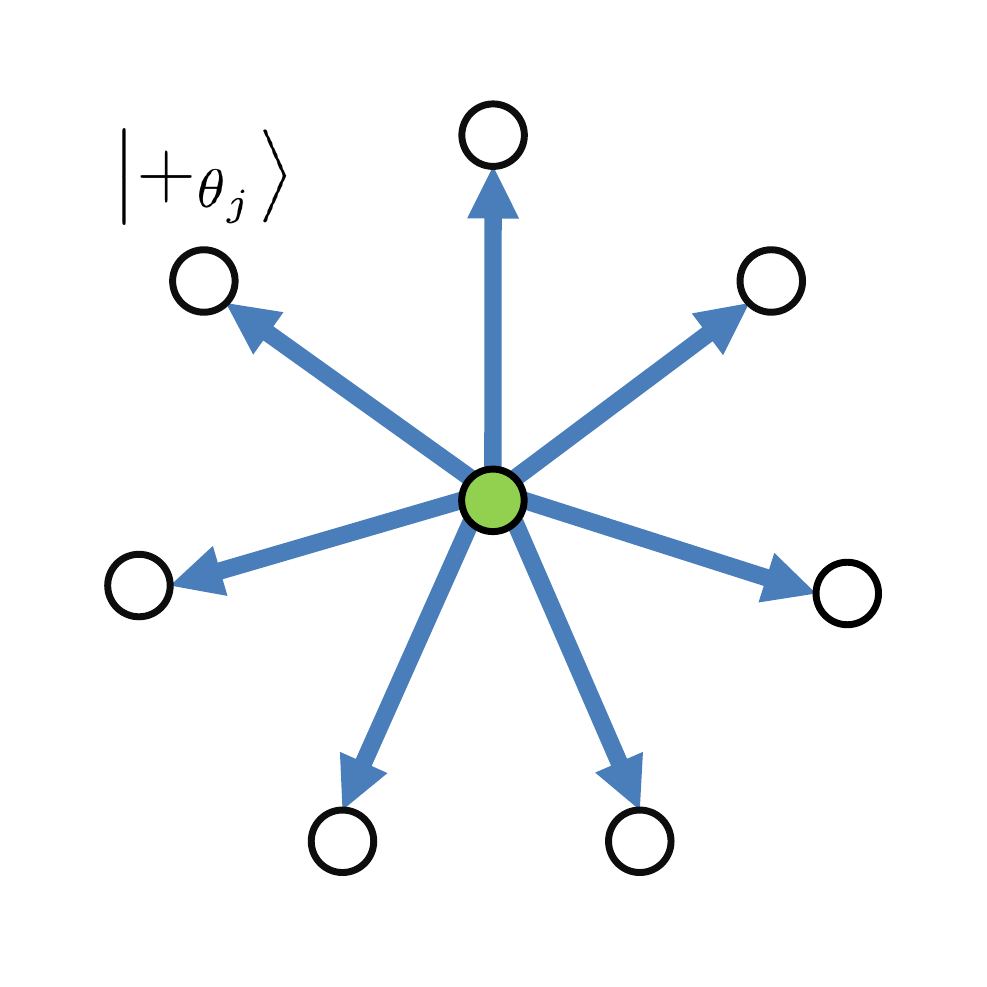}
}
\qquad
\subfloat[The Server measures all qubits but the central one in the computational basis and gets outcomes $t_j \in \bin$.]{
\label{fig:st-prep-2}
		\includegraphics[width=0.4\textwidth]{./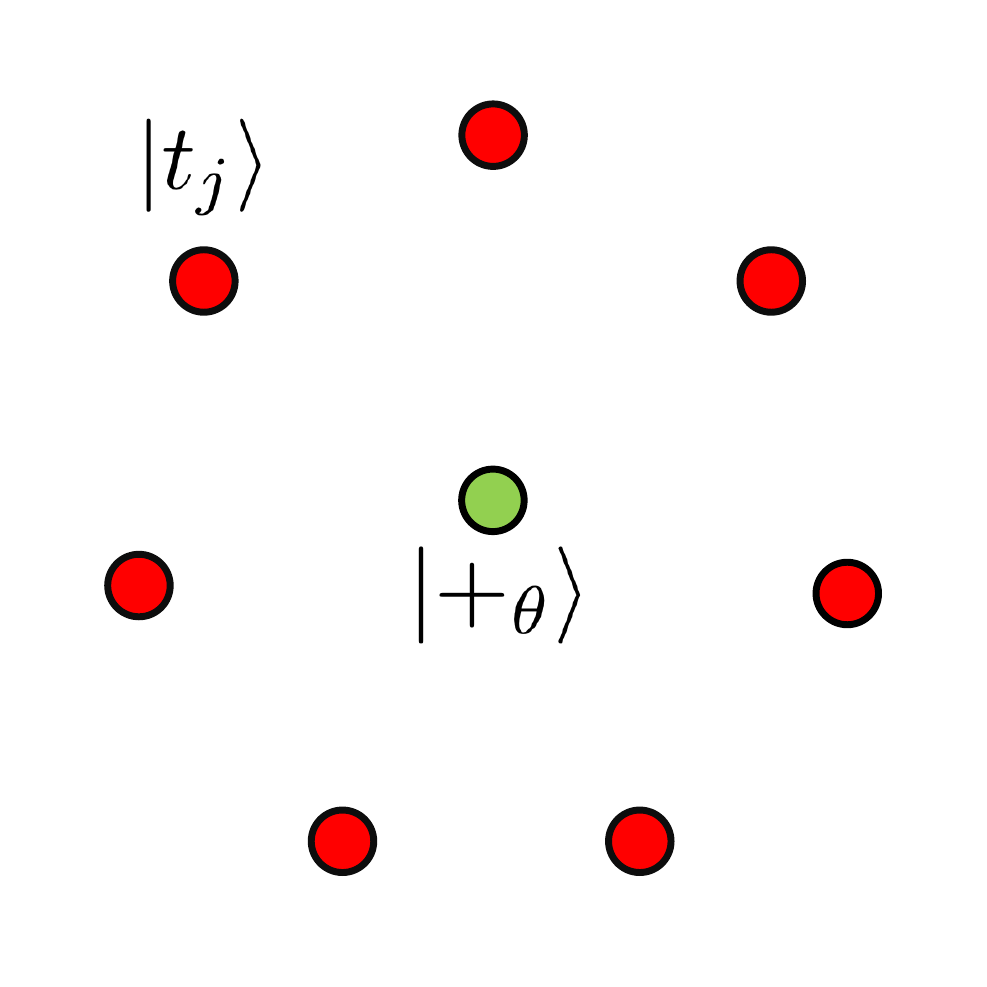}
}
\caption{Collaborative Remote State Preparation for eight qubits. All qubits start in the state $\ket{+_{\theta_j}}$.}
\label{fig:st-prep}
\end{figure}

We can now state the main result of this section, namely the correctness and security of Protocol \ref{proto:col_state_prep} in the AC framework. Both properties are proven independently below.

\begin{theorem}[Security of Collaborative Remote State Preparation]
\label{thm:coll-sec}
Protocol \ref{proto:col_state_prep} perfectly constructs the Remote State Preparation Resource \ref{res:rsp} from Secure Classical Channel Resources between each Client and the Orchestrator, for malicious coalitions that include the Server and at most $n-1$ Clients.
\end{theorem}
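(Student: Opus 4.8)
The plan is to exhibit a single simulator $\cptp\Sigma$ at the adversarial interface of the Remote State Preparation Resource~\ref{res:rsp} — that interface being the corrupted Server (who plays the Receiver) together with the interfaces of the $\leq n-1$ corrupted Clients in the underlying channel resources — and to prove that the resulting ideal system is \emph{equal}, not merely statistically close, to the real system running Protocol~\ref{proto:col_state_prep} with an honest Orchestrator and honest Client(s) over the Secure Classical Channel Resources. Since the coalition omits at least one Client, fix one honest Client $j^\ast$. The simulator acts as follows: for each corrupted Client it receives over the simulated secure channel the angle declared to the Orchestrator and records it, producing only the trivial message‑size leakages; for each honest Client $j'\neq j^\ast$ it simply runs the honest Client code, sampling $\theta_{j'}\in_R\Theta$ and sending $\ket{+_{\theta_{j'}}}$ to the Server; for the distinguished Client $j^\ast$ it prepares a maximally entangled pair on registers $(S,Q)$, sends $Q$ to the Server as ``$Q_{j^\ast}$'', and keeps $S$ alongside the qubit $\ket{+_\theta}$ delivered to it by the Resource on a register $P$. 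Once the Server has reported its outcomes $\bm t$ and the corrupted Clients have declared their angles, $\cptp\Sigma$ forms the ``non‑$j^\ast$'' angle $\theta^\circ = \theta_n + \sum_{j\neq n,\,j^\ast}(-1)^{t_j}\theta_j$ from the recorded and self‑sampled values, draws a fresh pad $r\in_R\Theta$, applies $\Z(r-\theta^\circ)\X^{t_{j^\ast}}$ to $S$, Bell‑measures $(P,S)$ with outcome $(z,x)\in\bin^2$, and sends the correction message $(b,\delta)=(x,\ z\pi-r)$ to the Server. (When $j^\ast=n$ the factor $\X^{t_{j^\ast}}$ is dropped, $\theta^\circ$ collecting only the signed leaf angles; the argument is otherwise identical.)

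First I would record the gadget identities: $\CNOT_{n,j}$ followed by a computational‑basis measurement of $j$ with outcome $t_j$ sends an arbitrary target state $\ket\psi_j$ through to $\Z(\alpha)\X^{t_j}\ket\psi$ on the control $n$ when $n$ held $\ket{+_\alpha}$, specialising to $\ket{+_{\alpha+(-1)^{t_j}\beta}}_n$ for $\ket\psi=\ket{+_\beta}$. Iterating over the leaves shows that after the honest Server's processing the kept qubit equals $\Z(\theta^\circ)\X^{t_{j^\ast}}$ applied to whatever occupied the $j^\ast$‑slot; combined with $\X^b\Z((-1)^b\theta-\theta')\ket{+_{\theta'}}=\ket{+_\theta}$ this gives the protocol's correctness and also dictates the ``un‑gadget'' correction used by $\cptp\Sigma$. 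For security I would then describe both the real and the ideal executions as interactive systems that first emit the qubit handed to the Server, then absorb $(\bm t,\text{declarations})$, then emit the correction. The real system emits $\ket{+_{\theta_{j^\ast}}}$ for $\theta_{j^\ast}\in_R\Theta$, privately keeps $\theta_{j^\ast}$, and later emits $(b,\,(-1)^b\theta-\theta^\circ-(-1)^{t_{j^\ast}}\theta_{j^\ast})$ with $b\in_R\bin$. The pivotal observation is the reparametrisation $\delta:=(-1)^b\theta-\theta^\circ-(-1)^{t_{j^\ast}}\theta_{j^\ast}$, i.e.\ $\theta_{j^\ast}=(-1)^{t_{j^\ast}}\big((-1)^b\theta-\theta^\circ-\delta\big)$: as $(\theta_{j^\ast},b)$ ranges uniformly, $(b,\delta)$ ranges uniformly over $\bin\times\Theta$, independently of $\theta$ and of the Server's report, so the real system equals ``sample $(b,\delta)$ uniformly, emit $\ket{+_{(-1)^{t_{j^\ast}}((-1)^b\theta-\theta^\circ-\delta)}}$, then emit $(b,\delta)$''.

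It then remains to verify that $\cptp\Sigma$ realises exactly this system. It is convenient to dilate ``emit $\ket{+_{\theta_{j^\ast}}}$ and keep $\theta_{j^\ast}$'' as ``emit half of $\ket{\Phi^+}$ and later measure the other half with the POVM $\{\tfrac14\ketbra{+_{-\vartheta}}\}_{\vartheta\in\Theta}$'' (whose outcome is $\theta_{j^\ast}\in_R\Theta$ and which steers the Server's copy to $\ket{+_{\theta_{j^\ast}}}$), so the two systems are compared on the same entangled state. Using the transpose trick, $\cptp\Sigma$'s operations on $S$ steer $Q$ to $\X^{t_{j^\ast}}\Z(r-\theta^\circ)\,\X^{x}\Z^{z}\ket{+_\theta}$; since $\Z^z=\Z(z\pi)$ and $((-1)^b+1)z\pi\equiv 0\pmod{2\pi}$, and with $(b,\delta)=(x,z\pi-r)$, this equals $\ket{+_{(-1)^{t_{j^\ast}}((-1)^b\theta-\theta^\circ-\delta)}}$ — precisely the steered state of the real system, with the same conditional law relating $\theta_{j^\ast}$ to $(b,\delta,\theta,\theta^\circ,t_{j^\ast})$ — while $(z,x,r)$ uniform and independent makes $(b,\delta)$ uniform on $\bin\times\Theta$ and independent of everything else. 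Equality of the steering (Choi) maps then lifts to equality of the two interactive systems, and since this holds for every environment plugged into the adversarial interface, Protocol~\ref{proto:col_state_prep} perfectly constructs Resource~\ref{res:rsp} against every coalition of the Server and $\leq n-1$ Clients.

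I expect the crux to be the design of the simulator's correction message in the last step. The honest Orchestrator can afford a correction whose $\delta$ is uniform over all of $\Theta$ because it knows both $\theta$ and $\theta_{j^\ast}$; $\cptp\Sigma$ knows neither as a classical value, so it must (i) recover the needed $\Z$‑correction from the quantum state $\ket{+_\theta}$ by a teleportation‑style Bell measurement — which is exactly why $Q$ has to be dispatched as \emph{half of an entangled pair} before the Server's report is available — and (ii) inflate the two Pauli bits of that Bell measurement into a full uniform angle $\delta$ by the extra rotation $r$, exploiting $\Z^z=\Z(z\pi)$ and the uniformity of $z\pi-r$ on $\Theta$. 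A secondary subtlety is that the equality must be argued at the level of these interactive systems, hence for \emph{arbitrary} deviations of the Server and corrupted Clients and not only for a Server that honestly executes the gadget; this is the role of the transpose‑trick and POVM‑dilation bookkeeping, in place of a hybrid over adversarial behaviours.
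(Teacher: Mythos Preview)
Your proposal is correct, but it takes a genuinely different route from the paper. The paper's simulator is considerably more direct: instead of sending half of an EPR pair and later teleporting the resource's output into it, the paper's simulator takes the state $\ket{+_\theta}$ received from the RSP resource, applies a fresh quantum one-time pad $\Z(\theta_h)\X^{b_h}$ with $\theta_h\in_R\Theta$, $b_h\in_R\bin$, and ships the resulting qubit directly to the Server as the honest Client's contribution. Because the pad already hides $\theta$, the simulator never needs to recover $\theta$ classically; the ``missing'' $(-1)^b\theta$ in the correction angle is absorbed into the qubit itself, and the simulator simply sends the correction $(t_h\oplus b_h,\,-\theta')$. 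Indistinguishability is then argued by writing down side-by-side tables of the variables available to the Distinguisher (the received qubit, the measurement outcome $t_h$, the correction bit, the correction angle) in the real and ideal executions and applying two reversible, Distinguisher-computable row operations plus one harmless relabelling of the uniform bit $b\leftrightarrow b\oplus t_h$, after which the columns coincide exactly.

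What each approach buys: your teleportation-based simulator is a standard UC-style device and, together with the POVM dilation of the honest Client, makes the ``equality of steering maps'' argument conceptually clean and reusable; it also lets you handle additional honest Clients explicitly rather than by appeal to the worst case. The paper's approach, on the other hand, avoids EPR pairs, Bell measurements, and Choi-map bookkeeping altogether --- the simulator is a three-line description, and the indistinguishability proof is an elementary change-of-variables on a handful of classical angles and bits. In particular, the ``crux'' you anticipate (designing the simulator's correction without classical knowledge of $\theta$) is dissolved rather than solved: since the simulator forwards a padded copy of $\ket{+_\theta}$ itself, it can legitimately omit $\theta$ from the correction angle, and no teleportation is needed.
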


\begin{proof}[Proof of Correctness]
The state of the central qubit after an honest execution of Protocol~\ref{proto:col_state_prep} before the correction sent by the Client is $\ket{+_{\theta'}}$ with:
\begin{equation}
\label{eq:theta}
\theta' = \theta_n + \sum\limits_{j \in [n-1]} (-1)^{t_j} \theta_j.
\end{equation}
It is sufficient to prove this for a pure state $\ket{\phi} = \alpha\ket{0} + \beta\ket{1}$ as control. 
We apply a $\CNOT$ gate with $\ket{\phi}$ as control and $\ket{+_{\hat{\theta}}}$ with $\hat{\theta} \in \Theta$ as target, followed by a measurement of this second qubit in the computational basis. 
Let~$t \in \bin$ be the measurement result. After tracing out the second qubit post-measurement, the system is in the following state:

\begin{align*}
\begin{split}
\sqrt{2}\bra{0}_2 \X_2^t \CNOT_{1,2}\ket{\phi}\ket{+_{\hat{\theta}}} =& \bra{0}_2 \X_2^t \qty(\alpha \ket{00} + \alpha e^{i\hat{\theta}}\ket{01} + \beta \ket{11} + \beta e^{i\hat{\theta}} \ket{10}) \\
=& \bra{0}_2(\alpha\ket{0} + \beta e^{i\hat{\theta}} \ket{1})\ket{t} + e^{i\hat{\theta}}\bra{0}_2(\alpha\ket{0} + \beta e^{-i\hat{\theta}} \ket{1})\ket{t \oplus 1} \\
=& \Z(\hat{\theta})\ket{\phi}\braket{0}{t} + e^{i\hat{\theta}}\Z(-\hat{\theta})\ket{\phi}\braket{0}{t \oplus 1}
\end{split}
\end{align*}

\noindent Therefore, the result of this single step is $\Z((-1)^t\hat{\theta})\ket{\phi}$ up to a global phase. Replacing the result above in the sequence of $\CNOT$'s and measurements performed by the Server where the control is qubit $n$ and the targets are qubits $j \neq n$ yields the desired value for $\theta'$. 
Finally, the rotation correction $(-1)^b\theta - \theta'$ sent by the Orchestrator, along with $\X^b$, transform the value of the final state into $\ket{+_\theta}$.
\end{proof}

\begin{proof}[Proof of Security]
We first construct a Simulator against an adversarial Server and a
coalition of $n-1$ Clients, which represents the worst case. The
Server expects to receive $n$ qubits and a final correction after
transmitting the measurement results. The Simulator has single-query
oracle access to the Remote State Preparation Resource \ref{res:rsp}
for state set $\qty{\ket{+_\theta}}_{\theta \in \Theta}$. It receives
a state from this resource, without the corresponding classical
description, and must make the Server accept this state as its output
at the end of the interaction. The actions of this Simulator are
described in Simulator \ref{sim:rot}. Let $h$ be the index associated
to the honest Client.

\begin{simulator}
\caption{Malicious Server and $n-1$ Clients}
\label{sim:rot}
\begin{enumerate}	
	\item The Simulator calls the Remote State Preparation Resource \ref{res:rsp} and receives a state $\ket{+_\theta}$.
	\item It then emulates the behaviour of the $n$ Quantum Channel Resources:
		\begin{itemize}
			\item For indices $j \neq h$, it simply forwards the state from corrupted Client $j$ to the Server;
			\item For index $h$, it samples uniformly at random $\theta_h \in_R \Theta$ and $b_h \in_R \bin$, and sends an encrypted version $\Z(\theta_h)\X^{b_h}(\ket{+_\theta})$ of the state received from the RSP Resource.
		\end{itemize}
	\item It then emulates the Secure Classical Channel Resources and receives from each corrupted Client $j \neq h$ a value $\theta_j$
	\item It receives from the Server a bit-string of measurement results $\mathbf{t} \in [n-1]$.
	\item After extending the bit-string $\mathbf{t}$ with $t_n = 0$, it computes $\theta'$ using Equation \ref{eq:theta} and sends the correction $(t_h \oplus b_h, -\theta')$ to the Server (by impersonating the Orchestrator) and halts.
\end{enumerate}
\end{simulator}

We can now prove that no Distinguisher can tell apart the following
two situations with one honest client: (i) the ideal resource
interacting with the Simulator, and (ii) the real scenario.

\paragraph{Data and transcripts available to the Distinguisher.}
By construction, the Distinguisher fixes $\theta$ the angle of the
desired state to be prepared at the Server output-interface. It also
fixes the value of all $\theta_j$ for $j \neq h$ both in the real and
ideal scenarios and has perfect knowledge of the states sent by
malicious parties. It does not have access to $\theta_h$ as this is
fixed by the honest client protocol.

Before sending the values for the measurement outcomes, the
Distinguisher receives from the non-corrupted party the state
$\ket{+_{\theta_h}}$ in the real case and the state
$\ket{+_{(-1)^{b_h}\theta + \theta_h}}$ in the ideal case. After
sending the bit-string $\mathbf{t}$, regardless of how it was chosen,
the Distinguisher receives a bit and an angle corresponding to the
corrections chosen by either the Orchestrator or the Simulator. In the
first case this is equal to $(b, (-1)^b\theta - \theta')$ and in the
second case $(t_h \oplus b_h, -\theta')$ with $b$ being chosen
uniformly at random and $\theta'$ being computed in the exact same way
in both settings (see Equation~\ref{eq:theta}).

The remaining parameters in the real case and ideal cases are the
received honest state, the associated measurement outcome, the
$\X$-correction bit and the $\Z$-correction angle. This gives us the
following variables that are in the hands of the Distinguisher (rows
are labeled by the meaning of the corresponding data in the real
setting):
\begin{center}
\begin{tabular}{lcc}
\toprule
& \textbf{Real world} & \textbf{Ideal world} \\
\midrule
\textit{Orchestrator-chosen output angle} & $\theta$ & $\theta$ \\ 
\textit{Server's received quantum state} & $\ket{+_{\theta_h}}$ & $\ket{+_{(-1)^{b_h}\theta + \theta_h}}$ \\
\textit{Measurement result bit} & $t_h$ & $t_h$ \\
\textit{Orchestrator correction bit} & $b$ & $b_h \oplus t_h$ \\
\textit{Orchestrator correction angle} & $(-1)^b\theta - (-1)^{t_h}\theta_h$ & $- (-1)^{t_h}\theta_h$ \\
\bottomrule
\end{tabular}
\end{center}

\paragraph{Indistinguishability of data and transcripts for the Distinguisher.}
To finish the security proof, we need to show that the distributions
of the above data and transcripts are statistically indistinguishable
in both scenarios. To do this, we will perform a series of row-wide
operations and eliminate the parameters of the corrupted parties so
that we are left with a new set of variables that will be trivially
indistinguishable. The reversibility of each operation and its
dependency on values that are known to the Distinguisher guarantees
that it can always undo them.

First, multiply the final angle by $(-1)^{t_h}$ and use this angle to
apply a rotation to the state. This transforms the above 
values into:
\begin{center}
\begin{tabular}{cc}
\toprule
\textbf{Real world} & \textbf{Ideal world} \\
\midrule
$\theta$ & $\theta$ \\
$\ket{+_{(-1)^{b \oplus t_h}\theta}}$ & $\ket{+_{(-1)^{b_h}\theta}}$ \\
$t_h$ & $t_h$ \\
$b$ &$b_h \oplus t_h$ \\
$(-1)^{b \oplus t_h}\theta - \theta_h$ & $- \theta_h$ \\
\bottomrule
\end{tabular}
\end{center}
Note that in both cases, the value for $\theta_h$ only appears in the
last row term. Since it is chosen uniformly at random both final terms
follow the same distribution, meaning that they give no distinguishing
advantage. We can therefore safely omit them in the rest of the
process:
\begin{center}
\begin{tabular}{cc}
\toprule
\textbf{Real world} & \textbf{Ideal world} \\
\midrule
$\theta$ & $\theta$ \\
$\ket{+_{(-1)^{b \oplus t_h}\theta}}$ & $\ket{+_{(-1)^{b_h}\theta}}$ \\
$t_h$ & $t_h$ \\
$b$ & $b_h \oplus t_h$ \\
\bottomrule
\end{tabular}
\end{center}
Since $b$ in the first row is a bit sampled uniformly at
random, we can substitute it with $b\oplus t_h$ without changing the
distribution.\footnote{This is the hidden reason for the additional
encryption via $\X^b$ in the protocol.} We arrive at
\begin{center}
\begin{tabular}{cc}
\toprule
\textbf{Real world} & \textbf{Ideal world} \\
\midrule
$\theta$ & $\theta$ \\
$\ket{+_{(-1)^{b}\theta}}$ & $\ket{+_{(-1)^{b_h}\theta}}$ \\
$t_h$ & $t_h$ \\
$b \oplus t_h$ & $b_h \oplus t_h$ \\
\bottomrule
\end{tabular}
\end{center}
Because the $b$ and $b_h$ are uniformly random bits, the above two
distributions are identical, which concludes the proof.
\end{proof}

\section{Quantum Secure Multi-Party Computation}
\label{sec:qsmpc}
We present in this section an extension of the SDQC
Protocol~\ref{proto:dev_detect} from
Section~\ref{sec:dummyless_verification} based on the trappified
schemes in the $X-Y$ plane. We consider here that $n$
Clients want to perform a joint MBQC computation on private classical
inputs, receiving at the end either the same classical output or an
abort message. There are two steps in the SDQC protocol which must be
modified: the preparation of a state which is compatible with the SDQC
protocol and does not leak any information to coalitions of malicious
parties, and the classical interaction between with the server to
drive the computation and tests. If these components are available,
the composable security of the SDQC protocol ensures that the
multi-party version is also secure.

The second step is purely classical once the state and computation
have been fixed and we will use a Classical SMPC Resource to handle
it. This Resource will also sample the trappified canvas and embed the
Client's desired computation into it. Hence, no malicious coalition
will be able to learn where the tests are located among the blind
computations. The first step will make use of the Collaborative RSP
Protocol~\ref{proto:col_state_prep} from the previous section,
replacing the Orchestrator by calls to the Classical SMPC
Resource. The $n$ Clients will use it to prepare rotated $\ket{+}$
states on the Server such that the encryption angle $\theta$ is
unknown to any malicious coalition, which protects the blindness of
each computation.

Our resulting Secure Delegated Quantum Secure Multi-Party Computation
Protocol with Classical IO (Protocol~\ref{proto:mpqc}) is therefore an
information-theoretic upgrade of the Classical SMPC
functionality. This is the best one can hope for without an honest
majority since it is impossible in that case to construct an
information-theoretically secure Quantum SMPC protocol. Crucially, no
additional computational assumptions are used beyond what is required
to construct the Classical SMPC Resource. This modularity means that
we can instantiate our protocol using any post-quantum secure
assumption which is capable of constructing a Classical SMPC.

\paragraph{Quantum Secure Multi-Party Computation Resource.} 
Our protocol will construct the following Quantum Secure Multi-Party
Computation Resource~\ref{res:QSMPC}. It has $n + 1$ interfaces, one
for each Player and the last one for an Eavesdropper.  It allows $n$
Players to perform a collectively defined quantum computation $\cptp
C$ over their private classical inputs with the guarantee that their
computation is either executed properly, in which case Player $j$
receives the correct classical output, or it is aborted altogether.
It is allowed to leak a known value $l_{\rho}$ about the Players'
computation and input on the Eavesdropper's filtered interface.

\begin{resource}[ht]
\caption{Quantum Secure Multi-Party Computation with Classical IO}
\label{res:QSMPC}
\begin{algorithmic}[0]

\STATE \textbf{Inputs:}
\begin{itemize}
\item Player $j$ sends a classical bit-string $x_j$. It can also input two bits $f_j$ and $c_j$ as a filtered interface.
\item The $n$ Players send the classical description of a quantum polynomial-time computation $\cptp C$ with classical inputs and outputs.
\item The Eavesdropper can input two bits $e$ and $c$ as a filtered interface.
\end{itemize}
\STATE \textbf{Computation by the Resource:}
\begin{itemize}
\item If $e = 1$, the Resource sends the leakage $l_{\rho}$ to the Eavesdropper's interface.
\item If $c = 1$ or there exists $j$ such that $c_j = 1$, the Resource sends $\Abort$ to all Players $j$ such that $c_j = 0$.
\item It computes $O = \cptp C(x)$, where $x$ is the concatenation of strings $x_j$.
\item If there exists $j \in [n]$ such that $f_j = 1$, it sends $O$ to Player $j$.
\item If there has been no abort at this stage, it sends the outputs $O$ to all other Players $j$ in a similar fashion.
\end{itemize}

\end{algorithmic}
\end{resource}
In order to construct this resource, we will make use of its classical
equivalent. Our protocol will in the end be an information theoretical
upgrade of the following Resource.

\paragraph{Classical Secure Multi-Party Computation Resource.} 
Resource~\ref{res:SMPC} allows $n$ Players to provide their private
inputs and perform a collectively defined computation $C$ on them with
the guarantee that the computation is performed properly.  We assume
that it keeps an internal state between calls.

\begin{resource}[ht]
\caption{Classical Secure Multi-Party Computation}
\label{res:SMPC}
\begin{algorithmic}[0]
\STATE \textbf{Inputs:}
\begin{itemize}
\item Player $j$ sends a classical bit-string $x_j$. It can also input two bits $f_j$ and $c_j$ as a filtered interface.
\item The $n$ Players send the description of a classical polynomial-time computation $C$.
\end{itemize}

\STATE \textbf{Computation by the Resource:}
\begin{itemize}
\item If there exists $j$ such that $c_j = 1$, the Resource sends $\Abort$ to all Players $j$ such that $c_j = 0$.
\item It computes $O = C(x)$, where $x$ is the concatenation of strings $x_j$.
\item If there exists $j \in [n]$ such that $f_j = 1$, it sends $O$ to Player $j$.
\item If there has been no abort at this stage, it sends the outputs $O$ to all other Players $j$ in a similar fashion.
\end{itemize}
\end{algorithmic}
\end{resource}

\paragraph{Delegated QSMPC Protocol.}
Our final protocol will be built upon the two presented earlier. In an
execution the Trappified Delegated Blind Computation
Protocol~\ref{proto:dev_detect}, the Client can perform all of its
classical interactions with the Server via a Classical SMPC
Resource~\ref{res:SMPC} if it provides this resource with its input
and computation (angles and flow). This resource is then responsible
for sampling all the secret parameters -- angles, bits, order of test
and computation runs, which tests to perform -- and simply instructs
the Client to prepare specific states to send to the Server. Since
only rotated $\ket{+}$ states are required for this verification
protocol, this step can further be replaced by an instance of the
Remote State Preparation Resource~\ref{res:rsp} for states
$\{\ket{+_\theta}\}_{\theta \in \Theta}$, as sending a state from this
set is a perfect protocol constructing the RSP Resource. We can then
finally replace this resource by the Collaborative Remote State
Preparation Protocol~\ref{proto:col_state_prep}, in which the
Orchestrator is played by the Classical SMPC Resource.

In essence, the Classical SMPC together with the Collaborative RSP
emulate the behaviour of the honest Client in an execution of the
Trappified Delegated Blind Computation Protocol, whose tests --
described in Section~\ref{subsec:concrete-tests} -- needed to be
tailored specifically to require only the preparation of rotated
$\ket{+}$ states. The full description is given below in
Protocol~\ref{proto:mpqc}. We continue to refer to the Classical SMPC
Resource as the Orchestrator for simplicity, since in the Abstract
Cryptography framework there is no formal difference between an honest
party and an interactive Resource.

\begin{protocol}[ht]
\caption{Secure Delegated Quantum Secure Multi-Party Computation with Classical IO}
\label{proto:mpqc}
\begin{algorithmic} [0]

\STATE \textbf{Public Information:} 
\begin{itemize}
\item $G = (V, E, I, O)$, a graph with input and output vertices $I$ and $O$ respectively;
\item $\{I_j\}_{j \in [n]}$, a partition of the input vertices, with each $I_j$ being associated to Client $j$.
\item $\sch P$, a trappified scheme on graph $G$;
\item $\preceq_G$, a partial order on the set $V$ of vertices;
\item $N, d, w$, parameters representing the number of runs, the number of computation runs, and the number of tolerated failed tests.
\end{itemize}
    
\STATE \textbf{Clients' Inputs:}
\begin{itemize}
\item Each Client $j$ has as input a classical bit-string $x_j \in \bin^{\abs{I_j}}$.
\item The $n$ Clients collaboratively have as input a set of angles $\{\phi_i\}_{i \in V}$ and a flow $f$ which induces an ordering compatible with $\preceq_G$.
\end{itemize}

\STATE \textbf{Protocol:}
\begin{enumerate}
\item The Clients send their input $x_j$ to the Orchestrator, together with the computation angles $\{\phi_i\}_{i \in V}$ and flow $f$. Let $x$ be the concatenation of all $x_j$.
\item The Orchestrator and the Server perform an execution of the Trappified Delegated Blind Computation Protocol~\ref{proto:dev_detect}. Instead of having the Orchestrator send rotated states during the UBQC execution, they perform for each state an instance of the Collaborative State Preparation Protocol~\ref{proto:col_state_prep} together with the $n$ Clients.
	\begin{enumerate}
    \item The Orchestrator samples uniformly at random a subset $C \subset [N]$ of size $d$ representing the computation runs.
    \item For $k \in [N]$:
	\begin{enumerate}
		\item If $k \in C$, the Orchestrator sets the computation for the run to $(\{\phi_i\}_{i \in V}, f)$ with input $x$. Otherwise, the Orchestrator samples a test $(T, \sigma, \tau)$ from the trappified scheme $\sch P$.
		\item The Orchestrator and Server execute the chosen run with the UBQC Protocol~\ref{prot:UBQC}. For each qubit sent during the execution of the protocol, they instead execute the Collaborative RSP Protocol~\ref{proto:col_state_prep} together with the $n$ Clients.
		\item If the run is a test, the Orchestrator checks whether it passed.
	\end{enumerate}
	\item If the number of failed tests is greater than $w$, the Orchestrator sets the output to $(\bot,\rej)$.
	\item Otherwise, let $O$ be the majority vote on the output results of the computation runs. The Orchestrator sets the output to $(O, \acc)$.
	\end{enumerate}
\item The Orchestrator sends its set output to all Clients.
\end{enumerate}

\end{algorithmic}
\end{protocol}

\paragraph{Extending the Functionality.}
The presentation above restricts how the input and output are treated for simplicity's sake and any additional efficient classical pre- and post-processing steps can be performed by the Orchestrator with no impact on the security of the protocol.

\paragraph{Removing the Correction in the Collaborative RSP Protocol used with UBQC.}
The final step of the Collaborative RSP Protocol calls for the Orchestrator to instruct the Server to apply a correction $\X^b\Z((-1)^b\theta - \theta')$ to a state which in the honest case is equal to $\ket{+_{\theta'}}$, for a random value of $b \in_R \bin$ and the Orchestrator's desired angle $\theta$. This is required to make the protocol simulatable against a malicious coalition -- otherwise, the Simulator has no way of transmitting the correct state to the Server. However, in Protocol~\ref{proto:mpqc} these qubits are used in an execution of the UBQC Protocol, in which the Orchestrator requests that the Server measures the qubit in the basis $\{\ket{\pm_\delta}\}$ for $\delta = \phi' + \theta + r\pi$. Together, the unitary operations on this qubit in the honest case can be written as

\[
\Z(-\delta)\cptp E \X^b \Z((-1)^b\theta - \theta') \Z(\theta') \ket{+} \otimes \ket{\psi}
\]

for a state $\ket{\psi}$ representing the rest of the state and the graph entangling operation $\cptp E$. Then, this is equal to

\[
\Z(- \phi' - \theta - r\pi)\cptp E \Z(\theta - (-1)^b\theta') \Z((-1)^b\theta') \ket{+} \otimes \ket{\psi} = \Z(- \phi' - (-1)^b\theta' - r\pi)\cptp E \Z((-1)^b\theta') \ket{+} \otimes \ket{\psi}.
\]

By performing the change of variables $\hat{\theta} = (-1)^b\theta'$, which is drawn from the same distribution, we recover the state in the original UBQC Protocol, with no correction from the Orchestrator:

\[
\Z(- \phi' - \hat{\theta} - r\pi)\cptp E \Z(\hat{\theta}) \ket{+} \otimes \ket{\psi}.
\]

Therefore in the full protocol, requesting and applying the correction are unnecessary steps, either for correctness or security, since the states with or without these corrections are equal.

\paragraph{Security of QSMPC.}
We now prove the correctness and security of our QSMPC protocol using the composition of AC resources and protocols.

\begin{theorem}[Security of Delegated Quantum SMPC]
Suppose that the Trappified Delegated Blind Computation Protocol~\ref{proto:dev_detect} $\epsilon_V$-constructs the Secure Delegated Quantum Computation with Classical IO Resource~\ref{res:sdqc} for leak $l_\rho$. Then Protocol~\ref{proto:mpqc} $\epsilon_V$-constructs the Quantum Secure Multi-Party Computation with Classical IO Resource~\ref{res:QSMPC} from an interactive Classical Secure Multi-Party Computation Resource~\ref{res:SMPC} for the same leak $l_\rho$, against malicious coalitions that include at most the Server and $n-1$ Clients.
\end{theorem}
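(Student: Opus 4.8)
The plan is to prove the statement purely by the Abstract Cryptography composition theorem, peeling off the ingredients of Protocol~\ref{proto:mpqc} one at a time until only the Classical SMPC Resource~\ref{res:SMPC} and the target Quantum SMPC Resource~\ref{res:QSMPC} remain. I would fix an adversary controlling the Server together with a set $M \subsetneq [n]$ of at most $n-1$ Clients, so that at least one Client is honest, and then move through three hybrids.

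First, I would invoke Theorem~\ref{thm:coll-sec}: since the Collaborative Remote State Preparation Protocol~\ref{proto:col_state_prep} perfectly constructs the Remote State Preparation Resource~\ref{res:rsp} against exactly this class of coalitions (Server plus up to $n-1$ Clients), the composition theorem lets me replace every per-qubit instance of Protocol~\ref{proto:col_state_prep} inside Protocol~\ref{proto:mpqc} by a call to Resource~\ref{res:rsp}, with zero change in distinguishing advantage and with the adversary's actions on those instances absorbed into a composed simulator. In this first hybrid the honest Clients no longer take part in any quantum step; the Orchestrator, played by the Classical SMPC Resource, drives the whole execution and uses the RSP Resources to make the states $\ket{+_\theta}$ appear on the Server's register, for angles $\theta$ of its own choosing. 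I would also note here that this is exactly the situation covered by the ``removing the correction'' discussion above, but that this simplification is not needed for the argument, since Resource~\ref{res:rsp} produces the required state directly.

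Second, I would argue that in this hybrid the pair ``Classical SMPC Resource $+$ RSP Resources'' implements precisely the honest Client of the single-client Trappified Delegated Blind Computation Protocol~\ref{proto:dev_detect}, with its input and output interfaces opened up to the $n$ Players through the Classical SMPC's concatenation/distribution logic and its filtered control bits $f_j,c_j$. Concretely, the Classical SMPC carries out all of the honest Client's classical choices — the subset $C$ of computation runs, the sampling of tests from $\sch P$, the UBQC measurement angles and updates, the accept/reject decision and the majority vote — while each ``prepare $\ket{+_\theta}$ and send'' step of the honest Client is replaced by a call to Resource~\ref{res:rsp} with the same $\theta$, which the Orchestrator knows because it chose it. Applying the hypothesis that Protocol~\ref{proto:dev_detect} $\epsilon_V$-constructs the SDQC Resource~\ref{res:sdqc} with leak $l_\rho$, the composition theorem replaces the Client--Server interaction by the SDQC Resource composed with the corresponding simulator, at distinguishing advantage at most $\epsilon_V$.

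Finally, I would verify by direct inspection of the two resource specifications that the SDQC Resource~\ref{res:sdqc}, wrapped with the Classical SMPC front-end for distributed classical I/O, is literally the Quantum SMPC Resource~\ref{res:QSMPC}: the concatenated input $x$ together with $\cptp U = \cptp C$ feeds $\cptp U\ket{x}$, the computational-basis measurement yields the classical output $\cptp C(x)$, the $f_j$-dependent ordering of output delivery, the abort triggered either by some $c_j=1$ or by the $(e,c)$-filtered deviation interface, and the leakage $l_\rho$ on the Eavesdropper interface all coincide with the QSMPC Resource's behaviour — no extra leakage appears because both Resource~\ref{res:rsp} and the Classical SMPC Resource are themselves leak-free. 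Chaining the three steps shows that Protocol~\ref{proto:mpqc} $\epsilon_V$-constructs Resource~\ref{res:QSMPC} from Resource~\ref{res:SMPC}, the QSMPC simulator being the composition of the CRSP simulators of Theorem~\ref{thm:coll-sec} with the SDQC simulator of the hypothesis, together with the trivial routing of the malicious Clients' classical interfaces through the QSMPC front-end. I expect the only delicate point to be the second step: checking carefully that after the RSP replacement the malicious Clients obtain no handle in the hybrid world beyond what Resource~\ref{res:QSMPC} already grants them through its classical and Eavesdropper interfaces, so that the reduction to the genuinely single-client Protocol~\ref{proto:dev_detect} is sound; the rest is bookkeeping with the composition theorem and a syntactic comparison of resource definitions.
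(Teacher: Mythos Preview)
Your proposal is correct and follows essentially the same approach as the paper's proof: replace the Collaborative RSP instances by the RSP Resource via Theorem~\ref{thm:coll-sec}, recognise the resulting Orchestrator--Server interaction as exactly Protocol~\ref{proto:dev_detect}, apply the hypothesis to swap it for the SDQC Resource at cost $\epsilon_V$, and then identify the composite of the Classical SMPC front-end with the SDQC Resource as Resource~\ref{res:QSMPC}. The paper inserts one extra cosmetic step (replacing the RSP Resource by a direct quantum channel before recognising Protocol~\ref{proto:dev_detect}), which you absorb into your second hybrid; this is not a genuine difference.
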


\begin{proof}
This proof is very simple and works by retracing in reverse order the high-level description of the protocol in the worst case with $n-1$ malicious Clients in collusion with a malicious Server

We first use the security of the Collaborative RSP Protocol as expressed in Theorem~\ref{thm:coll-sec} to replace each instance of this protocol with a call to the RSP Resource~\ref{res:rsp}, at no security cost. The Secure Classical Channel Resources from the Clients to the Orchestrator come for free since this party is now replaced by the Classical SMPC Resource in our protocol.

We can then replace these Resources with a direct quantum communication channel between the Orchestrator and the Server, since this protocol perfectly implements the RSP Resource. We obtain as a result exactly an execution of the UBQC Protocol~\ref{prot:UBQC} between the Orchestrator and the Server in step 2.b.ii of Protocol~\ref{proto:mpqc}. The whole step 2 of Protocol~\ref{proto:mpqc} is then exactly an execution of Protocol~\ref{proto:dev_detect} between the Orchestrator and the Server.

We then use the fact that this protocol $\epsilon_V$-constructs the Secure Delegated Quantum Computation with Classical IO Resource and replace it by a call to that resource with a cost of $\epsilon_V$.

In this final stage, the Clients send their desired computation and inputs to the Orchestrator, which only forwards the concatenated input to the SDQC Resource. This Resource leaks the value $l_\rho$ to the Server and returns the correct value to the Orchestrator if there has been no abort from the Server. The Orchestrator then sends back this output to the malicious Clients if they desire to receive it first. If there has been no abort at this stage, the Orchestrator finally transmits the output to the honest Clients as well. Therefore merging the Orchestrator -- a Classical SMPC Resource -- and the SDQC Resource yields exactly the behaviour of the desired QSMPC Resource between the $n$ Clients and the Server.
\end{proof}

\section{Discussion}
\label{sec:discussion}
 
\subsection{Comparison with Other QSMPC Protocols}
\label{sec:comparison}
Table \ref{tab:comp} below gives a comparison of our protocol with the peer-to-peer protocols of~\cite{DGJM+20:secure} and \cite{LRW20:secure}, and with the more recent semi-delegated protocol of~\cite{ACC20:multi}.
We note $n$ is the number of parties, $d$ the depth of the computation (MBQC for our paper, circuit for \cite{LRW20:secure} and $\{\T, \CNOT\}$-depth for \cite{DGJM+20:secure}), $t$ the number of $\T$ gates, $c$ the number of $\CNOT$ gates, $C_{\mathit{dist}}$ the code distance used in \cite{LRW20:secure} and $\eta$ a statistical security parameter. The values below correspond to the simple case where each player has a single qubit of input.

\begin{table}[htb!]
\centering
\begin{adjustbox}{max width=\textwidth, center}
\begin{tabular}{ccccc}
\toprule
 &
  \bfseries Dulek et al.~\cite{DGJM+20:secure} &
  \bfseries Lipinska et al.~\cite{LRW20:secure} &
  \bfseries Alon et al.~\cite{ACC20:multi} &
  \bfseries This work \\ \midrule
\textit{Security} &
  Stat. upgrade of CSMPC &
  Stat. &
  Comp. (FHE + CSMPC) &
  Stat. upgrade of CSMPC \\[8pt] \textit{Abort} &
  Unanimous &
  Unanimous &
  Identifiable &
  Unanimous \\[8pt] \textit{Composability} &
  Composable &
  Stand-Alone &
  Stand-Alone &
  Composable \\[8pt] \textit{Max adversaries} &
  $n - 1$ &
  $\floor{\frac{C_\mathit{dist} - 1}{2}}$ &
  $n - 1$ &
  $n - 1$ \\[8pt] \textit{Protocol nature} &
  Symmetric &
  Symmetric &
  Semi-Delegated &
  Delegated \\[8pt] \textit{Network topology} &
  Q and C: Complete &
  Q and C: Complete &
  Q and C: Complete &
  Q: Star / C: Complete \\[8pt] \textit{Q operations} &
  FTQC &
  FTQC &
  FTQC &
  \begin{tabular}[c]{@{}c@{}}Cl: Single Qubit\\ S: FTQC\end{tabular} \\[8pt] \textit{Classical SMPC} &
  \begin{tabular}[c]{@{}c@{}}Clifford Computation,\\ Operations in $\mathbb{Z}_2$, CT\end{tabular} &
  CT &
  \begin{tabular}[c]{@{}c@{}}Clifford Computation,\\ FHE verification\end{tabular} &
  Operations in $\mathbb{Z}_8$, $\mathbb{Z}_2$, CT \\[12pt] \textit{Rounds (C)} &
  $\order{d+\eta(N+t)}$ &
  $d + 2$ &
  $\order{1}$ &
  $d + 3$ \\[8pt] \textit{Rounds (Q)} &
  \begin{tabular}[c]{@{}c@{}}Par: $\order{nd}$\\ Seq: $\order{n(n+t+c)}$\end{tabular} &
  \begin{tabular}[c]{@{}c@{}}Par: $3$ ($2$ if C output)\\ Seq: $\order{\eta^2 (n+t)}$\end{tabular} &
  Par: $\order{n^4}$ &
  \begin{tabular}[c]{@{}c@{}}Par: $1$\\ Seq: $\order{\eta nd}$\end{tabular} \\[12pt] \textit{Size of Q memory} &
  \begin{tabular}[c]{@{}c@{}}Par: $\order{\eta^2(n+t))}$\\ Seq: $\order{\eta^2 n}$\end{tabular} &
  \begin{tabular}[c]{@{}c@{}}Par: $\order{\eta^2 n(n+t)}$\\ Seq: $\order{n^2}$\end{tabular} &
  Par: $\order{tn^9\eta^2}$ &
  \begin{tabular}[c]{@{}c@{}}Cl: $0$\\ S (par): $\order{\eta n^2d}$\\ S (seq): $\order{nd}$\end{tabular} \\ \bottomrule
\end{tabular}
\end{adjustbox}

\vspace*{0.2cm}
\caption{Comparison with \cite{DGJM+20:secure,LRW20:secure,ACC20:multi}. Q stands for quantum and C for classical. The abbreviations Cl and S stand for Client and Server respectively. Stat. means statistical, FTQC stands for Fault-Tolerant Quantum Computer and CT for Coin-Toss.}
\label{tab:comp}
\end{table}

\paragraph{Security guarantees.}
Reference \cite{DGJM+20:secure} achieve an information-theoretic upgrade of a Classical SMPC to the quantum domain, secure against an arbitrary number of corrupted parties. On the other hand, the protocol from \cite{ACC20:multi} is only computationally-secure since it relies on a Fully-Homomorphic Encryption Scheme on top of the Classical SMPC, but it is also secure against arbitrary corruptions. The protocol of \cite{LRW20:secure} constructs an information-theoretically secure Quantum SMPC but suffers from an artificial blow-up in the number of participants and exchanged qubits.\footnote{It is based on error-correcting codes and the size of the code must correspond to the number of players $n$. The maximum number of cheaters tolerated by the protocol is the number of correctable errors $\floor{\frac{C_{\mathit{dist}} - 1}{2}}$, which by the quantum Singleton bound \cite{R99:codes} is at most $\floor{\frac{n-1}{4}}$. In their example, $7$ players are required for implementing a two-party computation since the code that is used is of size $7$ and corrects $1$ error. This leads to a situation where $5$ participants that don't have inputs nor outputs must still exchange messages and none can be malicious if one of the players with inputs is.} The protocols of \cite{LRW20:secure,ACC20:multi} are proven secure in the Stand-Alone Model, whereas ours and that of \cite{DGJM+20:secure} are fully composable. On top of blindness, all protocols provide verifiability with unanimous abort apart from that of \cite{ACC20:multi} which achieves the stronger notion of identifiable abort.\footnote{A protocol satisfies the unanimous abort property if all honest players abort at the same time, as compared with selective aborts where the Adversary can choose which players will abort separately. On top of that, identifiable abort means that all honest players agree on the malicious party responsible for the failure of the protocol.}

\paragraph{Communication requirements.}
One key advantage of our protocol over the others lies in its delegated nature, where only one participant needs a full fault-tolerant quantum computer while the rest only perform very limited quantum operations, compared with the symmetric setup in \cite{DGJM+20:secure,LRW20:secure} where all participant has requires fault-tolerance. The protocol of \cite{ACC20:multi} can be considered semi-delegated in the sense that the brunt of the quantum computation is performed by a single player. However, all players must have the ability to perform arbitrary Cliffords on large states and cannot do so without having at their disposal a full fault-tolerant quantum computer. This is also reflected in the network topology: whereas the best performance in \cite{DGJM+20:secure,LRW20:secure,ACC20:multi} can only be reached by using a complete quantum and classical communication graph, we only need a star graph for quantum communications. While the network topology of \cite{DGJM+20:secure} and \cite{LRW20:secure} can also be star-shaped -- with one player acting as a router -- this would degrade their performance in terms of quantum communication rounds.

\paragraph{Usage of Classical Primitives.}
Regarding classical primitives, \cite{LRW20:secure} only requires secure coin-tossing and authenticated broadcast channels (information-theoretically secure since they can rely on an honest majority). We only use our Classical SMPC to perform coin-tossing, basic string operations (array lookup) and computations in $\mathbb{Z}_8$ and $\mathbb{Z}_2$. 
The Classical SMPC is more complex in \cite{DGJM+20:secure,ACC20:multi} since it must be able to sample uniformly at random and perform computations on the classical descriptions of arbitrary Cliffords.

\paragraph{Rounds of communication.}
We can now quantify more precisely the number of classical rounds of communication or calls to the Classical SMPC resource, quantum rounds of communication, and size of quantum memory required by each participant in the protocol. \cite{DGJM+20:secure} calls the Classical SMPC very often: a constant number of times for each input qubit and gate in the circuit. But the most costly part is the generation of ancillary magic states (for implementing $\T$ gates via gate-teleportation), which requires $\order{\eta(n+t)}$ invocations of the Classical SMPC. Our protocol simply uses $d + 3$ calls to this Resource, $2$ for setting up the state and $2$ for the key-release step. This is equivalent to the classical communication requirements of \cite{LRW20:secure}, where they only need $d + 2$ classical broadcasts per participant (with one for setting up the shared randomness and another for the state preparation). If all quantum communications are done in parallel in \cite{LRW20:secure}, it can be further parallelised to only require a constant number of classical broadcast rounds. The protocol of \cite{ACC20:multi} uses FHE (classical and quantum) to perform the computation and consequently the number of calls to the Classicl SMPC is only constant. We note that using a classical primitive called functional encryption, where a party in possession of an evaluation key can recover the clear-text of a function of the encrypted values (and only that), would allow to attain the same result for our construction by allowing the Server to compute the next measurement angle as a function of the encrypted secrets and previous measurement results.

The protocol of \cite{DGJM+20:secure} requires numerous rounds of quantum communication as they need to send encoded states around for the verification of inputs and $\T$ and $\CNOT$ gates. After parallelisation the total cost is $\order{nd}$ quantum rounds. 
\cite{ACC20:multi} aims to remove the circuit dependency in the number of rounds, obtaining $\order{n^4}$ quantum rounds in the worst case in the case where the protocol is parallelised.\footnote{They send states along a path of size $n^2$ in the communication graph of the parties, and remove a party if it doesn't deliver a packet before resending the states along a different path of the same size. In the worst case where there are $n-1$ malicious players which do not want to get caught cheating, they can drop $(n-1)(n-2)/2$ packets without being disconnected from the communication graph.}

\paragraph{Qubit Count and Memory Requirements.}
\cite{LRW20:secure} seeks to optimise the quantum memory requirement of players and therefore their communication is done sequentially, yielding $\order{\eta^2(n + t)}$ quantum rounds. Parallelisation lowers it to $3$ (or $2$ for classical outputs), at a higher quantum memory cost for all parties. 
Our protocol is optimal as there is only a single quantum round (in the parallel case): sending to the Server all states required for the collaborative state preparation phase.

Finally, the number of qubits required by \cite{DGJM+20:secure} during the computation phase is $\order{\eta(n+t)}$ for each participant (they encode each of their input qubits, ancillae and magic states using $\order{\eta}$ qubits). However they use $\order{\eta^2(n+t)}$ additional qubits in the offline phase to prepare the ancillary qubits (if the quantum communications are performed in parallel). On the other hand, \cite{LRW20:secure} reduces the number of qubits for each participant to $\order{n^2}$ for sequential quantum communication, but this blows up to $\order{\eta^2 n(n+t)}$ if parallelised. The construction from \cite{ACC20:multi} uses a compiler that adds automatically a cost of $\order{n^2}$ for each base qubit. The costly double encryptions and multiple layers of traps, in particular for the magic state distillation procedure, yields a total quantum memory cost per participant of at least $\order{tn^9\eta^2}$ (this is a weak lower bound). In our paper the Server needs $\order{nd}$ qubits to perform each blind computation or test. Each qubit in these graphs is generated using $n$ qubits via the Collaborative RSP Protocol and the computations and tests are repeated $\order{\eta}$ times in total, resulting in a total qubit cost of $\order{\eta n^2d}$ for parallelised quantum communication but only $\order{nd}$ if the rounds are performed sequentially. However, the Clients can prepare these states on the fly and the Clients do not need quantum memory.

\subsection{Impossibility of Single-Qubit Privacy Amplification on the Whole Bloch Sphere}\label{sec:impossibility}
The construction and security of Protocol~\ref{proto:mpqc} relies on the
composition of a collaborative encryption gadget with the regular
robust VBQC protocol driven by the Orchestrator.

The crucial features of the collaborative encryption are that (i) a
single honest Client providing a random state from the allowed input
set is enough to randomize the output of the gadget, and (ii) no
information about the state provided by an honest Client leaks to the
Server. These are the two properties that were shown to hold
in \S~\ref{subsec:coll-st-prep}.

It can also be seen that those do not hold whenever the set of input
states for the clients not only comprise the 8 $\ket{+_\theta}$ states
in the $X-Y$ plane but also the computational basis states $\ket
0, \ket 1$. The reason is that in such case, if the central qubit is
set to a computational basis state, it cannot be randomized by the
states provided by other clients.

While this specific failure is contingent to the chosen transformation
implemented by our gadget, we will show here that it is indeed a more
generic problem that gadgets fulfilling (i-ii) have in common, thereby
restricting these ``gadget-assisted'' approaches to verification of
classical input classical output computations.

First, we give a mathematical definition of (i):
\begin{definition}[Randomizing gadget]
  Let $P$ be a protocol with two Clients and one Server such that it
  takes a quantum state at each Client's input interface, and produces a
  quantum state at the Server's output interface together with a common
  classical bit string at each of the parties output interface.

  We say that this gadget is \emph{randomizing} whenever conditionned
  on the value of the common bit string, the linear maps implemented
  by the protocol when one of the two input states is fixed is
  invertible for pure input states.
\end{definition}

The motivation for this definition is simple: whenever one of the
input is fixed, then the other one is enough to randomize the output
at the Server's side. The role of the common bit string shared by all
the parties at the end of the protocol is to allow the possibility of
having a linear map that depends on this bit string as it is the case
in our construction. As a consequence, the output state at the
Server's interface might not be normalized in order to encapsulate the
probability of a specific common bit string to be produced by the
protocol.

The following Lemma~\ref{lemma:no_linear_map} shows that for a fixed
common string, there will always exist a specific state for one of the
two inputs such that the map will not be invertible. This can result
in one of the two following cases. Either the output is a fixed
non-zero quantum state or it produces the null vector. In the former,
this implies that the gadget is not able to correctly produce random
states required by the VBQC protocol to be secure. In the latter,
observing a specific common bit string excludes some input state for
the honest Client, thereby also violating the assumptions required to
obtain the security of the whole protocol.

\begin{lemma}\label{lemma:singular_matrix}
  Every two-dimensional linear subspace $V \subset \mathbb{C}^{2\times
  2}$ contains at least one nonzero, singular matrix.
\end{lemma}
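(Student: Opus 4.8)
The plan is to reduce the claim to a basic fact about binary forms over an algebraically closed field. Fix a basis $\{A, B\}$ of the two-dimensional subspace $V \subset \mathbb{C}^{2\times 2}$, so that every element of $V$ is of the form $\alpha A + \beta B$ for some $(\alpha, \beta) \in \mathbb{C}^2$, and the nonzero elements are exactly those with $(\alpha,\beta) \neq (0,0)$. Since the determinant of a $2\times 2$ matrix is a homogeneous polynomial of degree $2$ in its entries, the function
\[
p(\alpha, \beta) = \det(\alpha A + \beta B)
\]
is a homogeneous polynomial of degree at most $2$ in the variables $\alpha, \beta$; explicitly, by bilinear expansion of the determinant, $p(\alpha,\beta) = (\det A)\,\alpha^2 + \gamma\,\alpha\beta + (\det B)\,\beta^2$, where $\gamma = \det(A+B) - \det A - \det B$. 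A matrix $\alpha A + \beta B$ is singular precisely when $p(\alpha,\beta) = 0$, so it suffices to produce a nontrivial zero of $p$.

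Next I would distinguish two cases. If $p$ is the zero polynomial, then every nonzero element of $V$ is singular, and since $\dim V = 2 > 0$ at least one such element exists; this case in fact yields a stronger conclusion than required. Otherwise $p$ is a nonzero homogeneous polynomial of degree at most $2$ in two variables over $\mathbb{C}$. Because $\mathbb{C}$ is algebraically closed, such a form has a nontrivial projective zero: concretely, if $\det B \neq 0$ one dehomogenizes by setting $\beta = 1$, and the resulting quadratic $(\det A)\alpha^2 + \gamma\alpha + \det B$ has a root $\alpha_0 \in \mathbb{C}$; the remaining case $\det B = 0$ is immediate, since then $B \neq 0$ is itself the desired singular matrix (and symmetrically if $\det A = 0$). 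In every subcase one obtains $(\alpha_0,\beta_0) \neq (0,0)$ with $p(\alpha_0,\beta_0) = 0$, and then $M = \alpha_0 A + \beta_0 B$ is singular with $M \neq 0$ by linear independence of $A$ and $B$.

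I do not anticipate a genuine obstacle here; the only point requiring a moment's attention is the degenerate possibility that the determinant restricted to $V$ vanishes identically, which is disposed of separately above. The essential ingredient is simply that $\mathbb{C}$ is algebraically closed, so that a nonconstant binary form always has a root in projective space — the same reasoning shows, more generally, that any linear subspace of $\mathbb{C}^{n\times n}$ of dimension at least $2$ meets the determinantal hypersurface nontrivially, and that this fails over $\mathbb{R}$ (e.g.\ $\mathrm{span}\{\mathds{1}, \begin{psmallmatrix} 0 & -1 \\ 1 & 0\end{psmallmatrix}\}$), which is why working over $\mathbb{C}$ is what makes the argument go through.
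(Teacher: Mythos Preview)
Your proof is correct and follows essentially the same approach as the paper's: both pick a basis $\{A,B\}$ of $V$, dispose of the trivial case where $A$ or $B$ is already singular, and then find a root of the degree-$2$ polynomial $\alpha \mapsto \det(A + \alpha B)$ via the fundamental theorem of algebra. Your homogeneous/projective framing and the remarks on the $n\times n$ generalisation and the failure over $\mathbb{R}$ are pleasant additions, but the core argument is identical to the paper's.
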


\begin{proof}
  Let $A, B \in V$ form a basis of $V$. If $A$ or $B$ is singular, the
  claim is trivial. Assume henceforth that both $A$ and $B$ are
  invertible.
	
  For $\alpha \in \mathbb{C}$, let $C_\alpha = A + \alpha B$. Clearly,
  $C_\alpha \in \operatorname{span}(\{A,B\})$. Since $A$ and $B$ are
  linearly independent, $C_\alpha \neq 0$. It further holds that

  \begin{align*}
    \det (C_\alpha)
    & = \det
    \begin{bmatrix}
      a_{11} + \alpha b_{11} & a_{12} + \alpha b_{12} \\
      a_{21} + \alpha b_{21} & a_{22} + \alpha b_{22}
    \end{bmatrix}\\
    & = ( a_{11} + \alpha b_{11} ) ( a_{22} + \alpha b_{22} ) - ( a_{12} + \alpha b_{12} ) ( a_{21} + \alpha b_{21} ) \\
    & = \alpha^2 ( b_{11} b_{22} - b_{12} b_{21} ) + \alpha ( a_{11} b_{22} + b_{11} a_{22} - a_{12} b_{21} - b_{12} a_{21} ) + a_{11} a_{22} - a_{12} a_{21} \\
    & = \alpha^2 \det(B) + \alpha ( a_{11} b_{22} + b_{11} a_{22} - a_{12} b_{21} - b_{12} a_{21} ) + \det(A).
\end{align*}
  As $\det(B) \neq 0$, this is a polynomial of degree 2 in the
  variable $\alpha$. By the fundamental theorem of algebra, this
  polynomial admits at least one complex root.
\end{proof}

\begin{lemma}\label{lemma:no_linear_map}
  There exists no linear map $\Xi : \mathbb{C}^{2 \times
  2} \to \mathbb{C}^2$ such that for all nonzero $v \in \mathbb{C}^2$
  both $\Xi(\cdot \otimes v)$ and $\Xi(v \otimes \cdot)$ are
  invertible.
\end{lemma}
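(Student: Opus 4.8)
The plan is to reduce the statement to Lemma~\ref{lemma:singular_matrix} via a dimension count on the kernel of $\Xi$. First I would fix the identification $\mathbb{C}^{2\times 2} \cong \mathbb{C}^2 \otimes \mathbb{C}^2$ given by $u \otimes v \mapsto u v^{T}$, under which the product tensors are exactly the matrices of rank at most $1$, and a nonzero singular matrix corresponds precisely to a product tensor $u \otimes v$ with both factors nonzero. Under this identification, for a fixed nonzero $v$ the assignment $u \mapsto \Xi(u \otimes v)$ is a linear endomorphism of $\mathbb{C}^2$ (this is the map written $\Xi(\cdot \otimes v)$), and symmetrically for $\Xi(v \otimes \cdot)$.

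Next I would dispose of the degenerate case. If $\operatorname{rank}(\Xi) \leq 1$, then for every nonzero $v$ the image of $\Xi(\cdot \otimes v)$ is contained in the image of $\Xi$, hence has dimension at most $1$, so $\Xi(\cdot \otimes v)$ is not invertible and there is nothing to prove. Therefore, towards a contradiction with the claimed property, we may assume $\Xi$ is surjective, i.e. $\operatorname{rank}(\Xi) = 2$, which forces $\dim \ker \Xi = 4 - 2 = 2$.

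Then I would apply Lemma~\ref{lemma:singular_matrix} to the two-dimensional subspace $\ker \Xi \subset \mathbb{C}^{2\times 2}$: it contains a nonzero singular matrix $X_0$. Since $X_0 \neq 0$ and $\det X_0 = 0$, it has rank exactly $1$, so $X_0 = u \otimes v$ for some nonzero $u,v \in \mathbb{C}^2$. As $X_0 \in \ker \Xi$, we obtain $\Xi(u \otimes v) = 0$ with $u \neq 0$, so $\Xi(\cdot \otimes v)$ has nontrivial kernel and is not invertible, contradicting the assumed property. (In fact only the first family of restrictions is used; the $\Xi(v \otimes \cdot)$ family is not needed, although an entirely symmetric argument works with it.)

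I do not expect a genuine obstacle here: the argument is short once Lemma~\ref{lemma:singular_matrix} is in hand. The only points requiring care are the bookkeeping between ``rank-one matrix'', ``product tensor $u \otimes v$'', and ``nonzero singular matrix'' under the chosen isomorphism, and making sure the rank-deficient case for $\Xi$ is peeled off first so that Lemma~\ref{lemma:singular_matrix} is genuinely applied to a two-dimensional subspace.
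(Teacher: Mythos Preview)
Your proposal is correct and follows essentially the same approach as the paper: apply rank--nullity to see that $\ker \Xi$ has dimension at least $2$, invoke Lemma~\ref{lemma:singular_matrix} to find a nonzero rank-one matrix in the kernel, and write it as a product tensor $u\otimes v$ to obtain the contradiction. Your explicit treatment of the case $\operatorname{rank}(\Xi)\leq 1$ is a minor refinement; the paper subsumes both cases in the single inequality $\dim\ker\Xi \geq 2$ before applying the lemma.
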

\begin{proof}
  Assume the existence of such a map $\Xi$. By the rank-nullity theorem,
  it holds then that
  \begin{align*}
    \dim(\operatorname{Ker}(\Xi)) = \dim(\mathbb{C}^{2 \times 2}) - \dim(\operatorname{Im}(\Xi)) \geq 2.
  \end{align*}
  By Lemma~\ref{lemma:singular_matrix}, there exists a rank-one matrix
  $C\in\operatorname{Ker}(\Xi)$. We can rewrite $C = vw^T = v \otimes
  w$ with nonzero vectors $v,w \in \mathbb{C}^2$. It follows that $\Xi
  (v \otimes w) = 0$ which contradicts the invertibility of
  $\Xi(\cdot \otimes w)$ and $\Xi(v \otimes \cdot)$.
\end{proof}

This leads us to conclude that such gadget assisted approaches will
inherently be limited to classical I/O computations.

\subsection{Open Questions}
This work closes a gap between the circuit and MBQC models regarding
secure multi-party computations. It shows that both are able to
perform the required lift from classical to quantum in a statistically
secure way, in spite of the more stringent requirements the delegation
imposes on what clients can do. Yet, this is only partially
satisfactory as we do not consider the quantum input/output case. This
specific question was considered by some of the authors. This lead to
designing a protocol that was similar in spirit to the one presented
here, but where the Collaborative Remote State Preparation would not
only be able to prepare states in the $X-Y$ plane, but also dummy
qubits. An attack on this protocol is analysed
in~\S~\ref{sec:post-mortem}. It's discovery initiated the current work
using dummyless verification as a way to avoid it. Yet, we also show in
\S~\ref{sec:impossibility} that such approach based on Collaborative
Remote State Preparation outside a single plane is not likely to
succeed, thereby leaving open the question of how to perform Delegated QSMPC with quantum I/O.

Other open questions regarding SMPC in the MBQC model include the
verification of sampling with possibly better than polynomial security
bounds. The question of the delegation of fault-tolerant computation
in the MBQC model is also a long standing open question that we
believe can benefit from the theoretical tools developed in
~\cite{KKLM+22:framework} and from an approach similar to the one
exemplified in this work.

Finally, \cite{MKAC+22:qenclave} showed how to blindly delegate quantum computations with trusted rotations, even if both state preparations and measurements are untrusted, but left open the question whether verification is possible in this setting. The difficulty of verification seems to stem from the fact that (i) their analysis concerns states in the $X-Y$ plane, but not dummy states, and (ii) the remotely prepared states are blind, but not necessarily verifiable. While this work does not overcome the second obstacle, it shows that verification is indeed possible without the remote preparation of dummy states, and therefore constitutes a step towards the solution of this open problem.

\paragraph*{Acknowledgments.}
The authors would like to thank Michael Oliveira for discussions about
the results from Section~\ref{sec:impossibility}. TK was supported by a Leverhulme Early Career Fellowship.
HO was partially funded by the Hybdrid HPC Quantum Initiative.
EK, DL, and HO acknowledge the support by ANR research grant ANR-21-CE47-0014 (SecNISQ).
LM is grateful for support from the grant BPI France Concours Innovation PIA3 projects DOS0148634/00 and DOS0148633/00 – Reconfigurable Optical Quantum Computing.

\bibliographystyle{splncs04}
\bibliography{../qubib/qubib.bib}

\appendix
\part*{Auxiliary Supporting Material}

\section{The Abstract Cryptography Framework}
\label{subsec:ac}
Abstract Cryptography is a framework for defining and proving the security of cryptographic protocols, first introduced in \cite{MR11:abstract-cryptography,M12:constructive-cryptography}.  
Its main advantage compared to so-called Stand-Alone Models such as~\cite{HSS15:classical} is that any system that follows the structure defined by the framework is inherently composable, in the sense that if two protocols are secure separately, the framework guarantees at an abstract level that their sequential or parallel composition is also secure. It is equivalent to the Quantum Universal Composability (Q-UC) Model of~\cite{U10:universally} if a single Adversary controls all corrupted parties -- which is the case in this work. Therefore any protocol which is secure in the Q-UC model is also secure in the AC model considered here. 
We refer the reader to \cite{DFPR14:composable} for a more in-depth presentation.

In this framework, the purpose of a secure protocol $\pi$ is, given a number of available resources $\mathcal{R}$, to construct a new resource -- written as $\pi \mathcal{R}$.  
This new resource can be itself reused in a future protocol.

The actions of all honest players in a given protocol are represented as a sequence of efficient CPTP maps acting on their internal quantum registers -- which may contain communication registers, both classical and quantum.  
An $n$-party quantum protocol is therefore described by $\pi = (\pi_1, \ldots, \pi_n)$ where $\pi_j$ is the aforementioned sequence of efficient CPTP maps executed by party $j$, called the \emph{converter} of party $j$.

A \emph{resource} $\mathcal{R}$ is described as a sequence of CPTP maps with an internal state.  
It has input and output interfaces describing which party may exchange states with it.  
Some interfaces may be filtered, meaning that they are only accessible to a corrupted party.\footnote{In this paper filtered input interfaces consist of single bits, set to $0$ in the default, honest case.}
It works by having the party sending it a given state at one of its input interfaces, applying the specified CPTP map after all input interfaces have been initialised and then outputting the resulting state at its output interfaces in a specified order.  
Classical resources are modelled by considering that the input state is measured in the computational basis upon reception and the output is a measurement result on its internal state.

In order to define the security of a protocol, we need to give a pseudo-metric on the space of resources.  
The security analysis then consists of considering a special type of converters called \emph{distinguishers}.
The distinguisher's aim is to discriminate between resources $\mathcal{R}_0$ and $\mathcal{R}_1$ which have the same input and output interfaces.
It attaches to the inputs and outputs of one of the resources, interacting with it according to its own -- possibly adaptive -- strategy, and outputs a single bit indicating its guess as to which resource it had access to.  
Two resources are said to be indistinguishable if no distinguisher can make this guess with good probability. 

\begin{definition}[Indistinguishability of Resources]
\label{def:indist-res}
Let $\epsilon(\eta)$ be a function of security parameter $\eta$ and $\mathcal{R}_0$ and $\mathcal{R}_1$ be two resources with same input and output interfaces.  
The resources are $\epsilon$-statistically-indistinguish\-able if, for all distinguishers $\mathcal{D}$, we have:
\begin{equation} \label{eq:dist}
\Bigl\lvert\Pr[b = 1 \mid b \leftarrow \mathcal{D}\mathcal{R}_0] - \Pr[b = 1 \mid b \leftarrow \mathcal{D}\mathcal{R}_1]\Bigr\rvert \leq \epsilon
\end{equation}
We then write $\mathcal{R}_0 \!\!\!\underset{\mathit{stat}, \epsilon}{\approx}\!\!\! \mathcal{R}_1$.
\end{definition}

The correctness of a protocol protocol $\pi$ applied to resource $\mathcal{R}$ can be expressed as the indistinguishability between the resource $\pi\mathcal{R}$ and a desired target resource $\mathcal{S}$.

The security of the protocol is captured by the fact that the resources remain indistinguishable if we allow some parties to deviate in the sense that they are no longer forced to use the converters defined in the protocol but can use any other CPTP maps instead.
This is done by removing the converters for those parties in Equation~\ref{eq:dist}, keeping only $\pi_{M^c} = \{\pi_j\}_{j \notin M}$ where $M$ is the set of corrupted parties. On the other side, there must exist a converter called a \emph{simulator} which attaches to the interfaces of $\mathcal{S}$ for corrupted parties $j \in M$ and aims to reproduce the transcript of honest players interacting with the corrupted ones.
The security is formalised as follows in Definition \ref{def:const-sec}.

\begin{definition}[Construction of Resources]
\label{def:const-sec}
Let $\epsilon(\eta)$ be a function of security parameter $\eta$.  
We say that an $n$-party protocol $\pi$ $\epsilon$-statistically-constructs resource $\mathcal{S}$ from resource $\mathcal{R}$ against adversarial patterns $\mathsf{P} \subseteq \wp([N])$ if:
\begin{enumerate}
\item It is correct: $\pi \mathcal{R} \!\!\!\underset{\mathit{stat}, \epsilon}{\approx}\!\!\! \mathcal{S}\bot$, where $\bot$ filters the malicious interfaces;
\item It is secure for all subsets of corrupted parties in the pattern $M \in \mathsf{P}$: there exists a converter called simulator $\sigma_M$ such that $\pi_{M^c}\mathcal{R} \!\!\!\underset{\mathit{stat}, \epsilon}{\approx}\!\!\! \mathcal{S} \sigma_M$.
\end{enumerate}
\end{definition}

We can now present the General Composition Theorem (Theorem 1 from~\cite{MR11:abstract-cryptography}).

\begin{theorem}[General Composition of Resources]
Let $\mathcal{R}$, $\mathcal{S}$ and $\mathcal{T}$ be resources, $\alpha$, $\beta$ and $\mathsf{id}$ be protocols (where protocol $\mathsf{id}$ does not modify the resource it is applied to). Let $\circ$ and $\mid$ denote respectively the sequential and parallel composition of protocols and resources. Then the following implications hold:
\begin{itemize}
\item The protocols are \emph{sequentially composable}: if $\alpha \mathcal{R} \!\!\!\underset{\mathit{stat}, \epsilon_{\alpha}}{\approx}\!\!\! \mathcal{S}$ and $\beta \mathcal{S} \!\!\!\underset{\mathit{stat}, \epsilon_{\beta}}{\approx}\!\!\! \mathcal{T}$ then $(\beta \circ \alpha) \mathcal{R} \!\!\!\underset{\mathit{stat}, \epsilon_{\alpha} + \epsilon_{\beta}}{\approx}\!\!\! \mathcal{T}$
\item The protocols are \emph{context-insensitive}: if $\alpha \mathcal{R} \!\!\!\underset{\mathit{stat}, \epsilon_{\alpha}}{\approx}\!\!\! \mathcal{S}$ then $(\alpha \mid \mathsf{id}) (\mathcal{R} \mid \mathcal{T}) \!\!\!\underset{\mathit{stat}, \epsilon_{\alpha}}{\approx}\!\!\! (\mathcal{S} \mid \mathcal{T})$
\end{itemize}
\end{theorem}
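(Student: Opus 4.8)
This is the composition theorem of~\cite{MR11:abstract-cryptography}, and the plan is to reproduce the standard hybrid reduction. Both bullet points reduce to a single application of the triangle inequality to the distinguishing probabilities of an arbitrary fixed distinguisher, combined with the closure of the class of distinguishers under ``absorbing'' either an honest converter or an auxiliary parallel resource --- i.e.\ the fact that $\mathcal D$ precomposed with $\beta$, or $\mathcal D$ running an internal copy of $\mathcal T$, is again an admissible distinguisher (here ``admissible'' imposes nothing beyond being a well-formed converter, since the statistical notion places no efficiency constraint). First I would record the two algebraic identities on systems that make the reduction go through: associativity of sequential composition, giving $(\beta\circ\alpha)\mathcal R = \beta(\alpha\mathcal R)$, and distributivity of composition over the parallel operator together with the triviality of $\mathsf{id}$, giving $(\alpha\mid\mathsf{id})(\mathcal R\mid\mathcal T) = (\alpha\mathcal R)\mid\mathcal T$.

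\textbf{Sequential composability.} Fix any distinguisher $\mathcal D$ attached to the (common) interfaces of $(\beta\circ\alpha)\mathcal R$ and of $\mathcal T$, and insert the hybrid $\beta\mathcal S$:
\begin{align*}
  \bigl|\Pr[\mathcal D\,(\beta\circ\alpha)\mathcal R = 1] - \Pr[\mathcal D\,\mathcal T = 1]\bigr|
  &\le \bigl|\Pr[\mathcal D\,\beta(\alpha\mathcal R) = 1] - \Pr[\mathcal D\,(\beta\mathcal S) = 1]\bigr| \\
  &\quad {} + \bigl|\Pr[\mathcal D\,(\beta\mathcal S) = 1] - \Pr[\mathcal D\,\mathcal T = 1]\bigr|.
\end{align*}
The second summand is at most $\epsilon_\beta$ by the hypothesis $\beta\mathcal S \underset{\mathit{stat},\epsilon_\beta}{\approx} \mathcal T$, since $\mathcal D$ is itself an admissible distinguisher for that pair. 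For the first summand I would view $\mathcal D' := \mathcal D\beta$ --- the system that runs the honest converters of $\beta$ on the inner resource and feeds the result to $\mathcal D$ --- as a new distinguisher, this time attached to $\alpha\mathcal R$ versus $\mathcal S$; the hypothesis $\alpha\mathcal R \underset{\mathit{stat},\epsilon_\alpha}{\approx} \mathcal S$ then bounds it by $\epsilon_\alpha$, and the triangle inequality closes the argument with total error $\epsilon_\alpha + \epsilon_\beta$.

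\textbf{Context-insensitivity.} Fix any distinguisher $\mathcal D$ for $(\alpha\mid\mathsf{id})(\mathcal R\mid\mathcal T) = (\alpha\mathcal R)\mid\mathcal T$ versus $\mathcal S\mid\mathcal T$. Build $\mathcal D'$ for $\alpha\mathcal R$ versus $\mathcal S$ by letting $\mathcal D'$ internally instantiate its own copy of the resource $\mathcal T$, wire it to the interface of $\mathcal D$ that the parallel component $\mathcal T$ would otherwise occupy, and behave like $\mathcal D$ on everything else. By construction the advantage of $\mathcal D'$ against $(\alpha\mathcal R,\mathcal S)$ equals the advantage of $\mathcal D$ against the parallel-composed pair, so the hypothesis bounds it by $\epsilon_\alpha$, which is exactly the claim.

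\textbf{Main obstacle.} The arithmetic is a one-line triangle inequality; the real care goes into the bookkeeping behind each re-labelling. One must verify that every time a converter or a parallel resource is absorbed into the distinguisher, the interfaces match up --- that $\alpha\mathcal R$ and $\mathcal S$ expose the same interfaces, that the absorbed piece attaches to exactly the complementary ones, and that the resulting object is still an admissible distinguisher. In the statistical setting this last point is automatic because no efficiency restriction is imposed on distinguishers; in a computational variant one would additionally need the absorbed converter or resource to be efficient. Once the system algebra --- associativity of $\circ$, distributivity of composition over $\mid$, and the stated closure of the distinguisher class --- is in place, both implications are immediate.
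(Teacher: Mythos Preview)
The paper does not actually prove this theorem: it is stated in the appendix as background material and attributed directly to~\cite{MR11:abstract-cryptography} (``Theorem~1 from~\cite{MR11:abstract-cryptography}''), with no proof given. Your proposal is the standard hybrid/triangle-inequality argument for composition in the AC framework, and it is correct; there is nothing in the paper to compare it against beyond the citation.
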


Combining the two properties presented above yields concurrent composability (the distinguishing advantage cumulates additively as well). 

The computational versions of these definitions are obtained by quantifying over quantum polynomial time parties.  
Composing a statistically-secure protocol with a computationally-secure protocol is possible provided that the simulator for the statistically-secure one runs in expected polynomial time.  
The resulting protocol is of course only computationally-secure.

\paragraph{Comments on the Security Framework.}
First, we always consider in this work a single Adversary controlling all the corrupted parties. As explained above, it is therefore possible to instantiate all purely classical Resources using any classical protocol which is secure in the Q-UC framework of~\cite{U10:universally} with the same security guarantees. It is also possible to instantiate them with any classical UC-secure protocol whose security relies on a quantum-hard problem thanks to Theorem $18$ (Quantum Lifting Theorem – Computational) from~\cite{U10:universally}.

Also, it is impossible to have fairness of output distribution in the case of a dishonest majority, the malicious parties can always choose to receive their output before the honest players. This is modelled in the resources by a filtered bit $f_j$ at each player's interface, indicating that it receives the output before others. The corrupted players can then decide to make the honest players abort before receiving their output. 
\section{Measurement-Based Quantum Computing}
\label{subsec:mbqc}
The protocols in the present paper relies on Measurement-Based Quantum Computing (or MBQC).
The MBQC model of computation emerged from the gate teleportation principle.
It was shown in \cite{RB01:one} that any quantum computation can be implemented by performing single-qubit measurements on a type of entangled states called graph states.

Given a graph $G = (V, E)$, and input and output vertices $I, O \subseteq V$, the corresponding graph state is generated by initialising a qubit in state $\ket{+}$ for each vertex in $V$ and performing entangling operator $\CZ$ between qubits whose vertices are linked by an edge in $E$. The qubits are measured according to an order given by a function $f: O^c \rightarrow I^c$ called the flow of the computation.

We define the rotation operator around the $Z$ axis of the Bloch sphere by an angle $\theta$ as $\Z({\theta}) = \begin{pmatrix} 1 & 0 \\ 0 & e^{i\theta} \end{pmatrix}$ and $\ket{+_{\theta}} = \Z(\theta)\ket{+} = \frac{1}{\sqrt{2}}(\ket{0} + e^{i\theta}\ket{1})$. 
For approximate universality, we can restrict the set of angles to $\Theta = \qty{\frac{k\pi}{4}}_{k \in \qty{0, \ldots, 7}}$ \cite{BFK09:universal}. The measurement associated to an angle $\phi \in \Theta$ is given by the basis $\ket{\pm_\phi}$. We consider in this paper that this measurement is performed by rotating the state to be measured using the operation $\Z(-\phi)$ and then measuring in the $X$-basis.

Later measurement may depend on the outcomes of previous measurements. Let $\{\phi(v)\}_{v \in O^c}$ be a set of default measurement angles for non-output qubits.
Let $S_X(v)$ and $S_Z(v)$ be respectively the $\X$ and $\Z$ dependency-sets for qubit $v$.\footnote{These sets are also given by the flow, see~\cite{HEB03:multi,DK06:determinism} for details.}
The measurement result $s(w)$ for qubit $w \in S_X(v) \cup S_Z(v)$ induces Pauli corrections on qubit $v$ which are equivalent to measuring qubit $v$ with corrected angle $\phi'(v) = (-1)^{s_X(v)} \phi(v) + \pi s_Z(v)$, where $s_X(v) = \bigoplus_{w \in S_X(v)} s(w)$ and $s_Z(v) = \bigoplus_{w \in S_Z(v)} s(w)$.

The special case of classical inputs is handled by adding an angle $x(v)\pi$ to the measurement angle $\phi(v)$ of input qubit $v \in I$. Classical outputs correspond to the case where all qubits are measured.

The classical input-output computation is defined by a graph $G = (V, E)$, input and output vertices $I, O \subseteq V$, a set of default measurement angles $\{\phi(v)\}_{v \in V}$ and a flow function $f: O^c \rightarrow I^c$.  
To perform the computation, one generates the graph state associated to $G$, performs the measurements with angles $\phi'(v)$ using the default angles and the flow. The outcome is defined by bit-string $\{s(v)\}_{v \in O}$. 
\subsection{Universal Blind Quantum Computing}
\label{subsec:ubqc}
An MBQC computation can easily be delegated by a Client to a quantum Server by having the Server create the graph state and perform measurements instructed by the Client. Universal Blind Quantum Computation (or UBQC)~\cite{BFK09:universal} is an upgrade of delegated MBQC which guarantees that the Server does not learn anything about the computation besides the computation graph, order of measurements and position of output qubits. 

This is achieved as follows. The Client hides the computation by sending rotated states $\ket{+_{\theta(v)}}$ instead of $\ket{+}$ for each qubit $v$.  
The effect of this rotation is cancelled by a corresponding rotation of the measurement angle.  
An additional parameter $r(v)$ adds an extra $r(v)\pi$ rotation to the measurement angle in order to hide the measurement outcome.  
This can be classically accounted for by the Client by setting $s(v) = r(v) \oplus b(v)$, where $s(v)$ is used as above in MBQC for corrections while $b(v)$ is the outcome returned by the Server.  
The measurement angle sent by the Client is then $\delta(v) = \phi'(v) + \theta(v) + r(v) \pi + x(v)\pi$, where $x$ is its input bit-string, extended by $0$ on non-input vertices.

We give the full UBQC protocol for classical inputs and outputs in Protocol \ref{prot:UBQC}. 

\begin{protocol}[ht]
\caption{Classical Input-Output Universal Blind Quantum Computation}
\label{prot:UBQC}
\begin{algorithmic} [0]

\STATE \textbf{Public Information:} 
\begin{itemize}
\item $G = (V, E, I, O)$, a graph with input and output vertices $I$ and $O$ respectively;
\item $\preceq_G$, a partial order on the set $V$ of vertices;
\end{itemize}

\STATE \textbf{Client's Inputs:} A bit-string $x \in \bin^{|I|}$ and the classical description of a unitary $U$ as default measurement angles $\{\phi(v)\}_{v \in V}$ and a flow $f$ which induces an ordering compatible with $\preceq_G$.

\STATE \textbf{The Protocol:}

\begin{enumerate}
\item The Client, for each vertex $v \in V$, samples at random $\theta(v) \in_R \Theta$ and sends $\ket{+_{\theta(v)}}$ to the Server.
\item The Server receives the qubits one-by-one and applies the entangling operations $\CZ$ that correspond to the edges of the graph $G$.
\item For each qubit $v \in V$, following the partial order of the flow:
\begin{enumerate}
\item The Client samples at random $r(v) \in_R \bin$, calculates and sends an angle to the Server:
\begin{equation}
\delta(v) = \phi'(v) + \theta(v) + r(v) \pi + x(v)\pi.\label{eq:ubqc_update}
\end{equation} 
\item The Server measures in the $\{\ket{+_{\delta(v)}},\ket{-_{\delta(v)}}\}$ basis and returns to the Client outcome $b(v)$. The Client sets $s(v) = b(v) \oplus r(v)$.
\end{enumerate}
\item The Client sets the bit-string $\{s(v)\}_{v \in O}$ as its output.
\end{enumerate}

\end{algorithmic}
\end{protocol} 
\section{Post-Mortem of Previous Protocol}
\label{sec:post-mortem}
A subset of the authors of the current paper proposed an earlier protocol for QSMPC \cite{KKMO21:delegating}. We show here the limits of the design and discuss possible paths towards fixing it.

\paragraph{State-Selective Flipping Attack.}

The principle of the previous protocol was to separate the computation
in two parts. The first section, which is blind only and not
verifiable, is responsible for preparing the verifiable graph state
from~\cite{KW17:optimised}, i.e.~a single graph state which includes
traps. This requires to prepare both rotated qubits and dummies. The
Collaborative RSP prepares only rotated states which must then either
be transformed into dummies $\{\ket{0}, \ket{1}\}$ or left undisturbed
(for computation and trap qubits). This is done with a blind
computation on all these qubits, the additional qubits required for
this computation being also generated using the Collaborative
RSP. This is essentially a way to extend the Collaborative RSP to a
bigger set of states.

The blindness of this gadget is proven in the Abstract Cryptography framework, so it would seem that it can be composed with the single-Client SDQC protocol to yield QSMPC in the same way as in the current work. However, this is not the case since we do not verify that the Server acts honestly so the final state is correct up to a global deviation. In general this deviation depends on the state that is being prepared, in particular the deviation can depend on whether the final state is a dummy or computation/trap qubit. Conscious of this, \cite{KKMO21:delegating} exhibit a number of sufficient conditions on the computation so that this global deviation is independent of the secret state the the Clients prepare collaboratively.

These conditions are as follows:
\begin{enumerate}
\item The inputs in the graph of the Clients' desired computation have degree $1$.
\item All measurement angles are from the set $\{k\pi/2\}_{0 \leq k \leq 3}$, i.e.~the computation is Clifford.
\item The graph, flow of this computation and the angles of all vertices beyond the first layer of the gadget are independent of the final desired state of the qubits.
\end{enumerate}

This final condition restrains a lot the possible types of computations that can be performed in this step but \cite{KKMO21:delegating} proposes a scheme which seems to satisfy them. We recall it here for completeness. The MBQC pattern is given by the following graph and angles.

\begin{center}
  \begin{picture}(0,0)\includegraphics{./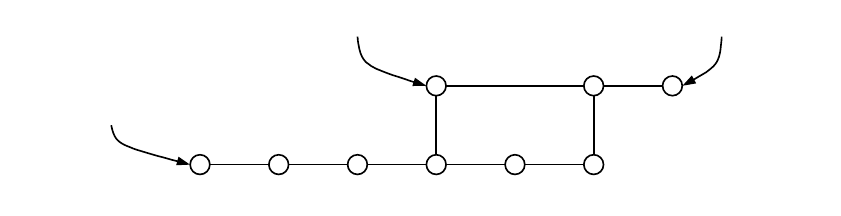}\end{picture}\setlength{\unitlength}{4144sp}\begingroup\makeatletter\ifx\SetFigFont\undefined \gdef\SetFigFont#1#2#3#4#5{\reset@font\fontsize{#1}{#2pt}\fontfamily{#3}\fontseries{#4}\fontshape{#5}\selectfont}\fi\endgroup \begin{picture}(3841,972)(-914,608)
\put(271,659){\makebox(0,0)[lb]{\smash{{\SetFigFont{8}{9.6}{\rmdefault}{\mddefault}{\updefault}{\color[rgb]{0,0,0}$\pi/2$}}}}}
\put(631,659){\makebox(0,0)[lb]{\smash{{\SetFigFont{8}{9.6}{\rmdefault}{\mddefault}{\updefault}{\color[rgb]{0,0,0}$\pi/2$}}}}}
\put(541,1469){\makebox(0,0)[lb]{\smash{{\SetFigFont{8}{9.6}{\rmdefault}{\mddefault}{\updefault}{\color[rgb]{0,0,0}Arbitrary input $\rho$}}}}}
\put(2251,1469){\makebox(0,0)[lb]{\smash{{\SetFigFont{8}{9.6}{\rmdefault}{\mddefault}{\updefault}{\color[rgb]{0,0,0}Output qubit}}}}}
\put(-44,659){\makebox(0,0)[lb]{\smash{{\SetFigFont{8}{9.6}{\rmdefault}{\mddefault}{\updefault}{\color[rgb]{0,0,0}$0$}}}}}
\put(1036,659){\makebox(0,0)[lb]{\smash{{\SetFigFont{8}{9.6}{\rmdefault}{\mddefault}{\updefault}{\color[rgb]{0,0,0}$0$}}}}}
\put(1396,659){\makebox(0,0)[lb]{\smash{{\SetFigFont{8}{9.6}{\rmdefault}{\mddefault}{\updefault}{\color[rgb]{0,0,0}$0$}}}}}
\put(1756,659){\makebox(0,0)[lb]{\smash{{\SetFigFont{8}{9.6}{\rmdefault}{\mddefault}{\updefault}{\color[rgb]{0,0,0}$0$}}}}}
\put(1756,1289){\makebox(0,0)[lb]{\smash{{\SetFigFont{8}{9.6}{\rmdefault}{\mddefault}{\updefault}{\color[rgb]{0,0,0}$0$}}}}}
\put(1036,1289){\makebox(0,0)[lb]{\smash{{\SetFigFont{8}{9.6}{\rmdefault}{\mddefault}{\updefault}{\color[rgb]{0,0,0}$0$}}}}}
\put(-899,1064){\makebox(0,0)[lb]{\smash{{\SetFigFont{8}{9.6}{\rmdefault}{\mddefault}{\updefault}{\color[rgb]{0,0,0}$\ket{\pm i}$ or $\ket \pm$ input}}}}}
\end{picture} \end{center}

The qubit which must be transformed is denoted $\rho$ (upper left qubit) and the lower left qubit's state is chosen depending on whether the upper qubit should be turned into a dummy or not. We refer to~\cite{KKMO21:delegating} for details. In order to be correct, it requires an additional correction step after this computation. The correction depends on the state of the second input qubit and the measurement outcome of the last qubit in the lower line.

\begin{center}
\begin{tabular}{|c|c|c|c|}
  2nd qubit input & Outcome & Correction & Effect \\
  \hline
  $\ket{+_i}$ & 0 & $\Y$ & $\I$  \\ 
  $\ket{+_i}$ & 1 & $\Y$ & $\I$  \\
  $\ket{-_i}$ & 0 & $\I$ & $\I$  \\ 
  $\ket{-_i}$ & 1 & $\I$ & $\I$  \\
  $\ket + $   & 0 & $\X$ & $\Ha$ \\ 
  $\ket + $   & 1 & $\Z$ & $\Ha$ \\
  $\ket - $   & 0 & $\Z$ & $\Ha$ \\ 
  $\ket - $   & 1 & $\X$ & $\Ha$ \\
  \hline 
\end{tabular}
\end{center}

Unfortunately, this correction depends on the final state. More precisely, by flipping the value reported as measurement outcome, the Server can apply an $\Y$ operation on dummy qubits and leave the rest unaffected. This flips selectively the state of dummies only, even if the server does not know that the qubit being prepared is in fact a dummy. We show in the next section how this breaks the verifiability of the protocol. The main take-away is that any correction applied after the computation does not also depend on the final state.

A potential patch was constructed using a more compact setup which seemingly satisfies all conditions. The graph that is used consists of a three vertex line for each final qubit. The first qubit in the line is measured either with an angle $\pi/2$ for dummy vertices or $0$ for other positions. The second vertex will always be measured with an angle of $\pi/2$. This is presented below in Figure \ref{fig:h-i-gadget}.

\begin{figure}[ht]\centering
     \subfloat[Measurement pattern for rotated qubits.]{{\includegraphics[width=0.4\columnwidth]{./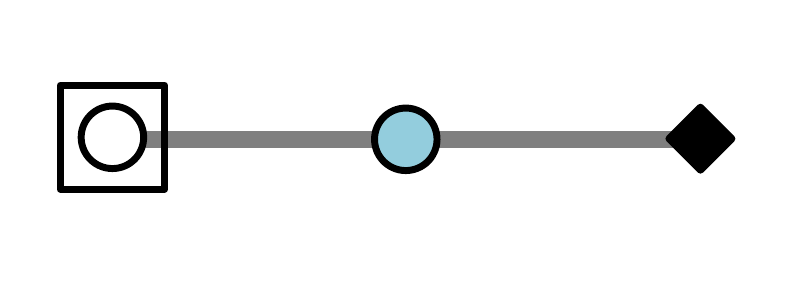} }}\qquad
     \subfloat[Measurement pattern for dummy qubits.]{{\includegraphics[width=0.4\columnwidth]{./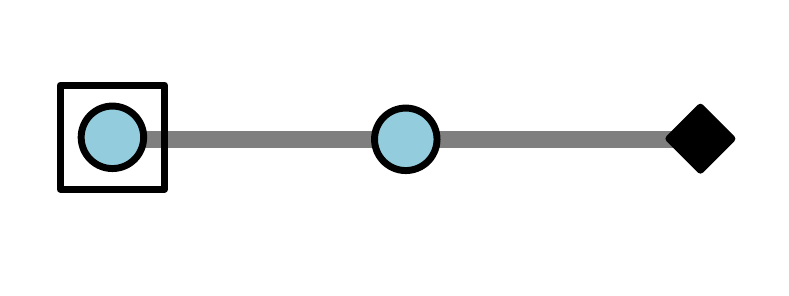} }}\caption{DBQC measurement pattern applied to each qubit in the verifiable graph. The vertices surrounded with squares are inputs, round vertices are measured, diamond vertices are outputs. Blue vertices correspond to a measurement angle of $\pi/2$ while white vertices are measured with angle $0$.}
     \label{fig:h-i-gadget}\end{figure}

In the first case, the operation that is applied is $\Ha\Z(\pi/2)\Ha\Z(\pi/2) = \X(\pi/2)\Z(\pi/2)$, which has the effect of transforming the state $\ket{+}$ first into $\ket{+_{\pi/2}}$ with the $Z$-rotation and then into $\ket{0}$ via the $X$-rotation. Note that this does not correspond to a Hadamard gate since it transforms $\ket{-}$ into $-i\ket{1}$, but it is sufficient for our purposes. In the second case, the operation correspond to $\X(\pi/2)$, which has no effect when applied to a $\ket{+}$ state up to a global phase. Since all qubits are rotated $\ket{+}$ states, the rotation of the last qubit in the three vertex line graph re-encodes the state if the final state is not a dummy. This yields the full set of states required by the SDQC protocol.

Note that once again, the conditions appear to be satisfied. Also, there are no post-processing steps beyond the standard MBQC flow corrections. However, here the input states do not span the full $8$ rotated states, but are always considered as $\ket{+}$ states and they are re-encrypted via the rotation of the last qubit. By applying $\Z$ on the input before the computation and the output after the computation, it is also possible to selectively flip dummies only: the two $\Z$s will cancel out for rotated qubits, but the second $\Z$ will have no effect on dummies while the first $\Z$ will flip the dummy.

\paragraph{Attack from Selective Dummy Flipping.}

We describe here an attack on the VBQC scheme of~\cite{KW17:optimised}, assuming that the Adversary can flip the value of dummy qubits (without affecting the computation and traps). We assume for this section knowledge about the Dotted-Triple Graph construction of~\cite{KW17:optimised}. Consider a line graph of two qubits and its transformation in a Dotted-Triple Graph. This graph contains two primary locations with three qubits and one added location with nine qubits. 

Through the application of $\CZ$ gates to construct the graph, flipping the value of a dummy is equivalent to applying $\Z$ on all adjacent qubits. For a given qubit in the graph, the global effect is $\Id$ if an even number of adjacent dummies are flipped, and $\Z$ if an odd number of adjacent dummies are flipped. As we wish to disrupt the computation but not affect traps, the key to our attack is to use the difference in the number of dummies in the neighbourhood of traps and computation qubits. Traps are only linked to dummies while a computation qubit will always have at least one other computation qubit among its neighbours. As shown in Figure \ref{fig:vbqc-att} below, we selectively flip added vertices so that each primary vertex is linked to exactly two attacked added vertices.

\begin{figure}[ht]
    \centering
	\includegraphics[width=0.8\columnwidth]{./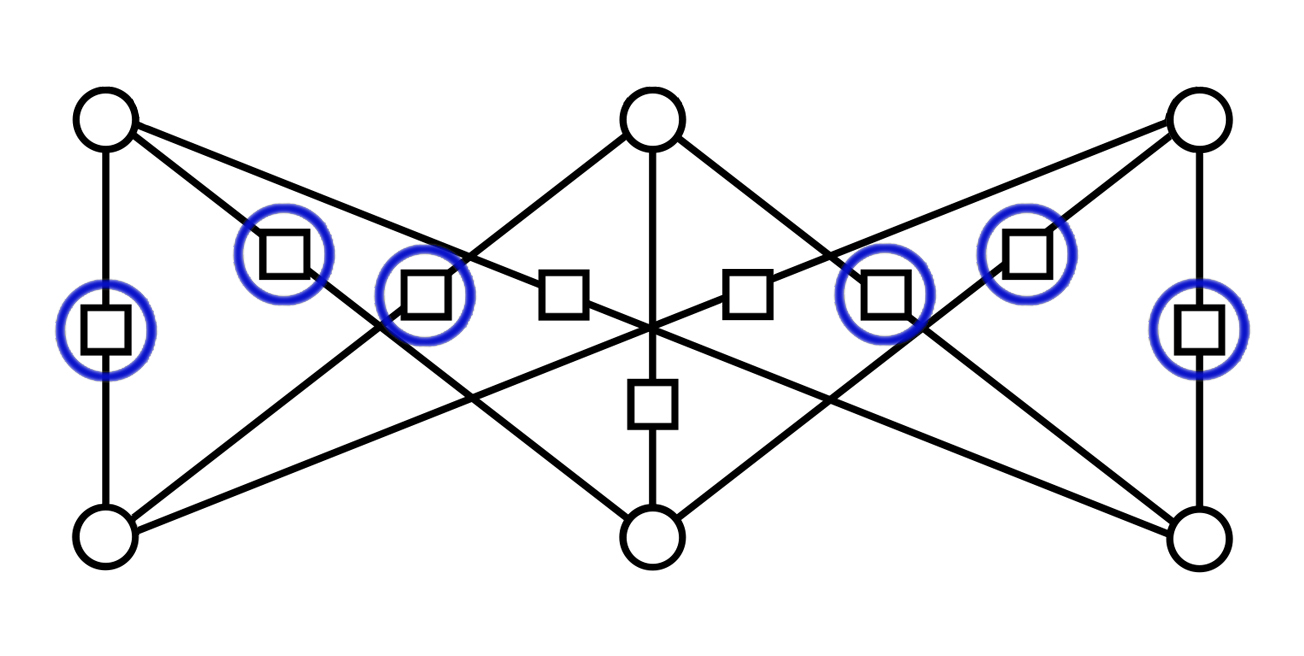}
    \caption{Example of attack layout where each top and bottom primary qubit is attached to exactly two attacked added qubits. Qubits that have been chosen for the attack are circled in blue.}
    \label{fig:vbqc-att}
\end{figure}

In that case, since the primary trap qubits are only linked to dummies, the attack does not trigger either trap (if one of the middle qubits that is attacked is a trap, the effect of the attack on this trap is $\Id$ as explained above). However, the attack may either affect two dummies linked to the primary computation qubits, in which case there is no attack since the effects cancel out, or one added dummy and the added computation qubit. Then, the effect on the added computation qubit is $\Id$ but the attacked dummies will apply a $\Z$ operation on primary computation qubits on both sides of the link. If we assume fixed (but unknown) attack positions, whether this attack succeeds in modifying the computation depends only on the colouring that is used, while never triggering any trap. The probability of success is equal to $2/3$: the attack succeeds if the computational added qubit is chosen for the attack, there are $6$ possible choices of attack configuration and each added qubit is left untouched by $2$ out of the $6$ attack configurations. We give in Figure \ref{fig:att-col} two possible colourings, ones in which the attack has no effect one the computation while the other corrupts it.

\begin{figure}[ht]\centering
    \subfloat[$\Z$ attack on both primary computational qubits due to odd number of attacked added dummies.]{{\includegraphics[width=0.45\columnwidth]{./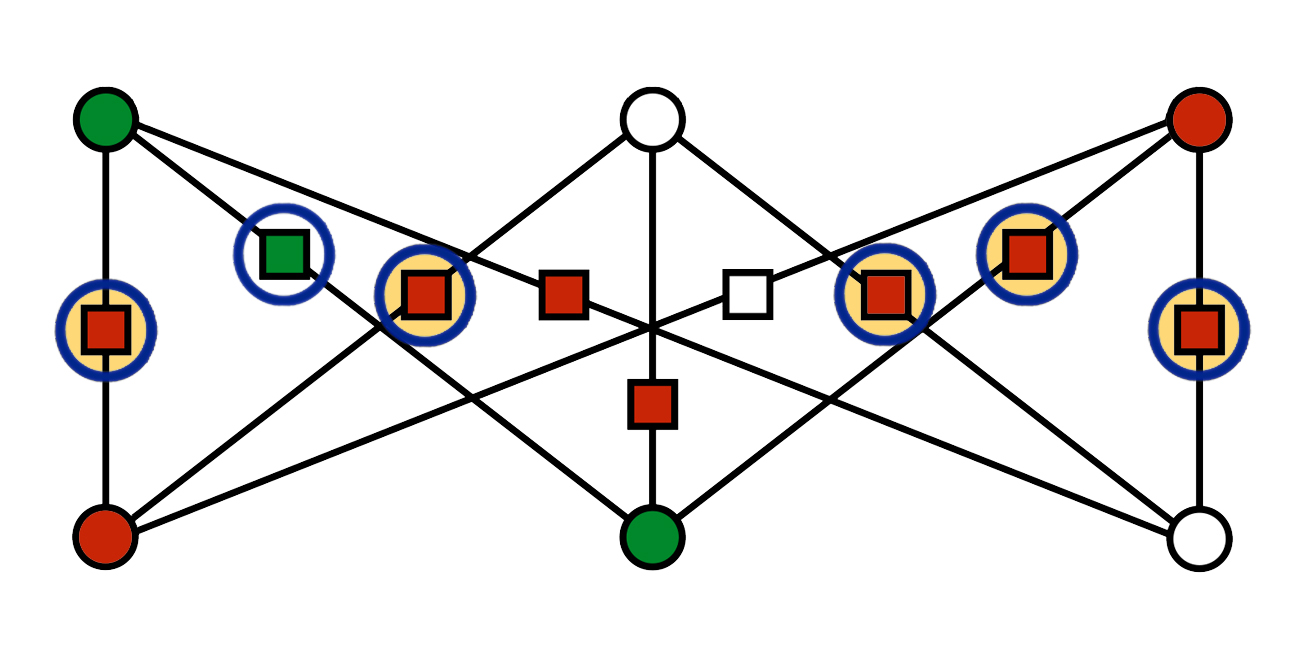} }}\qquad
    \subfloat[No attack on either primary computational qubit due to even number of attacked added dummies.]{{\includegraphics[width=0.45\columnwidth]{./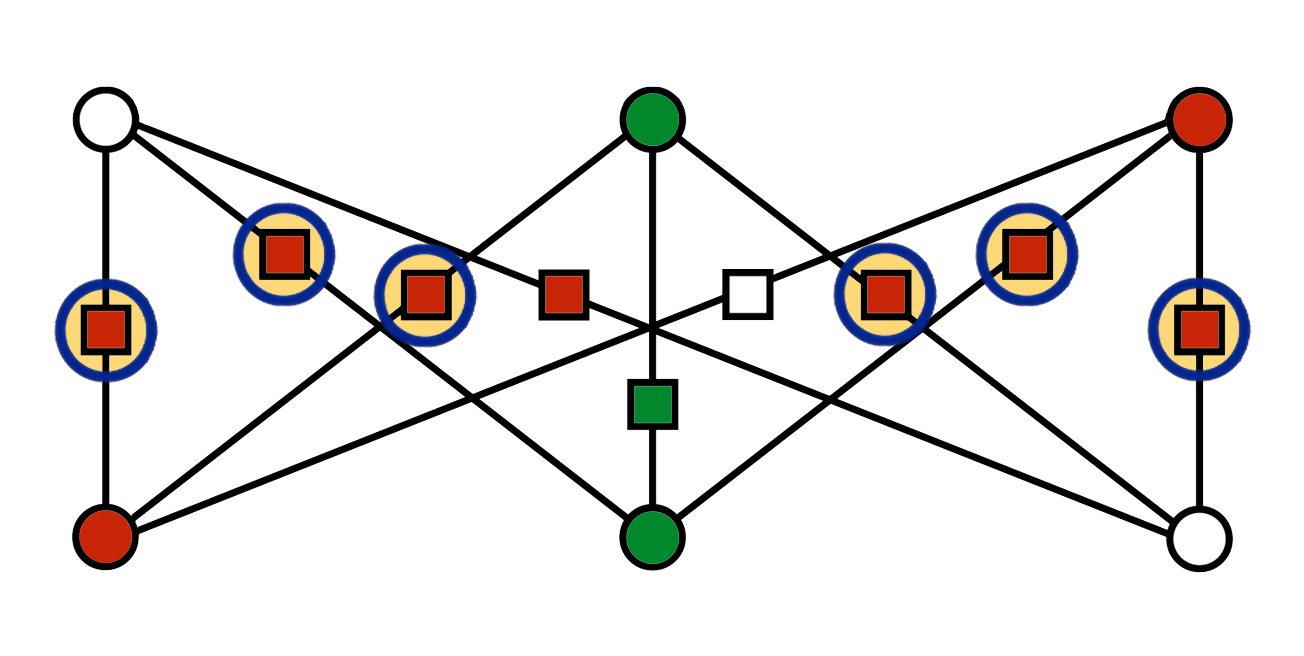} }}\caption{Two colourings of the previous graph (computational qubits are green, traps are white and dummies are red) for the same attacked qubits but a different effect on primary computational qubits. Attacked qubits are circled in blue, which translates to an $\X$ effect on dummies (yellow-filled circle) and no effect on added computational qubits (empty circle). The primary trap qubits are never affected by the attack since they are always attached to an even number of attacked added dummies.}\label{fig:att-col}\end{figure}

\paragraph{Extension and Take-away.}

Essentially, allowing an attack to depend on the nature of the qubits introduces new attacks compared to those that are possible in the plain VBQC Protocol. We have shown above that even a simple attack of this type is sufficient to break verifiability.

This attack is not specific to the construction of~\cite{KW17:optimised} and can also be applied to the robust SDQC Protocol of~\cite{LMKO21:verifying}. There, a trap is also always linked to dummies but computation qubits are never linked to dummies (the test and computation graphs are separated as in the current work). Applying the selective flip and apply a $\Z$ on all neighbours will corrupt the computation but leave traps unaffected.

There are two main ingredients to these attacks: (i) the possibility for the server to selectively attack qubits depending on whether they are dummies or rotated qubits, and (ii) the difference in the neighbourhoods of computation and trap qubits in terms of dummies. Regarding the first point, the sufficient conditions above are very restrictive as to what types of computations can be performed to generate a wider range of states. Together with the result from Section \ref{sec:impossibility}, this is a strong indication that starting from a Collaborative RSP for a restricted set of states and expanding it is hard to construct securely. As for the second point, if it is possible to construct an SDQC protocol in which the effect of flipping any number of dummies is the same on the tests and computations, then this attack would have no effect beyond what the SDQC protocol already protects. The current paper provides a solution to this problem by removing dummies altogether in the case of classical inputs and outputs. Other directions can be explored as well in order to construct a protocol which resits these attacks and handles quantum inputs and outputs.

We note that at no point do we break the theorem from \cite{KKMO21:delegating} proving the sufficient conditions for constructing an MBQC gadget for blindly generating an SDQC resource state up to state-independent deviations. However, it shows that the following conditions were implicitly assumed in the proof: (i) the starting states should span the full range of rotated $\ket{+_\theta}$ states, and (ii) any post-processing should be independent of the final state.

\end{document}